\DeclareMathAlphabet\mathbfcal{OMS}{cmsy}{b}{n}
\newcommand{\red}[1]{\textcolor{black}{#1}}
\newcommand{\blue}[1]{\textcolor{black}{#1}}
\newcommand{\mage}[1]{\textcolor{black}{#1}}
\title{Geometric Analysis of Differential-Algebraic Equations via Linear Control Theory\thanks{Submitted to the editors SIMAX.
}}
\author{Yahao Chen\thanks{Bernoulli Institute for Mathematics, Computer Science, and
		Artificial Intelligence, University of Groningen, The Netherlands
		(\email{yahao.chen@rug.nl}).}
	\and Witold Respondek\thanks{Normandie Université, INSA-Rouen, LMI, 76801 Saint-Etienne-du-Rouvray, France
		(\email{witold.respondek@insa-rouen.fr}).}}
\begin{document}
\nolinenumbers
\maketitle

\begin{abstract}
We consider  linear differential-algebraic equations DAEs of the form $E\dot x=Hx$ and the Kronecker canonical form \textbf{KCF} \red{\cite{kronecker1890algebraische}} of the corresponding matrix pencils $sE-H$. We also consider linear control systems and \red{their} Morse canonical form \textbf{MCF} \cite{morse1973structural},\cite{molinari1978structural}.  For a linear DAE, a procedure named explicitation is proposed, \red{which attaches to any linear DAE a linear control system }defined up to \red{a coordinates change,} a feedback transformation and an output injection. \red{Then we compare subspaces associated to a DAE in a geometric way with those associated (also in a geometric way) to a control system, namely, we compare the Wong sequences of DAEs and invariant subspaces of control systems}. We prove that the \textbf{KCF} of linear DAEs and the \textbf{MCF} of control systems have a \red{perfect} correspondence and \red{that} their invariants are related. In this way, we connect the geometric analysis of linear DAEs with the classical \red{geometric} linear control theory. \red{Finally}, we propose a concept named internal equivalence for DAEs  and discuss its relation with internal regularity, i.e., the existence and uniqueness of solutions.
\end{abstract}

\begin{keywords}
differential-algebraic equations, \red{implicit systems}, control systems, \red{singular systems}, Kronecker canonical form, \red{Morse canonical form}, \red{invariant} subspaces
\end{keywords}

\begin{AMS}
 15A21, 34H05, 93C05, 93C15
\end{AMS}

\section{Introduction}\label{Chap1sec1}
Consider a linear {differential-algebraic equation} DAE of the form
\begin{align}\label{DAE1}
\Delta: E\dot{x}=Hx,
\end{align}
where \red{$x\in \mathscr X\cong\mathbb{R}^n$} is called the ``generalized'' state, $E\in \mathbb{R}^{l\times n}$ and $H\in \mathbb{R}^{l\times n}$. \red{Throughout}, a linear DAE of form (\ref{DAE1}) \red{will be} denoted by $\Delta_{l,n}=(E,H)$ \red{or, shortly,} $\Delta$ and the corresponding matrix pencil of $\Delta$  by $sE-H$, which is a polynomial matrix of degree one. \red{A DAE $\Delta$ or a matrix pencil $sE-H$ is called regular if $l=n$ and $|sE-H|\not\equiv 0$.}

Terminologies as ``singular'', ``implicit'', ``generalized'' are frequently used to describe a DAE due to its difference from an ordinary differential equation ODE. Since the structure of DAE $\Delta$ is totally determined by the corresponding matrix pencil $sE-H$, it is useful to find a simplified form (a normal form or canonical form) for $sE-H$. Under predefined equivalence (see ex-equivalence of Definition \ref{ex-equivalence}), canonical forms as the Weierstrass  form \textbf{WF} \red{\cite{Weierstrass1868}} for regular matrix pencils and the Kronecker canonical form \cite{kronecker1890algebraische} (for details see \textbf{KCF} in Appendix \red{and \cite{gantmacher1959matrix}}) for more general matrix pencils have been proposed.
Note that in the paper, we will not distinguish the difference between the \textbf{KCF} of a matrix pencil $sE-H$ and the \textbf{KCF} of a DAE $\Delta$, since although \textbf{KCF} is introduced \red{for} matrix pencils, it is immediate to put the \textbf{KCF} of $sE-H$ into \red{the corresponding form for the DAE $\Delta$}.   

Geometric analysis of linear and nonlinear DAEs can be \red{found} in  \cite{lebret1994proportional,lewis1986survey,lewis1992tutorial,malabre1987more,malabre1989generalized,rabier1994geometric,reich1990geometrical,reich1991existence}. We highlight an important concept named the Wong sequences ($\mathscr V_i$ \red{and} $\mathscr W_i$ of Definition \ref{VWrealized}) for linear DAEs, which were first introduced in \cite{wong1974eigenvalue}. Connections between the Wong sequences with the \textbf{WCF} and the \textbf{KCF} have been recently \red{established in, respectively, 
	\cite{BERGER20124052} and \cite{Berger2012,berger2013addition}}.
In particular, invariant properties for the limits of the Wong sequences ($\mathscr V^*$ and $\mathscr W^*$ in Definition \ref{DAE invariances}) were used to obtain a triangular quasi-Kronecker form in \cite{Berger2012,berger2013addition}. Moreover, the authors of \cite{Berger2012,berger2013addition} show that some of the Kronecker indices can be calculated via the Wong sequences and the remaining ones can be derived from a modified version of the Wong sequences.   

On the other hand, consider a linear time-invariant control system of the following form
\begin{equation}\label{Ltis1}
\Lambda:\left\lbrace  \begin{array}{l}
\dot z = Az + Bu\\
y = Cz+Du,
\end{array} \right. 
\end{equation}
where $z\in \mathscr{Z} =\mathbb{R}^q$ is the system state, $u\in {\mathscr U}=\mathbb{R}^m$ represents the input and $y\in{\mathscr{Y}}= \mathbb{R}^p$ is the output. System matrices $A,B,C,D$ above are \red{constant and} of appropriate sizes.
We also consider \red{the prolongation of $\Lambda$} of the following form
\begin{align}\label{pro sys}
\mathbf{\Lambda}:\left\lbrace  \begin{array}{l}
\dot z = Az + Bu\\
\dot u = v\\
y = Cz + Du
\end{array} \right. \Leftrightarrow    \left\lbrace  \begin{array}{l}
\dot {\mathbf z} = \mathbf{A}\mathbf{z} + \mathbf{B}v\\
y = \mathbf{C}\mathbf{z},
\end{array} \right.
\end{align}
where  
$$
\mathbf{z}= \left[  \begin{matrix}
z\\
u
\end{matrix} \right] ,\ \ \mathbf{A} = \left[ \begin{matrix}
A&B\\
0&0
\end{matrix} \right],\ \ {\mathbf{B}} = \left[ \begin{matrix}
0\\
{I_m}
\end{matrix} \right], \ \ {\mathbf{C}} = \left[ \begin{matrix}
C&D
\end{matrix} \right].
$$
Denote a control system of form (\ref{Ltis1}) by ${\Lambda}_{q,m,p}=(A,B,C,D)$ or, simply, $\Lambda$ and denote the prolonged system (\ref{pro sys}) by $\mathbf{\Lambda}_{n,m,p}=(\mathbf{A},\mathbf{B},\mathbf{C})$, or shortly $\mathbf{\Lambda}$, where $n=q+m$. \red{Notice that there is a one-to-one correspondence between $\mathcal C^{\infty}$-solutions of (\ref{Ltis1}) and (\ref{pro sys}) (or a \red{one-to-one} correspondence between $\mathcal C^1$-solutions $(z(t),u(t))$ of (\ref{Ltis1}) and $\mathcal C^1$-solutions  $\mathbf{z}(t)$, given by  $\mathcal C^0$-controls \red{$v(t)$}, of (\ref{pro sys})).}

Two kinds of invariant subspaces have been studied for analyzing the structure of linear control systems, \red{see e.g. \cite{wonham1974linear,basile1992controlled}}. More specifically, the largest $(\mathbf{A},\mathbf{B})$-invariant subspace contained in $\ker \mathbf{C}$ (\red{denoted} $\mathbfcal{V}^*$ in Definition \ref{ControlsysinvariancewithoutD}), which is related \red{with} disturbance decoupling problems, and the smallest $(\mathbf{C},\mathbf{A})$-conditioned invariant subspace containing ${\rm Im\,} \mathbf{B}$ (\red{denoted} $\mathbfcal{W}^*$ in Definition \ref{ControlsysinvariancewithoutD}) which is related to controllability subspaces. With the help of these invariant subspaces, any control system can be brought \red{(see \cite{morse1973structural},\cite{molinari1978structural})} into its Morse canonical form (for details, see \textbf{MCF} in Appendix) under \red{the action of a} group of transformations consisting of coordinates changes, feedback transformations, and output injections. The \textbf{MCF} consists of four decoupled subsystems $MCF^1$, $MCF^2$, $MCF^3$, $MCF^4$, to which there correspond four sets of structure invariants (the Morse indices $\varepsilon'_i$, $\rho'_i$, $\sigma'_i$, $\eta'_i$ in the \textbf{MCF}) and these structure invariants are computable \red{with the help of} $\mathbfcal{V}^*$ and $\mathbfcal{W}^*$. Note that in \cite{morse1973structural}, only the triple $(A,B,C)$ \red{is} considered while in \cite{molinari1978structural}, the general case of 4-tuple $(A,B,C,D)$, with \red{a} nonzero matrix $D$, \red{is} studied. 

The first aim of the paper is to find a way to \red{relate} linear DAEs with linear control systems and find their \red{geometric} connections. In fact, we will show in the next section that \red{to} any linear DAE, we can attach a class of linear control systems defined up to a coordinates change, a feedback transformation and an output injection.  We call this attachment the explicitation of a DAE.
The second purpose of the paper is to distinguish two kinds of equivalences in linear DAEs theory, namely, internal equivalence and external equivalence.  \red{We will give the formal definition of external equivalence in Definition \ref{ex-equivalence}. Note that our notion of \red{extermal equivalence} of DAEs is different from the one introduced in \cite{willems1983input,kuijper1991minimality}, where ``systems are defined to be externally equivalent if their behaviors are the same''.} Actually, the external equivalence (also named strict equivalence \red{in} \cite{gantmacher1959matrix}) is widely considered in \red{the} linear DAEs literature. For example, the \textbf{KCF} of a DAE is actually a canonical form under  external equivalence, which is simply defined by all linear nonsingular transformations in the whole ``generalized'' state space of the DAE. However, since  solutions of a DAE exist only on a constrained (invariant) subspace, sometimes we only need to perform the analysis on \red{that} constrained subspace. This point of view motivates to introduce the notion of internal equivalence \red{and} to find normal forms not on the whole space but only on \red{that} constrained subspace.

The paper is organized as follows. In Section \ref{Chap1sec2}, we introduce the notations, define the external equivalence of two DAEs, and also the Morse equivalence of two control systems. In Section \ref{Chap1sec3}, we \red{explain how} to associate \red{to any} DAE \red{a class of control systems}. In Section \ref{Chap1sec4}, we describe \red{geometric}  relations of  DAEs and the attached control systems. In Section \ref{Chap1sec5}, we show that there exists a perfect \red{correspondence} between the \textbf{KCF} and the \textbf{MCF}, and \red{that} their invariants have direct relations. In Section \ref{Chap1sec6}, we introduce the \red{notion of} internal equivalence for DAEs \red{and then discuss} the internal regularity.  Section \ref{Chap1sec7} contains the proofs of \red{our} results and Section \ref{sec:conclusions} contains the conclusions of this paper. \red{Finally, in the Appendix we recall two basic canonical forms: the Kronecker canonical form \textbf{KCF} for DAEs and the Morse canonical form \textbf{MCF} for control systems.}   
\section{Preliminaries}\label{Chap1sec2}
We use the following notations in the present paper.
\begin{longtable}[l]{lll}
	$\mathbb{N}$ & & the set of natural numbers with zero and $\mathbb{N}^+=\mathbb{N}\backslash \{0 \}$\\
	$\mathbb{C}$ & & the set of complex numbers\\
	${\mathbb{R}^{n \times m}}$ & &the set of real valued matrices with $n$ rows and $m$ columns \\
	$\mathbb R[s]$ && the polynomial ring over $\mathbb R$ with indeterminate $s$ \\
	$Gl\left( {n,\mathbb{R}} \right)$ & & the group of nonsigular matrices of $\mathbb{R}^{n \times n}$\\
	${\rm rank\,} A$ & &the rank of \red{a linear map} $A$\\
	${\rm rank\,}_{\mathbb R[s]} (sE-H)$ && the rank of a polynomial \red{matrix} $sE-H$ over $\mathbb R[s]$	\\
	$\ker A$ & &the kernal of \red{a linear map} $A$\\	
	$\dim\, \mathscr A$ && the dimension of \red{a linear} space $\mathscr A$\\
	${\mathop{\rm Im\,}\nolimits} A$ && the image of \red{a linear map} $A$\\
	$\mathscr A/{\mathscr B}$ & & the  quotient of a vector space $\mathscr A$ by a subspace $\mathscr B\subseteq \mathscr A$ \\
	$I_n$ & & the identity matrix of size $n\times n$ for $n\in \mathbb{N}^+$\\
	$0_{n\times m}$ & & the zero matrix of size $n\times m$ for $n,m\in \mathbb{N}^+$\\
	${A^T}$ & &the transpose of a matrix $A$\\		
	${A^{-1}}$ && the inverse of a matrix $A$\\		
	${A\mathscr B}$ &&$\{Ax\,|\,x\in \mathscr B\} $, the image of $\mathscr B$ under  \red{a linear map} $A$\\
	${A^{-1}\mathscr B}$ &&$\{x\,|\,Ax\in \mathscr B\} $, the preimage of $\mathscr B$ under  \red{a linear map} $A$\\	
	${A^{-T}}\mathscr B$ & & $(A^T)^{-1}\mathscr B$\\
	$\mathscr A^{\bot}$	&& $\{ x\,|\,\forall a\in \mathscr A: x^Ta\!=\!0\}$
	
\end{longtable}
\red{Consider} a DAE $\Delta_{l,n}=(E,H)$, given by (\ref{DAE1}), \red{denoted shortly} by $\Delta$, and the corresponding matrix pencil $sE-H$. A \emph{solution}, or \emph{trajectory}, $x(t)$ of $\Delta$ is any \red{$\mathcal C^1$-}differentiable map $x:\mathbb{R}\rightarrow\mathscr X$  satisfying $E\dot x(t)=Hx(t)$. \blue{A trajectory starting from a point $x(0)=x^0$ is denoted by $x(t,x^0)$.}
\begin{defn}[external equivalence]\label{ex-equivalence}
	Two DAEs $\Delta_{l,n}=(E,H)$ and  $\tilde \Delta_{l,n}=(\tilde E,\tilde H)$ are called externally equivalent, shortly ex-equivalent, if there exist $Q\in Gl(l,\mathbb{R})$ and $P\in Gl(n,\mathbb{R})$ such that 
	\begin{align*}
	\begin{array}{l}
	\tilde{E}=QEP^{-1} \ \ {\rm and} \ \ \tilde{H}=QHP^{-1}.
	\end{array}
	\end{align*}
	We denote ex-equivalence of two DAEs as $ \Delta\mathop  \sim \limits^{ex} \tilde \Delta$, and ex-equivalence of the two corresponding matrix pencils as $sE-H \mathop  \sim \limits^{ex} s\tilde E-\tilde H$.
\end{defn}
If the ``generalized'' \blue{states} of $\Delta$ and $\tilde \Delta$ are $x$ and $\tilde x$, respectively, then $\tilde x=Px$ is, clearly, just a coordinate transformation. The following remark points out the relation of the ex-equivalence and \red{solutions of DAEs.}
\begin{rem}\label{ex-quitra}
	Ex-equivalence preserves trajectories, more precisely, if $ \Delta\mathop  \sim \limits^{ex} \tilde \Delta$ via $(Q,P)$, then any trajectory $x(t)$ of $\Delta$ satisfying $x(0)=x^0$, is mapped via $P$ into a trajectory $\tilde x(t)$ of $\tilde \Delta$ passing through $\tilde x^0=Px^0$. Moreover, if $x(t)$ is  a trajectory of $\Delta$, then $E\dot x(t)-Hx(t)=0$ and obviously $Q(E\dot x(t)-Hx(t))=0$ implying that $x(t)$ is \red{also} a trajectory of $QE\dot x=QHx$. The converse, however, is not true\red{:} even if  two DAEs have the same trajectories, they are not necessarily ex-equivalent, since the trajectories of DAEs are contained in a subspace $\mathscr M^*\subseteq \mathbb R^n$ (see Definition \ref{Max invariant subspace} of Section \ref{Chap1sec6}).
\end{rem}
\begin{defn} [Morse equivalence and Morse transformation]\label{Morse-eq}
	Two linear control systems $\Lambda_{q,m,p}=(A,B,C,D)$ and $\tilde\Lambda_{q,m,p}=(\tilde A,\tilde B,\tilde C,\tilde D)$ are called Morse equivalent, shortly M-equivalent, denoted by $\Lambda \mathop  \sim \limits^{M} \tilde \Lambda $, if there exist $T_s\in Gl(q,\mathbb{R})$, $T_i\in Gl(m,\mathbb{R})$, $T_o\in Gl(p,\mathbb{R})$, $F\in \mathbb{R}^{m\times q} $, $K\in \mathbb{R}^{q\times p}$ such that 
	\begin{align}\label{morse-trans}
	\left[ {\begin{matrix}
		{\tilde A}&{\tilde B}\\
		{\tilde C}&{\tilde D}
		\end{matrix}} \right] = \left[ \begin{matrix}
	{{T_s}}&{{T_s}K}\\
	0&{{T_o}}
	\end{matrix} \right]\left[ \begin{matrix}
	A&B\\
	C&D
	\end{matrix} \right]\left[ {\begin{matrix*}
		{T_s^{ - 1}}&0\\
		{FT_s^{ - 1}}&T_i^{ - 1}
		\end{matrix*}} \right].
	\end{align}
	\red{Any 5-tuple ${M}_{tran}=(T_s, T_i, T_o, F, K)$, is called a Morse transformation.}
\end{defn}
\begin{rem}\label{Rem:M-equi}
	(i) Apparently, in the above definition of \red{a} Morse transformation, $T_s$, $T_i$, $T_o$ are coordinates transformations in the, respectively, state space \red{$\mathscr{Z}$}, input space $\mathscr U$, and output space $\mathscr{Y}$, and  $F$ defines a state feedback and $K$ defines an output injection. Moreover, if we consider two control systems without outputs, denoted by $\Lambda_{q,m}=(A,B)$ and $\tilde\Lambda_{q,m}=(\tilde A,\tilde B)$, then the Morse equivalence reduces to the feedback equivalence, i.e., \red{the corresponding} system matrices satisfy $\tilde A=T_s(A+BF){T_s^{-1}}$ and $\tilde B=T_sB{T_i^{-1}}$.
	
	(ii) The feedback transformation $A\mapsto A+BF$ preserves all trajectories (although changes their parametrization with respect to controls). On the other hand, the output injection $A\mapsto A+KC$, \red{$B\mapsto B+KD$} preserves only those trajectories $x(t)$ \red{that satisfy $y(t)=Cx(t)+Du(t)=0$}. Finally, $A\mapsto T_sAT^{-1}_s$ maps trajectories into trajectories while $B\mapsto BT^{-1}_i$ re-parametrizes controls and $C\mapsto T_oC$ and $D\mapsto T_oD$ re-parametrize  outputs.
\end{rem}
\section{Implicitation of linear control systems and explicitation of linear DAEs}\label{Chap1sec3}
It is easy to \red{see} that, if for a  linear control system $\Lambda$, given by (\ref{Ltis1}), we require the output $y=Cz+Du$ to be identically zero, then $\Lambda$ can be seen as a DAE.
We call such an output zeroing procedure the \emph{implicitation} of a control system, which \red{can} be formalized \red{as follows}.
\begin{defn}[implicitation]\label{implicitation}
	For \red{a} linear control system $\Lambda_{q,m,p}\!=\!(A,B,C,D)$ on $\mathscr{Z}=\mathbb{R}^q$ with inputs in $\mathscr U=\mathbb{R}^m$ and outputs in $\mathscr{Y}=\mathbb{R}^p$,  by setting the output $y$ of $\Lambda$ to be zero, that is
	$$
	{\rm Impl}(\Lambda):\left\{ \begin{array}{l}
	\dot z= Az + Bu\\
	0 = Cz + Du,
	\end{array} \right.
	$$
	we define \red{the following} DAE \red{$\Delta^{Impl}$} with ``generalized'' states  \red{$(z,u)\in\mathbb{R}^{q+m}$}:
	\begin{align}\label{Impl}
	\Delta^{Impl}:\left[ \begin{matrix}
	I_q&0\\
	0&0
	\end{matrix}\right]\left[ \begin{matrix}
	\dot z\\
	\dot u
	\end{matrix}\right]= \left[ \begin{matrix}
	A&B\\
	C&D
	\end{matrix}\right]\left[ \begin{matrix}
	z\\
	u
	\end{matrix}\right].
	\end{align}
	\red{We call the procedure of output zeroing above the implicitation procedure, and the DAE given by (\ref{Impl}) will be called the implicitation of $\Lambda$ and denoted by $\Delta^{Impl}_{q+p,q+m}={\rm Impl}(\Lambda)$ or, shortly, $\Delta^{Impl}={\rm Impl}(\Lambda)$.}
\end{defn}
The converse procedure,  of associating a control system to a given DAE, is less \red{straightforward}, since the variables are expressed implicitly in DAEs. In order to understand the different roles of the variables in a DAE, take, for example, the nilpotent pencil $N_{\sigma}(s)$ of the \textbf{KCF} of  DAEs (see Appendix),  denote the corresponding variables by $x_1,...,x_{\sigma}$ and then the DAE is
$$
	\left[ \begin{matrix}
	{0}&1&{\dots}&{0}\\
	{0}& \ddots & \ddots &\vdots\\
	{\vdots}&{\ddots}& \ddots &1\\
	{0}&{\dots}&{0}&{0}
	\end{matrix} \right]\left[ \begin{matrix}
	{\dot x_1}\\
	\vdots \\
	{\dot x_{\sigma  - 1}}\\
	{\dot x_\sigma }
	\end{matrix}\right]=\left[ \begin{matrix}
	{x_1}\\
	\vdots \\
	{x_{\sigma  - 1}}\\
	{x_\sigma }
	\end{matrix}\right].  $$
It is easy to see that the last equation $ {x_\sigma }=0$ is an algebraic constraint which can be seen as \red{the} zero output of a control system. The variable $x_1$ is different from the others because it is free to be given any value and thus it performs like an input. The  variables $x_2,...,x_{\sigma-1}$ are constrained by a differential chain forming an ODE, so they can be seen as  states of a control system. Notice that in this case, replacing $\dot x_i=x_{i-1}$ by $\dot x_i=x_{i-1}+k_ix_{\sigma}$, for $2\le i\le \sigma$ and for any $k_i\in \mathbb R$ does not change \red{the solution of} the system because $x_{\sigma}=0$, which means \red{that} if we want to associate \red{to our DAE} a control system, the association is not unique. \mage{Below we generalize the above observations and show a way to attach a class of control systems to any given DAE.}
\begin{itemize}
	\item Consider a DAE $\Delta_{l,n}=(E,H)$, \red{given by} (\ref{DAE1}). Denote  ${\rm rank\,} E=q$, \red{define $p=l-q$ and $m=n-q$}. Choose a map
	$$
	P=\left[ {\begin{matrix}
		P_1\\
		P_2
		\end{matrix}} \right]\in Gl(n,\mathbb{R}),
	$$
	where $P_1\in \mathbb{R}^{q\times n}$, $P_2\in \mathbb{R}^{m\times n}$ such that $\ker P_1=\ker E$. 
	\item Define coordinates transformation
	$$
	\left[ {\begin{matrix}
		z\\
		\red{u}
		\end{matrix}} \right]
	=
	\left[ {\begin{matrix}
		P_1x\\
		P_2x
		\end{matrix}} \right]
	=
	\left[ {\begin{matrix}
		P_1\\
		P_2
		\end{matrix}} \right]x
	=
	Px.
	$$
	Then from $\ker P_1=\ker E$, we have $EP^{ - 1}= \left[ {\begin{matrix}
		{{E_0}}&0
		\end{matrix}} \right]$, where $E_0\in \mathbb{R}^{l\times q}$. Moreover, \red{since} $P$ is invertible,  it follows that ${\rm rank\,} E_0={\rm rank\,} E=q$. Thus via $P$, $\Delta$ is ex-equivalent to
	\begin{align*}
	\left[ {\begin{matrix}
		{{E_0}}&0
		\end{matrix}} \right]\left[  {\begin{matrix}
		{\dot z}\\
		{\dot u}
		\end{matrix}} \right]  = {H_0}\left[  {\begin{matrix}
		z\\
		u
		\end{matrix}} \right],
	\end{align*}
	where \red{$H_0=HP^{-1}$}. The variables $z$ are states (dynamical variables, their derivatives $\dot z$ are
	present) and $u$ are controls (enter statically into the system).
	\item Since ${\rm rank\,}E_0=q$, there exists $Q_0\in Gl(l,\mathbb{R})$ such that $Q_0E_0=\left[ {\begin{matrix}
		{E_0^1}\\
		0
		\end{matrix}} \right]$, where $E_0^1\in Gl(q,\mathbb{R})$. Thus via $(Q_0,P)$, $\Delta$ is ex-equivalent to
	$$
	\left[ {\begin{matrix}
		{E_0^1}&0\\
		0&0
		\end{matrix}} \right]\left[ {\begin{matrix}
		\dot z\\
		\dot u
		\end{matrix}} \right]=\left[ {\begin{matrix}
		A_0&B_0\\
		C_0&D_0
		\end{matrix}} \right]\left[ {\begin{matrix}
		z\\
		u
		\end{matrix}} \right],
	$$
	where \red{$Q_0{H_0}=\left[ {\begin{matrix}
		{{A_0}}&{{B_0}}\\
		{{C_0}}&{{D_0}}
		\end{matrix}} \right] $}, $A_0\in \mathbb{R}^{q\times q}, B_0\in\mathbb{R}^{q\times m}, C_0\in\mathbb{R}^{p\times q}, D_0\in\mathbb{R}^{p\times m} $.
	\item  
	Finally, via $Q_1=\left[ {\begin{matrix}
		{({E_0^1})^{-1}}&0\\
		0&I_p
		\end{matrix}} \right]$, we bring the above DAE into
	\begin{equation}\label{Ltis2}
	\left[ {\begin{matrix}
		I_q&0\\
		0&0
		\end{matrix}} \right]\left[ {\begin{matrix}
		\dot z\\
		\dot u
		\end{matrix}} \right]=\left[ {\begin{matrix}
		A&B\\
		C&D
		\end{matrix}} \right]\left[ {\begin{matrix}
		z\\
		u
		\end{matrix}} \right],
	\end{equation}
	where $A=(E_0^1)^{-1}A_0$, $B=(E_0^1)^{-1}B_0$, $C=C_0$, $D=D_0$.
	\item Therefore, the DAE $\Delta$ is ex-equivalent (via $P$ and $Q=Q_1Q_0$) to (\ref{Ltis2}) and the latter is \red{the} control system
	$$
	\Lambda:\left\{ \begin{array}{c@{\,}l}
	\dot z &= Az + Bu\\
	y &= Cz+Du,
	\end{array}\right.
	$$
	together with the constraint $y=0$, that is, \blue{$\Delta\mathop \sim \limits^{ex}\Delta^{Impl}={\rm Impl}(\Lambda) $}.
\end{itemize}
\red{Let us give a few comments on the above construction:}

(i) The map $P=\left[ {\begin{matrix}
		P_1\\
		P_2
		\end{matrix}} \right]$ defines state variables $z=P_1x$
	\red{as} coordinates on the state space $\red{\mathscr Z=\mathbb R^{n}/\ker E}$ isomorphic to $\mathbb R^q$
	and control variables $u=P_2x$ \red{as} coordinates on $\mathscr U\cong \ker E\cong  \mathbb{R}^m$. The output variables $y$ are coordinates on $\mathscr Y\cong\mathbb{R}^{l}/{{\rm Im\,} E}\cong \mathbb{R}^{p}$ and define the output map via $y=Cz+Du$.
	
(ii) Choose other coordinates $(z',u')$ \red{given} by $z'=P'_1x$ and $u'=P'_2x$ such that $\ker P'_1=\ker E=\ker P_1$, then
	\begin{align}
	\left\{ {\begin{array}{*{20}{l}}
		{ z'=T_sz}\\
		{u'=F'z+T_iu},
		\end{array}} \right.
	\end{align}
	where $T_s\in Gl(n,\mathbb{R})$ and $F'\in \mathbb{R}^{m\times n}$, $ T_i\in Gl(m,\mathbb{R})$. Clearly, $z'=T_sz$ is another set of coordinates on the state space $\mathbb R^{n}/\ker E$ and $u'=F'z+T_iu$ is a \emph{state feedback transformation}.
	
(iii) The output $y$ \red{takes values} in the quotient space $\mathbb{R}^{l}/{{\rm Im\,} E}$.
	Since $y=Cz+Du=0$, we can add $y$ to the dynamics  \red{without changing} solutions of the system on \red{the subspace} $\{y=0\}$. \red{Together with a state transformation $z'=T_sz$ \red{and an output transformation $y'=T_oy$}, it} results in a triangular transformation (\emph{output injection}) of the system
	\begin{align}
	\left[\begin{matrix}
	\dot z'\\
	y'
	\end{matrix} \right]=\left[ {\begin{matrix}
		{{T_s}}&K'\\
		0&{{T_o}}
		\end{matrix}} \right]\left[\begin{matrix}
	\dot z\\
	y
	\end{matrix} \right]=\left[ {\begin{matrix}
		{{T_s}}&K'\\
		0&{{T_o}}
		\end{matrix}} \right]\left[\begin{matrix}
	A&B\\
	C&D
	\end{matrix} \right]\left[\begin{matrix}
	z\\
	u
	\end{matrix} \right],
	\end{align}
	where $K'\in \mathbb{R}^{n\times p}$, $ T_o\in Gl(p,\mathbb{R})$.

In view of the above analysis, the \red{non-uniqueness} of the construction leads to a control system defined up to a coordinates change, a feedback transformation and an output injection,  which is actually, a class of control systems. 
\begin{defn}[explicitation]\label{Def:QPexpl}
	Given a DAE $\Delta_{l,n}=(E,H)$, \red{there always exist $Q\in Gl(l,\mathbb R)$ and $P\in Gl(n,\mathbb R)$ such that 
		\begin{align}\label{Eq:ex-equi}
		QEP^{-1}=\left[ \begin{matrix}
		I_q&0\\0&0
		\end{matrix}\right].
		\end{align}
		The control system $\Lambda$, given by $\Lambda_{q,m,p}=(A,B,C,D)$, where \red{$ QHP^{-1}=\left[ \begin{matrix}
			A&B\\C&D
		\end{matrix}\right]$}, is called the $(Q,P)$-explicitation of $\Delta$. The class of all $(Q,P)$-explicitations, corresponding to all $Q\in Gl(l,\mathbb R)$ and $P\in Gl(n,\mathbb R)$, will be called the explicitation class of $\Delta$ and denoted by ${\rm Expl}(\Delta)$.} If a particular control system $\Lambda$ \red{belongs} to the explicitation class ${\rm Expl}(\Delta)$ of $\Delta$, \red{we will write} $\Lambda\in {\rm Expl}(\Delta)$.
\end{defn}
\begin{rem}\label{explicitation}
	The implicitation \red{${\rm Impl}(\Lambda)$} of a given control system $\Lambda$ is a unique DAE $\Delta^{Impl}$, given by $(\ref{Impl})$. \red{The explicitation ${\rm Expl}(\Delta)$ of a given DAE $\Delta$} is\red{, however,} a control system defined up to a coordinates change, a feedback transformation, and an output injection, \red{that is,} a class of control systems.
\end{rem}
\begin{thm}\label{main theorem} 
	\begin{itemize}
\item[(i)]  Consider  a  DAE $\Delta=(E,H)$ and a control system $\Lambda=(A,B,C,D)$. Then $\Lambda\in {\rm Expl}(\Delta)$ if and only if \blue{$\Delta \mathop  \sim \limits^{ex} \Delta^{Impl}$, where $\Delta^{Impl}={\rm Impl}(\Lambda)$}. \blue{More specifically, $\Lambda$ is the $(Q,P)$-explicitation of $\Delta$ if and only if $\Delta \mathop  \sim \limits^{ex} \Delta^{Impl}$ via $(Q,P)$.}
\item[(ii)]  Given two DAEs $\Delta=(E,H)$ and $\tilde \Delta=(\tilde E,\tilde H)$, \red{choose} two control systems $\Lambda\in {\rm Expl}(\Delta)$ and ${\tilde {\Lambda}}\in  {\rm Expl}(\tilde \Delta) $. Then  $\Delta\mathop  \sim \limits^{ex} \tilde \Delta $ if and only if $\Lambda\mathop  \sim \limits^{M}\tilde {\Lambda}$.

\item[(iii)]  \red{Consider} two control systems $\Lambda=(A,B,C,D)$ and $\tilde \Lambda=(\tilde A, \tilde B, \tilde C, \tilde D)$. Then $\Lambda \mathop  \sim \limits^{M} \tilde \Lambda $ if and only if $\Delta^{Impl} \mathop  \sim \limits^{ex} \tilde \Delta^{Impl} $, where $\Delta^{Impl}={\rm Impl}(\Lambda)$ and $\tilde \Delta^{Impl}={\rm Impl}(\tilde \Lambda)$.  	 
	\end{itemize}
\end{thm}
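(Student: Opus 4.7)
The plan rests on one structural observation: the implicitation procedure places the leading matrix into the canonical block form
\[
E^{Impl}=\begin{bmatrix} I_q & 0 \\ 0 & 0 \end{bmatrix},
\]
and the requirement that an ex-equivalence preserve this form will force the transformation matrices into exactly the block-triangular shape that appears in the Morse transformation (\ref{morse-trans}). With this in mind, part (i) is a direct unfolding of definitions: by (\ref{Eq:ex-equi}), the control system $\Lambda=(A,B,C,D)$ is the $(Q,P)$-explicitation of $\Delta$ precisely when $QEP^{-1}$ and $QHP^{-1}$ equal the matrices $E^{Impl}$ and $H^{Impl}$ of $\Delta^{Impl}={\rm Impl}(\Lambda)$, and this is the very definition of $\Delta \sim^{ex}\Delta^{Impl}$ via $(Q,P)$.

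For part (ii), I would use (i) together with transitivity of $\sim^{ex}$ to reduce the claim to showing that $\Delta^{Impl}\sim^{ex}\tilde\Delta^{Impl}$ if and only if $\Lambda\sim^M\tilde\Lambda$. Suppose the ex-equivalence holds via a pair $(Q',P')$, and partition
\[
Q' = \begin{bmatrix} Q_{11} & Q_{12} \\ Q_{21} & Q_{22} \end{bmatrix}, \qquad (P')^{-1} = \begin{bmatrix} P_{11} & P_{12} \\ P_{21} & P_{22} \end{bmatrix}
\]
conformably with the row block sizes $q,p$ and column block sizes $q,m$. The identity $Q'E^{Impl}(P')^{-1}=E^{Impl}$ expands to $Q_{11}P_{11}=I_q$, $Q_{11}P_{12}=0$, and $Q_{21}P_{11}=0$, from which $Q_{11}$ is invertible with $P_{11}=Q_{11}^{-1}$, $P_{12}=0$ and $Q_{21}=0$; invertibility of $Q'$ and $P'$ then forces $Q_{22}$ and $P_{22}$ invertible as well. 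Setting $T_s=Q_{11}$, $T_o=Q_{22}$, $T_i=P_{22}^{-1}$, $K=T_s^{-1}Q_{12}$, and $F=P_{21}T_s$, the remaining identity $Q'H^{Impl}(P')^{-1}=\tilde H^{Impl}$ is seen to coincide with (\ref{morse-trans}), so $\Lambda\sim^M\tilde\Lambda$. Conversely, given any Morse transformation $(T_s,T_i,T_o,F,K)$, assembling these blocks into $Q'$ and $(P')^{-1}$ of the above triangular form yields invertible matrices that realise the required ex-equivalence, so the implications reverse.

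Part (iii) is then an immediate specialisation of (ii): take $\Delta=\Delta^{Impl}$ and $\tilde\Delta=\tilde\Delta^{Impl}$, observing that each implicitation is its own $(I,I)$-explicitation, so that $\Lambda\in{\rm Expl}(\Delta^{Impl})$ and $\tilde\Lambda\in{\rm Expl}(\tilde\Delta^{Impl})$; then (ii) directly equates $\sim^M$ with $\sim^{ex}$ on the implicitations. The only genuinely delicate step is the block-matrix bookkeeping in part (ii), where one must keep track that $Q'$ is $l\times l=(q+p)\times(q+p)$ while $(P')^{-1}$ is $n\times n=(q+m)\times(q+m)$, with $p$ and $m$ generally different, and check that the zero blocks forced by $Q'E^{Impl}(P')^{-1}=E^{Impl}$ land in the positions matching (\ref{morse-trans}); I expect no conceptual obstacle beyond this piece of linear algebra.
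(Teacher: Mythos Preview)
Your proposal is correct and follows essentially the same approach as the paper: the key step in both is the block-matrix analysis showing that an ex-equivalence preserving the canonical form $E^{Impl}=\left[\begin{smallmatrix} I_q & 0\\ 0 & 0\end{smallmatrix}\right]$ forces $Q'$ upper block-triangular and $(P')^{-1}$ lower block-triangular with matching $q\times q$ diagonal blocks, which is exactly the shape of a Morse transformation. The only organizational difference is that you first invoke (i) and transitivity of $\sim^{ex}$ to reduce (ii) to the implicitation case and then observe that (iii) is a specialisation, whereas the paper carries the explicitation matrices $Q,\tilde Q,P,\tilde P$ through (ii) directly and then repeats the block computation separately for (iii); your ordering is slightly more economical but not a genuinely different argument.
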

The proof is given in Section \ref{Pf:main theorem}.
\begin{rem}
	Theorem \ref{main theorem} \red{describes} relations of DAEs and control systems, \red{which we illustrate} in  Figure \ref{Fig:ex and im}. \red{We} conclude \red{that} Morse equivalent control systems (and only such) give, via \emph{implicitation,} ex-equivalent DAEs. Furthermore, \emph{explicitation} is a universal procedure of producing control systems from a DAE and ex-equivalent DAEs produce Morse equivalent control systems.
	\tikzstyle{process} = [rectangle, minimum width=2cm, minimum height=0.8cm, text centered, draw=black]
	\tikzstyle{process1} = [rectangle, minimum width=2cm, minimum height=0.4cm, text centered]
	\tikzstyle{arrow} = [thick,double, distance=1pt,<->,>=stealth]
	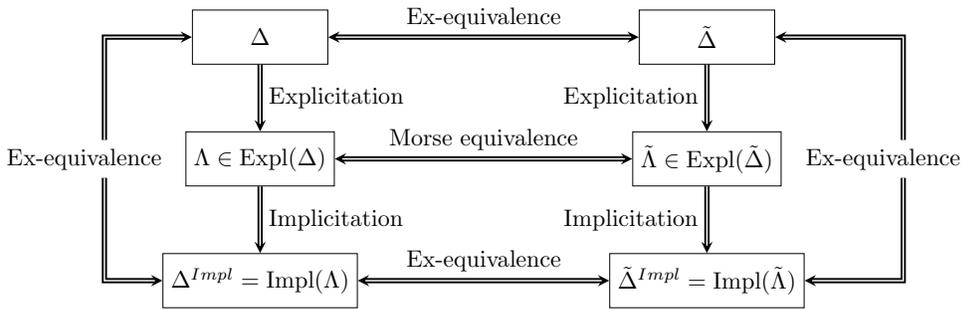
\begin{figure}[h]
		\setlength{\belowcaptionskip}{-0.3 cm}  
		\begin{center}
		\scalebox{.9}{	\begin{tikzpicture}[node distance=1.6cm]
			\node[process](A11){$\Delta$};
			\node[process, below of = A11, yshift = -0.2cm](A21){$\Lambda\in {\rm Expl}(\Delta)$};
			\node[process, right of = A11, xshift = 5cm](A12){$\tilde \Delta$};
			\node[process, below of = A21, yshift = -0.2cm](A31){$\Delta^{Impl}={\rm Impl}(\Lambda)$};
			\node[process1, left of = A21, xshift = -1cm](A20){Ex-equivalence};
			\node[process, right of = A21, xshift =5cm](A22){$\tilde \Lambda\in {\rm Expl}(\tilde \Delta)$};
			\node[process1, right of = A22, xshift = 1cm](A23){Ex-equivalence};
			\node[process, right of = A31, xshift =5cm](A32){$\tilde\Delta^{Impl}={\rm Impl}(\tilde\Lambda)$};
			\coordinate (point1) at (-3cm, -6cm);
			\draw [arrow][->] (A11) -- node[right]{Explicitation}(A21);
			\draw [arrow] (A11) -- node[above]{Ex-equivalence}(A12);
			\draw [arrow][->] (A12) -- node[left]{Explicitation}(A22);
			\draw [arrow] (A21) -- node[above]{Morse equivalence}(A22);
			\draw [arrow][->] (A21) -- node[right]{Implicitation}(A31);
			\draw [arrow] (A31) -- node[above]{Ex-equivalence}(A32);
			\draw [arrow][->] (A22) -- node[left]{Implicitation}(A32);
			\draw [arrow][->](A20) --($(A20.north) + (0.3,0)$) |-(A11);
			\draw [arrow][->](A20) --($(A20.south) + (0.3,0)$) |-(A31);
			\draw [arrow][->](A23) --($(A23.north) + (0.3,0)$) |-(A12);
			\draw [arrow][->](A23) --($(A23.south) + (0.3,0)$) |-(A32);
			\end{tikzpicture}}
			\caption{Explicitation of DAEs and implicitation of control systems}
			\label{Fig:ex and im} 
		\end{center}
	\end{figure}
\end{rem}
\section{Geometric connections between DAEs and control systems}\label{Chap1sec4}
\red{The Wong sequences  \cite{wong1974eigenvalue} of a DAE are defined as follows.}
\begin{defn}\label{VWrealized}
	For a DAE $\Delta_{l,n}=(E,H)$, \red{its} Wong sequences \red{are defined} by	
	\begin{align}
	\mathscr V_0&=\mathbb{R}^n, \ \ \ \mathscr V_{i+1}=H^{-1}E\mathscr V_{i},\ \ i\in \mathbb{N},\label{Vrealized}\\
	\mathscr W_0&=\{0\}, \ \  \mathscr W_{i+1}=E^{-1}H\mathscr W_{i},\ \ i\in \mathbb{N}.\label{Wrealized}
	\end{align}	
\end{defn}
\begin{rem}\label{rem:wong}
	The Wong sequences $\mathcal V_i$ and $\mathcal W_i$ satisfy
	\begin{align}\label{prop of V}
	\begin{array}{l}
	{\mathscr V_0} \supsetneq {\mathscr V_1} \supsetneq  \cdots  \supsetneq {\mathscr V_{{k^*}}} = {\mathscr V_{{k^*} + j}} = {\mathscr V^*} = {H^{ - 1}} {E{\mathscr V^*}}  \supseteq \ker H, \ j\in \mathbb{N},\\
	{\mathscr W_0} \subseteq \ker E = {\mathscr W_1} \subsetneq  \cdots  \subsetneq {\mathscr W_{{l^*}}} = {\mathscr W_{{l^*} + j}} = {\mathscr W^*} = {E^{ - 1}} {H{\mathscr W^*}}, \ j\in \mathbb{N}.
	\end{array}
	\end{align}
\end{rem}
\red{We now give a definition of invariant subspaces for linear DAEs.}
\begin{defn}\label{DAE invariances}
	For \red{a} DAE $\Delta_{l,n}=(E,H)$, a subspace $\mathscr V \subseteq \mathbb{R}^n$ is called  $(H^{-1},E)$  -invariant  if $\mathscr V$ satisfies $\mathscr V=H^{-1} E\mathscr V$; a subspace $\mathscr W\subseteq \mathbb{R}^n$ is called  $(E^{-1},H)$-invariant if $\mathscr W$ satisfies $\mathscr W= E^{-1}H\mathscr W$. 
\end{defn}
\red{Denote by $\mathscr V^*$ the largest $(H^{-1},E)$-invariant subspace of $\mathbb{R}^n$ and by $\mathscr W^*$ the smallest $(E^{-1},H)$-invariant subspace of $\mathbb{R}^n$. Using the \mage{same} symbols $\mathscr V^*$ and $\mathscr W^*$ \mage{as those} for the limits of Wong sequences (see Remark \ref{rem:wong}) is justified by the following.}
\begin{pro}\label{Pro:VW}
	(i) For a \red{DAE} $\Delta_{l,n}=(E,H)$, the largest $(H^{-1},E)$-invariant subspace $\mathscr V^*$ and the smallest $(E^{-1},H)$-invariant subspace $\mathscr W^*$ exist and are given, respectively, by 
	\begin{align*}
	{\mathscr V^*}={\mathscr V_{{k^*}}} \ \ \ and \ \ \ {\mathscr W^*}={\mathscr W_{{l^*}}},
	\end{align*}
	where $k^*$ is the smallest integer such that  ${\mathscr V_{{k^*}}}={\mathscr V_{{k^*}+1}}$ and $l^*$ is the smallest interger  such that  ${\mathscr W_{{l^*}}}={\mathscr W_{{l^*}+1}}$;
	
	(ii) $\mathscr V^*$ is also the largest subspace such that $H\mathscr V^*\subseteq E\mathscr V^*$, however, $\mathscr W^*$ is not necessarily the smallest subspace such that $E\mathscr W^*\subseteq H\mathscr W^*$.
\end{pro}
The proof is given in Section \ref{Pf:Pro:VW}. We now review the notions of invariant subspaces in linear control theory. We consider two cases depending on whether the control system is strictly proper ($D$ is zero or not). \red{We will use the \red{bold-notations} for the strictly proper case $D=0$, since throughout it applies to the prolongation system (\ref{pro sys}), which we denote by bold symbols.}
\begin{defn}\label{ControlsysinvariancewithoutD}
	For a control system $\mathbf{\Lambda}_{n,m,p}=(\mathbf{A},\mathbf{B},\mathbf{C})$, a subspace $\mathbfcal{V} \subseteq \mathbb{R}^n$ is called \red{an} $(\mathbf{A},\mathbf{B})$-controlled invariant subspace if $\mathbfcal{V}$ satisfies 
	\[\mathbf{A}\mathbfcal{V}\subseteq \mathbfcal{V}+{\rm Im\,} \mathbf{B}\]
	and a subspace $\mathbfcal{W}\subseteq \mathbb{R}^n$ is called \red{a} $(\mathbf{C},\mathbf{A})$-conditioned invariant subspace if $\mathbfcal{W}$ satisfies 
	\[ \mathbf{A}(\mathbfcal{W}\cap \ker \mathbf{C})\subseteq \mathbfcal{W}.\]
	Denote by $\mathbfcal{V^*}$ the largest $(\mathbf{A},\mathbf{B})$-controlled invariant subspace contained in $\ker \mathbf{C}$ and by $\mathbfcal{W^*}$  the smallest $(\mathbf{C},\mathbf{A})$-conditioned invariant subspace containing ${\rm Im\,} \mathbf{B}$.
\end{defn}
\red{The following fundamental lemma shows that $\mathbfcal{V^*}$, $\mathbfcal{W^*}$  exist and they can be calculated via the sequences of subspaces $\mathbfcal{V}_i$, $\mathbfcal{W}_i$ given below.}
\begin{lem}[\cite{wonham1974linear},\cite{basile1992controlled}]\label{tildeVWreali}
	Initialize $\mathbfcal{V}_0=\mathbb{R}^n$ \red{and,} for $i\in 
	\mathbb{N}$, define inductively
	\begin{align}\label{tildeVreali}
	{\kern 30pt}\mathbfcal{V}_{i+1}= \ker \mathbf{C}\cap \mathbf{A}^{-1}(\mathbfcal{V}_{i}+{\rm Im\,} \mathbf{B}).
	\end{align}
	Initialize $\mathbfcal{W}_0={0}$ and, for $i\in 
	\mathbb{N}$, define inductively
	\begin{align}\label{tildeWreali}
	\mathbfcal{W}_{i+1}= \mathbf{A}(\mathbfcal{W}_{i}\cap \ker \mathbf{C} )+{\rm Im\,} \mathbf{B}.
	\end{align}	
	Then there exist $\mathbf{k}^*\leq n$ and $\mathbf{l}^*\leq n$ such that
	\begin{align*}
	\begin{array}{l}
	\mathbfcal{V}_0\supseteq \ker\mathbf{C} =\mathbfcal{V}_1 \supsetneq  \cdots  \supsetneq \mathbfcal{V}_{\mathbf{k}^*} = \mathbfcal{V}_{{\mathbf{k}^*} + j} =\mathbfcal{V^*}=\ker \mathbf{C}\cap \mathbf{A}^{-1}( \mathbfcal{V^*}+{\rm Im\,}\mathbf{B}), \ \ j\in \mathbb{N},\\
	\mathbfcal{W}_0 \subseteq  {\rm Im\,} \mathbf{B} \!=\! \mathbfcal{W}_1 \subsetneq  \cdots  \subsetneq\mathbfcal{W}_{\mathbf{l}^*} = \mathbfcal{W}_{{\mathbf{k}^*} + j} =\mathbfcal{W^*}= \mathbf{A}(\mathbfcal{W}^*\cap \mathscr \ker \mathbf{C} )+{\rm Im\,}\mathbf{B}, \  j\in \mathbb{N}.
	\end{array}
	\end{align*}
\end{lem}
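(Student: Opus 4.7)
This is a classical result from geometric linear control theory (e.g.\ \cite{wonham1974linear,basile1992controlled}), so the plan is the standard strategy: verify monotonicity, use a dimension argument to force stabilization, and then identify the limit with the extremal invariant subspace by showing both inclusions (the limit itself is invariant, and any invariant subspace is contained in (resp.\ contains) the limit).

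First I would handle the $\mathbfcal{V}_i$-sequence. Observe that $\mathbfcal{V}_1=\ker\mathbf{C}\cap\mathbf{A}^{-1}(\mathbb{R}^n+{\rm Im\,}\mathbf{B})=\ker\mathbf{C}$, giving the initial equality. Then I would prove by induction that $\mathbfcal{V}_{i+1}\subseteq\mathbfcal{V}_i$: if the inclusion holds at stage $i$, then $\mathbfcal{V}_{i+1}+{\rm Im\,}\mathbf{B}\subseteq\mathbfcal{V}_i+{\rm Im\,}\mathbf{B}$, so taking preimages under $\mathbf{A}$ and intersecting with $\ker\mathbf{C}$ yields $\mathbfcal{V}_{i+2}\subseteq\mathbfcal{V}_{i+1}$. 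Since $\dim\mathbfcal{V}_i$ is a non-increasing sequence of non-negative integers in $\{0,\dots,n\}$, there exists a smallest $\mathbf{k}^*\le n$ with $\mathbfcal{V}_{\mathbf{k}^*+1}=\mathbfcal{V}_{\mathbf{k}^*}$. A straightforward induction then shows $\mathbfcal{V}_{\mathbf{k}^*+j}=\mathbfcal{V}_{\mathbf{k}^*}$ for all $j\in\mathbb{N}$, and the stabilization equation $\mathbfcal{V}_{\mathbf{k}^*}=\ker\mathbf{C}\cap\mathbf{A}^{-1}(\mathbfcal{V}_{\mathbf{k}^*}+{\rm Im\,}\mathbf{B})$ gives both $\mathbfcal{V}_{\mathbf{k}^*}\subseteq\ker\mathbf{C}$ and $\mathbf{A}\mathbfcal{V}_{\mathbf{k}^*}\subseteq\mathbfcal{V}_{\mathbf{k}^*}+{\rm Im\,}\mathbf{B}$, i.e.\ $\mathbfcal{V}_{\mathbf{k}^*}$ is $(\mathbf{A},\mathbf{B})$-controlled invariant and contained in $\ker\mathbf{C}$. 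To finish, I would show maximality: for any $(\mathbf{A},\mathbf{B})$-controlled invariant $\mathcal{U}\subseteq\ker\mathbf{C}$, induct to prove $\mathcal{U}\subseteq\mathbfcal{V}_i$ for every $i$. The base is trivial, and the inductive step uses $\mathbf{A}\mathcal{U}\subseteq\mathcal{U}+{\rm Im\,}\mathbf{B}\subseteq\mathbfcal{V}_i+{\rm Im\,}\mathbf{B}$, combined with $\mathcal{U}\subseteq\ker\mathbf{C}$, to conclude $\mathcal{U}\subseteq\mathbfcal{V}_{i+1}$. Taking $i=\mathbf{k}^*$ identifies $\mathbfcal{V}_{\mathbf{k}^*}=\mathbfcal{V}^*$ and forces the chain to be \emph{strictly} decreasing before stabilization (otherwise $\mathbf{k}^*$ would not be minimal).

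Next I would treat the $\mathbfcal{W}_i$-sequence by a dual argument. The initialization gives $\mathbfcal{W}_1=\mathbf{A}(\{0\}\cap\ker\mathbf{C})+{\rm Im\,}\mathbf{B}={\rm Im\,}\mathbf{B}$. Monotonicity $\mathbfcal{W}_i\subseteq\mathbfcal{W}_{i+1}$ follows by induction: if $\mathbfcal{W}_{i-1}\subseteq\mathbfcal{W}_i$, then $\mathbfcal{W}_{i-1}\cap\ker\mathbf{C}\subseteq\mathbfcal{W}_i\cap\ker\mathbf{C}$, so applying $\mathbf{A}$ and adding ${\rm Im\,}\mathbf{B}$ gives $\mathbfcal{W}_i\subseteq\mathbfcal{W}_{i+1}$. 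Dimension considerations force stabilization at some $\mathbf{l}^*\le n$, and the fixed-point identity shows that the limit is $(\mathbf{C},\mathbf{A})$-conditioned invariant and contains ${\rm Im\,}\mathbf{B}$. Minimality is proved by an inclusion-by-induction: if $\mathcal{U}$ is any $(\mathbf{C},\mathbf{A})$-conditioned invariant subspace containing ${\rm Im\,}\mathbf{B}$, then $\mathbfcal{W}_0=\{0\}\subseteq\mathcal{U}$, and if $\mathbfcal{W}_i\subseteq\mathcal{U}$ then $\mathbf{A}(\mathbfcal{W}_i\cap\ker\mathbf{C})+{\rm Im\,}\mathbf{B}\subseteq\mathbf{A}(\mathcal{U}\cap\ker\mathbf{C})+\mathcal{U}\subseteq\mathcal{U}$, giving $\mathbfcal{W}_{i+1}\subseteq\mathcal{U}$. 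Hence $\mathbfcal{W}_{\mathbf{l}^*}=\mathbfcal{W}^*$ and strict growth before stabilization follows from minimality of $\mathbf{l}^*$.

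I do not anticipate a genuine obstacle; the only delicate point is being careful with the order of the operations $\mathbf{A}^{-1}$, $\cap\ker\mathbf{C}$, and the sum with ${\rm Im\,}\mathbf{B}$, especially when translating monotonicity through the preimage $\mathbf{A}^{-1}$ (which is monotone for inclusions of sets, but does not commute with intersections in general). Since no intersection is pushed through $\mathbf{A}^{-1}$ in the induction, this causes no trouble. Because this lemma is classical and already attributed to \cite{wonham1974linear,basile1992controlled}, in the paper I would keep the write-up brief, stating that the proof is the standard monotonicity-plus-extremality argument and pointing to the cited references for details.
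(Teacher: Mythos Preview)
Your proposal is correct and is exactly the standard monotonicity-plus-dimension-plus-extremality argument found in the cited references. The paper itself does not supply a proof of this lemma at all; it simply states the result and attributes it to \cite{wonham1974linear,basile1992controlled}, which is precisely what you suggest doing in your final paragraph.
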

\mage {Note that  $\mathbf{k}^*$ and $\mathbf{l}^*$ of Lemma \ref{tildeVWreali} and  $k^*$ and $l^*$ of Remark \ref{rem:wong} are, in general,  not the same (except for some cases described later, see Proposition \ref{relation invariant subs}, in which they coincide).}  It is well-known (see e.g., \cite{wonham1970decoupling},\cite{wonham1974linear},\cite{basile1992controlled}) that $\mathbfcal{V}$ is an $(\mathbf{A},\mathbf{B})$-controlled invariant subspace if and only if there exists $\mathbf{F}\in \mathbb{R}^{m\times n}$ such that $(\mathbf{A}+\mathbf{B}\mathbf{F})\mathbfcal{V}\subseteq\mathbfcal{V}$ and $\mathbfcal{W}$ is a $(\mathbf{C},\mathbf{A})$-conditioned invariant subspace if and only if there exists $\mathbf{K}\in \mathbb{R}^{n\times p}$ such that $(\mathbf{A}+\mathbf{K}\mathbf{C})\mathbfcal{W}\subseteq \mathbfcal{W}$. \red{For a control system which is not strictly proper ($D$ is not zero),} following \red{Definitions 1--4} of \cite{molinari1978structural}, we use a generalization of that characterization of invariant subspaces.
\begin{defn}\label{ControlsysinvariancewithD}
	For $\Lambda_{q,m,p}=(A,B,C,D)$, a subspace $\mathcal{V} \subseteq \mathbb{R}^q$ is called \red{a} null-output $(A,B)$-controlled invariant subspace if there exists $F\in \mathbb{R}^{m\times q}$ such that 
	\[(A+BF)\mathcal{V}\subseteq\mathcal{V} \ \ \ {\rm and} \ \ \ (C+DF)\mathcal{V}=0,\]
	and \red{for any such $\mathcal V$, the subspace $\mathcal{U} \subseteq \mathbb{R}^m$ given by
		$$\mathcal{U}=(B^{-1}\mathcal{V})\cap \ker D,$$ 
		is called a null-output $(A,B)$-controlled invariant input subspace.}
	Denote by $\mathcal{V^*}$ (\red{resp.} $\mathcal{U^*}$)  the largest null-output $(A,B)$ controlled invariant subspace (\red{resp.} input subspace).	 \\ \indent A subspace $\mathcal{W} \subseteq \mathbb{R}^q$ is called \red{an} unknown-input $(C,A)$-conditioned invariant subspace if there exists $K\in \mathbb{R}^{q\times p}$ such that 
	$$(A+KC)\mathcal{W}+(B+KD) {\mathscr U}=\mathcal{W},$$ 
	and \red{ for any such $\mathcal W$, the subspace $\mathcal{Y} \subseteq \mathbb{R}^p$ given by 
		$$ \mathcal{Y}=C\mathcal{W}+ D{\mathscr U},$$ 
		is called \red{an} unknown-input $(C,A)$-conditioned invariant output subspace.} Denote by $\mathcal{W^*}$ (\red{resp.} $\mathcal{Y^*}$) the smallest unknown-input $(C,A)$-conditioned invariant subspace (\red{resp.} output subspace).
\end{defn}
\red{The following lemma  shows that $\mathcal{V^*}$, $\mathcal{U^*}$, $\mathcal{W^*}$, $\mathcal{Y^*}$ exist and provides a calculable algorithm to find them.}
\begin{lem}[\cite{molinari1976strong}]\label{barVWrealized}
	Initialize $\mathcal{V}_0=\mathbb{R}^q$, and for $i\in 
	\mathbb{N}$, define inductively
	\begin{align}\label{barV}
	{\mathcal{V}_{i + 1}} = {\left[ {\begin{matrix}
			A\\
			C
			\end{matrix}} \right]^{ - 1}}\left(  {\left[ {\begin{matrix}
			I\\
			0
			\end{matrix}} \right]{\mathcal{V}_i} + {\rm Im\,}\left[ {\begin{matrix}
			{B}\\
			{D}
			\end{matrix}} \right]} \right) 
	\end{align}
	and $\mathcal{U}_i\subseteq {\mathscr U}$ for $i\in \mathbb{N}$ are given by
	\begin{align}\label{barU}
	{\mathcal{U}_i} = {\left[ {\begin{matrix}
			B\\
			D
			\end{matrix}} \right]^{ - 1}}\left[ {\begin{matrix}
		{{\mathcal{V}_i}}\\
		0
		\end{matrix}} \right].
	\end{align}
	Then  $\mathcal{V}^*=\mathcal{V}_q$ and $\mathcal{U}^*=\mathcal{U}_q$.
	
	Initialize $\mathcal{W}_0=\{0\}$, and for $i\in 
	\mathbb{N}$, define inductively
	\begin{align}\label{barW}
	{\mathcal{W}_{i + 1}} = \left[ {\begin{matrix}
		A&B
		\end{matrix}} \right]\left(  {\left[ {\begin{matrix}
			{{\mathcal{W}_i}}\\
			{\mathscr U}
			\end{matrix}} \right] \cap \ker \left[ {\begin{matrix}
			C&D
			\end{matrix}} \right]} \right) 
	\end{align}
	and $\mathcal{Y}_i\subseteq{\mathscr{Y}}$ for $i\in \mathbb{N}$ are given by
	\begin{align}\label{barY}
	{{\mathcal{Y}}_i} = \left[ {\begin{matrix}
		C&D
		\end{matrix}} \right]\left[ {\begin{matrix}
		{{\mathcal{W}_i}}\\
		\mathscr U
		\end{matrix}} \right].
	\end{align}
	Then $\mathcal{W}^*=\mathcal{W}_q$ and $\mathcal{Y}^*=\mathcal{Y}_q$.
\end{lem}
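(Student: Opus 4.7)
The plan is to prove the two dual pairs $(\mathcal V^\ast,\mathcal U^\ast)$ and $(\mathcal W^\ast,\mathcal Y^\ast)$ separately, following the classical Basile--Marro/Wonham monotone fixed-point scheme (in the spirit of Lemma~\ref{tildeVWreali}) extended to the 4-tuple $(A,B,C,D)$ case by passing to the graph-space block matrices $\left[\begin{smallmatrix} A \\ C\end{smallmatrix}\right]$ and $\left[\begin{smallmatrix} B \\ D\end{smallmatrix}\right]$, which absorb the feedthrough $D$ cleanly. I will describe the argument for $\mathcal V^\ast$ in detail; the one for $\mathcal W^\ast$ is dual.

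The first step is to remove the feedback $F$ from Definition~\ref{ControlsysinvariancewithD} and obtain an intrinsic algebraic characterization: a subspace $\mathcal V\subseteq\mathbb R^q$ is null-output $(A,B)$-controlled invariant if and only if $\mathcal V\subseteq\Phi(\mathcal V)$, where $\Phi(\mathcal V)$ denotes the right-hand side of~(\ref{barV}) with $\mathcal V_i$ replaced by $\mathcal V$. The ``only if'' direction is immediate. For the ``if'' direction I would pick a basis $v_1,\dots,v_r$ of $\mathcal V$, select for each $v_j$ an input $u_j\in\mathbb R^m$ with $Av_j+Bu_j\in\mathcal V$ and $Cv_j+Du_j=0$, and linearly extend $v_j\mapsto u_j$ to the required $F\in\mathbb R^{m\times q}$.

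Next, since $\Phi$ is monotone and the recursion reads $\mathcal V_{i+1}=\Phi(\mathcal V_i)$ starting at $\mathcal V_0=\mathbb R^q$, induction produces a decreasing chain of subspaces of $\mathbb R^q$. A dimension count---each strict inclusion drops $\dim\mathcal V_i$ by at least one---forces stabilization after at most $q$ steps, so $\mathcal V_q=\Phi(\mathcal V_q)$ and $\mathcal V_q$ satisfies the algebraic characterization, i.e.\ is null-output controlled invariant. Maximality follows by a parallel induction: any null-output controlled invariant $\mathcal V$ satisfies $\mathcal V\subseteq\mathcal V_0$ trivially, and $\mathcal V\subseteq\mathcal V_i$ implies $\mathcal V\subseteq\Phi(\mathcal V)\subseteq\Phi(\mathcal V_i)=\mathcal V_{i+1}$ by monotonicity of $\Phi$. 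Hence $\mathcal V^\ast=\mathcal V_q$, and $\mathcal U^\ast=\mathcal U_q$ then drops out by applying~(\ref{barU}) at $i=q$ to $\mathcal V^\ast=\mathcal V_q$.

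The claims for $\mathcal W^\ast,\mathcal Y^\ast$ I would handle by the mirror strategy: characterize unknown-input $(C,A)$-conditioned invariance by a $K$-free algebraic fixed-point condition expressed through the map $\Psi(\mathcal W):=[A\ B]\bigl(\binom{\mathcal W}{\mathscr U}\cap\ker[C\ D]\bigr)$ (using a dual basis-selection argument to construct $K$), verify that the recursion $\mathcal W_{i+1}=\Psi(\mathcal W_i)$ is monotone-increasing within subspaces of $\mathbb R^q$ and stabilizes within $q$ steps by the same dimension count, and obtain minimality of $\mathcal W_q$ by the dual induction; then $\mathcal Y^\ast=\mathcal Y_q$ follows from~(\ref{barY}) applied at $i=q$. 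The hard part will be the $F$- and $K$-elimination step: with a nonzero $D$, the state-side invariance and the output-side constraint are coupled, so one must arrange a single $F$ (resp.\ $K$) that enforces both conditions simultaneously, which is precisely what the graph-space formulation through $\left[\begin{smallmatrix} A \\ C\end{smallmatrix}\right]$ and $\left[\begin{smallmatrix} B \\ D\end{smallmatrix}\right]$ is designed to package into a single subspace inclusion.
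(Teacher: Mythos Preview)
The paper does not give its own proof of this lemma; it is quoted verbatim from \cite{molinari1976strong} and used as an external input. So there is no in-paper argument to compare against, and the relevant question is simply whether your sketch is sound.

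It is. Your scheme is exactly the standard Basile--Marro/Wonham monotone fixed-point argument, adapted to the 4-tuple $(A,B,C,D)$ by working with the stacked maps $\left[\begin{smallmatrix}A\\ C\end{smallmatrix}\right]$ and $\left[\begin{smallmatrix}B\\ D\end{smallmatrix}\right]$, which is precisely how Molinari proceeds. The $F$-elimination step you describe (basis selection on $\mathcal V$, then linear extension) is the right one; the factorization argument behind the dual $K$-elimination---that $\Psi(\mathcal W)\subseteq\mathcal W$ says the map $(w,u)\mapsto Aw+Bu\bmod\mathcal W$ on $\mathcal W\times\mathscr U$ vanishes on $\ker[C\ D]$ and hence factors through $Cw+Du$, yielding $K$---is standard and goes through. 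One small point worth making explicit when you write it up: Definition~\ref{ControlsysinvariancewithD} states the conditioned-invariance condition with an \emph{equality} $(A+KC)\mathcal W+(B+KD)\mathscr U=\mathcal W$, not merely $\subseteq$; your fixed point $\mathcal W_q$ satisfies $\Psi(\mathcal W_q)=\mathcal W_q$, and from this the reverse inclusion for the chosen $K$ follows, so the equality form is recovered at the limit. Similarly, the claims $\mathcal U^\ast=\mathcal U_q$ and $\mathcal Y^\ast=\mathcal Y_q$ need the (easy) observation that the assignments $\mathcal V\mapsto(B^{-1}\mathcal V)\cap\ker D$ and $\mathcal W\mapsto C\mathcal W+D\mathscr U$ are monotone, so the extremal $\mathcal V,\mathcal W$ yield the extremal $\mathcal U,\mathcal Y$.
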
 
\begin{rem}
(i) Lemma \ref{barVWrealized}  generalizes the results of Lemma \ref{tildeVWreali} and, if $D=0$, Lemma \ref{barVWrealized} reduces to Lemma~\ref{tildeVWreali};

(ii)  Even if $\Lambda$ is not strictly proper (if $D\neq0$), the prolonged system $\mathbf{\Lambda}$ always is \mage{and thus} throughout we will use $\mathcal V^*$, $\mathcal U^*$,  $\mathcal W^*$ and $\mathcal Y^*$ for $\Lambda$, \red{and} $\mathbfcal{V}^*$ and $\mathbfcal{W}^*$ for $\mathbf{\Lambda}$.
\end{rem}
Throughout the paper, for ease of notation, we will write $\mathscr V_i(\Delta)$ \red{to indicate that $\mathscr V_i$ is calculated for $\Delta$}, \red{similarly for $\mathcal V_i(\Lambda)$, $\mathbfcal{V}_i(\mathbf{\Lambda})$,} and all other subspaces defined in this section.  Now we give the main results of this section. 
\begin{pro}[geometric subspaces relations]\label{relation invariant subs}
Given a DAE $\Delta_{l,n}\!=\!(E,H)$, a $(Q,P)$-explicitation $\Lambda\!=\!(A,B,C,D)\in {\rm Expl}(\Delta)$, \red{and} the prolongation $\mathbf{\Lambda}\!=\!(\mathbf{A},\mathbf{B},\mathbf{C})$ of $\Lambda$,  consider the limits of the Wong sequences $\mathscr V^*$ and $\mathscr W^*$ of $\Delta$ and \red{of} $\Delta^{Impl}={\rm Impl}(\Lambda)$, given by Definition \ref{DAE invariances}, the invariant subspaces $\mathcal{V}^*$ and $\mathcal{W}^*$ of $\Lambda$, given by Definition \ref{ControlsysinvariancewithD}, and the invariant subspaces $\mathbfcal{V}^*$ and $\mathbfcal{W}^*$ of $\mathbf{\Lambda}$, given by Definition \ref{ControlsysinvariancewithoutD}.  Then the following holds
	
	(i) $P\mathscr V^*(\Delta)=\mathscr V^*(\Delta^{Impl})=\mathbfcal{V}^*(\mathbf{\Lambda})= \left[ {\begin{matrix}
		A&B\\
		C&D
		\end{matrix}} \right] ^{-1}\left[ {\begin{matrix}
		{\mathcal{V}^*({\Lambda})}\\
		{0}
		\end{matrix}} \right],$
	
	(ii)
	$P\mathscr W^*(\Delta)=\mathscr W^*(\Delta^{Impl})=\mathbfcal{W}^*(\mathbf{\Lambda})=\left[ {\begin{matrix}
		I_q&0\\
		0&0
		\end{matrix}} \right] ^{-1}\left[ {\begin{matrix}
		{\mathcal{W}^*({\Lambda})}\\
		{0}
		\end{matrix}} \right].$ 
\end{pro}
The proof is given in Section \ref{Pf:relation invariant subs}. 
\begin{rem}
	(i) The limits $\mathscr V^*$ and $\mathscr W^*$ of the Wong sequences coincide for $\Delta$ and $\tilde \Delta$ that are ex-equivalent via \red{$(P,Q)$, where $P=I_n$ and $Q$ is arbitrary}, and do not depend on $Q$. On the other hand, the system $\Lambda$, being a $(Q,P)$-explicitation of $\Delta$, depends on both $P$ and $Q$ (and so does its prolongation $\mathbf \Lambda$)  but the invariant subspaces $\mathcal V^*(\Lambda)$ and $\mathcal W^*(\Lambda)$ depend on $P$ only.
	
	(ii) \red{Some particular relations} between the Wong sequences of DAEs and the invariant subspaces of control systems is given in Theorem 5 of \cite{costantini2013regularity}, which can be seen as a corollary of Proposition \ref{relation invariant subs}.
\end{rem}

Now \red{we will study various} dualities of geometric subspaces by analyzing the dual system. The duality of the subspaces $\mathbfcal V^*$ and $\mathbfcal{W}^*$ is \red{well-known and} studied in \cite{wonham1970decoupling},\cite{morse1973structural},\cite{basile1992controlled}. Similarly, properties of the subspaces $\mathcal V^*, \mathcal W^*, \mathcal U^*, \mathcal Y^*$ for the dual system of a control system are \red{analyzed} in  \cite{molinari1976strong} and \cite{molinari1978structural}.
In \cite{Berger2012}, it is proved that the Wong sequences of the transposed matrix pencils have relations with the original matrix pencils. In the following,
we will show that \red{all} these results can be connected by the explicitation of DAEs. Together with $\Delta$ we consider its dual $\Delta^d_{n,l}=(E^T,H^T)$ of the  form:
\begin{align*}
E^T\dot{x}^d=H^Tx^d,
\end{align*}
where $x^d\in \mathbb{R}^l$ is \red{the ``generalized'' state of the dual system}.
\begin{pro}\label{explifdual}
	Consider \red{a DAE} $\Delta$ and its dual $\Delta^d$. Then $\Lambda\!=\!(A,B,C,D)\in {\rm Expl}(\Delta)$ if and only if $\Lambda^d=(A^T,C^T,B^T,D^T)\in {\rm Expl}(\Delta^d)$.
\end{pro}
\begin{proof}
	\red{For any invertible matrices $Q$ and $P$ of appropriate sizes that yield (\ref{Eq:ex-equi}), we have the following equivalence:}
	\begin{align*}
	{Q}\left( {sE - H} \right)P^{-1} =\left[ {\begin{matrix}
		{s{I_q} - {A}}&{{-B}}\\
		{{-C}}&{{-D}}
		\end{matrix}} \right]\Leftrightarrow  {P^{ - T}}\left( {s{E^T} - {H^T}} \right){Q^T}=\left[ {\begin{matrix}
		{s{I_q} - {{A}^T}}&{{{{-C}}^T}}\\
		{{{{-B}}^T}}&{-D}^T
		\end{matrix}} \right] .
	\end{align*}
	Suppose $\Lambda\in {\rm Expl}(\Delta)$, \red{then by Theorem \ref{main theorem}(i),} there exist $Q\in Gl(l,\mathbb{R})$ and $P\in Gl(n,\mathbb{R})$, such that the left-hand side of the above equivalence holds. Then from the right-hand side we can see \red{that}
	$\Lambda^d\in {\rm Expl}(\Delta^d)$.
	
	Conversely, suppose $\Lambda^d\in {\rm Expl}(\Delta^d)$. Then there exist $P^{-T}\in Gl(n,\mathbb{R})$ and $Q^{-T}\in Gl(l,\mathbb{R})$ such that the right-hand side of the above equivalence holds, then from the left-hand side we can see \red{that} $\Lambda\in {\rm Expl}(\Delta)$. 
\end{proof}
\begin{pro}[subspaces of the dual system]\label{Subspaces of the dual system}
	For $\Delta=(E,H)$ and its dual $\Delta^d=(E^T,H^T)$, consider the subspaces $\mathscr V^*$ and $\mathscr W^*$ of Definition \ref{DAE invariances}. For two control systems $\Lambda=(A,B,C,D)\in {\rm Expl}(\Delta)$ and the dual \red{$\Lambda^d$} of $\Lambda$, \red{given} by $\Lambda^d=(A^T,C^T,B^T,D^T)$, consider the subspaces $\mathcal{V}^*$ and $\mathcal{W}^*$ of  Definition \ref{ControlsysinvariancewithD}. Finally, for the prolongation of $\Lambda$, denoted by $\mathbf{\Lambda}=(\mathbf{A},\mathbf{B},\mathbf{C})$ and for the dual $\mathbf{\Lambda}^d$ of $\mathbf{\Lambda}$, \red{given} by $ \mathbf{\Lambda}^d=(\mathbf{A}^T,\mathbf{C}^T,\mathbf{B}^T)$, consider the subspaces $\mathbfcal{V}^*$ and $\mathbfcal{W}^*$ of Definition \ref{ControlsysinvariancewithoutD}. Then the following holds:
	\begin{itemize}
		\vspace{0.2cm}
		\item [(i)] ${\mathscr W}^*(\Delta^d)=(E{\mathscr V}^*(\Delta))^\bot$ and  ${\mathscr V}^*(\Delta^d)=(H{\mathscr W}^*(\Delta))^\bot$;
			\vspace{0.2cm}
		\item [(ii)] $\mathcal{W}^*(\Lambda^d)=(\mathcal{V}^*(\Lambda))^\bot$ and $\mathcal{V}^*(\Lambda^d)=(\mathcal{W}^*(\Lambda))^\bot$;
			\vspace{0.2cm}
		\item [(iii)] $\mathbfcal{W}^*(\mathbf{\Lambda}^d)=(\mathbfcal{V}^*(\mathbf{\Lambda}))^\bot$ and  $\mathbfcal{V}^*(\mathbf{\Lambda}^d)=(\mathbfcal{W}^*(\mathbf{\Lambda}))^\bot$.
			\vspace{0.2cm}
	\end{itemize}
Moreover, assuming one of the items (i), (ii), or (iii), we can conclude the two remaining ones by the relations given in Proposition \ref{relation invariant subs}.
\end{pro}
Note that item (i) is proved in \cite{Berger2012} by showing that for $i\in \mathbb N$,
\begin{align*}
{\mathscr W}_{i+1}(\Delta^d)=(E{\mathscr V}_{i}(\Delta))^\bot, \ \ \ {\mathscr V}_i(\Delta^d)=(H{\mathscr W}_i(\Delta))^\bot.
\end{align*}
Item (iii) is proved in \cite{morse1973structural} by showing $\mathbfcal{W}_i(\mathbf{\Lambda}^d)=(\mathbfcal{V}_i(\mathbf{\Lambda}))^\bot$,  $\mathbfcal{V}_i(\mathbf{\Lambda}^d)=(\mathbfcal{W}_i(\mathbf{\Lambda}))^\bot$. Item (ii) is proved in \cite{molinari1978structural} by showing $\mathcal{W}_i(\Lambda^d)=(\mathcal{V}_i(\Lambda))^\bot$,  $\mathcal{V}_i(\Lambda^d)=(\mathcal{W}_i(\Lambda))^\bot$ as well as observing a supplementary relation $\mathcal{U}_i(\Lambda^d)=(\mathcal Y_i(\Lambda))^\bot$, $\mathcal Y_i(\Lambda^d)=(\mathcal{U}_i(\Lambda))^\bot$. Our purpose is \red{to propose a new proof in Section \ref{sec:propolem} to show} that knowing one of the items (i), (ii) or (iii), we do not need to prove the two others but just to use the relations of Proposition \ref{relation invariant subs} (between $\mathscr V^*$, $\mathbfcal{V}^*$, $\mathcal{V}^*$ and $\mathscr W^*$, $\mathbfcal{W}^*$, $\mathcal{W}^*$) to simply conclude them. In other words, Proposition \ref{relation invariant subs} provides a dictionary allowing to go from one of (i), (ii), or (iii) to two remaining ones.
\section{Relations between the Kronecker invariants and the Morse invariants}\label{Chap1sec5} 
In this section, we discuss relations of the Kronecker invariants and the Morse invariants (see the Appendix). \red{An} early result   discussing these two \red{sets of} invariants goes back to \cite{kaiman1972kronecker}, where it is observed that the controllability indices of the pair $(A,B)$ and the Kronecker column indices of \red{the} matrix pencil \mage{$sE-H$, where $E=[I,0]$ and $H=[A,B]$,} coincide, which can be seen as a special case of the \red{result} in this section. Also in \cite{loiseau1985some}, it is shown that the Morse indices of \red{the} triple $(A,B,C)$ have direct relations with the Kronecker indices of the matrix pencil (called restricted matrix pencil, see \cite{jaffe1981matrix}) $N(sI-A)K$, where \red{the rows of $N$ span the} annihilator of ${\rm Im\,}B$ and \red{the colunms of $K$ span} $\ker C$.

It is known (see Appendix) that \red{any DAE can be transformed into its} \textbf{KCF} which is completely determined by the Kronecker invariants $\varepsilon_1,...,\varepsilon_a$, $\rho_1,...,\rho_b$, $\sigma_1,...,\sigma_c$, $\eta_1,...,\eta_d$, the numbers $a,b,c,d$ of blocks  and  the $(\lambda_{\rho_1},...,\lambda_{\rho_b})$-structure \mage{(by  the later we mean the eigenvalues, together with the dimensions $\rho_1,...,\rho_b$ of the corresponding blocks)}. The Kronecker invariants (except for $\rho_i$'s and the corresponding eigenvalues $\lambda_{\rho_i}$'s) can be computed using the Wong sequences as follows. For a DAE $\Delta=(E,H)$, consider the Wong sequences $\mathscr V_i$ and $\mathscr W_i$ of Definition \ref{VWrealized}, define $\mathscr{K}_i={\mathscr W}_i\cap {\mathscr V}^*$ and $\hat {\mathscr{K}}_i=(E{\mathscr V}_{i-1})^{ \bot }\cap (H{\mathscr W}^*)^{ \bot }$ for $i\in \mathbb{N}^+$.  
\begin{lem}[\rm \cite{Berger2012},\cite{berger2013addition}]\label{invariants of DAEs}
	For the \textbf{KCF} of $\Delta$, we have
	
	(i) $a=\dim\, (\mathscr{K}_1)$, $d=\dim\, (\hat{\mathscr{K}_1})$ and 
	\begin{align}
	\left\{\begin{array}{*{20}{c}}
	{\begin{array}{*{20}{l}}
		{{\varepsilon_j}}=0,\\
		{{\varepsilon_j}}=i,
		\end{array}}&{\begin{array}{*{20}{l}}
		{for}\\
		{for}
		\end{array}}&{\begin{array}{*{20}{c}}
		{1\le j \le a- \omega_0},\\
		{a - { \omega_{i - 1}} + 1\le j \le a-{ \omega_{i}}},
		\end{array}}
	\end{array} \right. \label{varepsilon}
\\
	\left\{\begin{array}{*{20}{c}}
	{\begin{array}{*{20}{l}}
		{{\eta_j}}=0,\\
		{{\eta_j}}=i,
		\end{array}}&{\begin{array}{*{20}{l}}
		{for}\\
		{for}
		\end{array}}&{\begin{array}{*{20}{c}}
		{1\le j \le d-\hat  \omega_0},\\
		{d- {\hat  \omega_{i - 1}} + 1\le j \le d-{\hat  \omega_{i}}},
		\end{array}}
	\end{array} \right. \label{eta}
	\end{align}
	where ${{ \omega _i} = \dim\, \left( {{\mathscr{K}_{i + 2}}} \right) - \dim\, \left( {{\mathscr{K}_{i+1}}} \right)}$ and ${{{\hat \omega  }_i} = \dim\, ( {{\hat {\mathscr{K}}_{i + 2}}} ) - \dim\, ( {{\hat {\mathscr{K}}_{i+1}}} )}$, $i\in \mathbb{N}$.
	
	(ii)  \red{Define} an integer $\nu $ by
	\begin{align}\label{nu}
	\nu={\rm min}\{i\in \mathbb{N}\,|\,{\mathscr V^*}+{\mathscr W}_i={\mathscr V^*}+{\mathscr W}_{i+1}\};
	\end{align}
	Then either $\nu=0$, implying that the nilpotent part $N(s)$ is absent, or $\nu>0$, in which case $c=\pi_0$ and 
	\begin{align}\label{sigma}		
	\begin{array}{*{20}{c}}
	{\sigma_j=i},&for&{{c - {\pi _{i - 1}} + 1}}\le j\le{c - {\pi _i}},&{ i=1,2,...,\nu},
	\end{array}	
	\end{align}
	where ${\pi _i} = \dim\, ({{{\mathscr W}}_{i + 1}} + {{\mathscr V}^*}) - \dim\, ({{ {\mathscr W}}_i} + { {\mathscr V}^*})$ for $i=0,1,2,...,\nu$ (in \red{the} case of $\pi_{i - 1}=\pi_i$, the respective index range is empty).
\end{lem}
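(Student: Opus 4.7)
The plan is to reduce everything to the Kronecker canonical form using the invariance of the Wong sequences under ex-equivalence, and then carry out the dimension counts block by block. The first step is to verify that if $\Delta \sim^{ex} \tilde\Delta$ via $(Q,P)$, then $P\mathscr V_i(\Delta) = \mathscr V_i(\tilde\Delta)$ and $P\mathscr W_i(\Delta) = \mathscr W_i(\tilde\Delta)$ for all $i\in\mathbb N$. This is almost immediate from the recursions $\mathscr V_{i+1}=H^{-1}E\mathscr V_i$ and $\mathscr W_{i+1}=E^{-1}H\mathscr W_i$, since $\tilde E = QEP^{-1}$, $\tilde H = QHP^{-1}$, and the factor $Q$ cancels when inverting $\tilde H$ or $\tilde E$. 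Consequently $P\mathscr V^* = \tilde{\mathscr V}^*$, $P\mathscr W^* = \tilde{\mathscr W}^*$, and hence $P\mathscr K_i = \tilde{\mathscr K}_i$; the dual identity handles $\hat{\mathscr K}_i$ via Proposition~\ref{Subspaces of the dual system}(i). So without loss of generality we may assume $\Delta$ is already in \textbf{KCF}.

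Since \textbf{KCF} is block diagonal, all the subspaces $\mathscr V_i$, $\mathscr W_i$, $\mathscr K_i$, $\hat{\mathscr K}_i$ decompose as direct sums of the corresponding subspaces computed for each individual block. Thus the task reduces to computing these sequences for the four block types $L_{\varepsilon}$, $J_{\rho}^{\lambda}$, $N_{\sigma}$, $L_{\eta}^T$ separately. For an $L_{\varepsilon}$ block (of size $\varepsilon \times (\varepsilon+1)$) one shows that $\mathscr V^* = \mathbb R^{\varepsilon+1}$ and $\dim \mathscr W_i = \min(i,\varepsilon+1)$, giving $\dim\mathscr K_i = \min(i,\varepsilon+1)$. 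For a $J_{\rho}^{\lambda}$ block one gets $\mathscr V^* = \mathbb R^{\rho}$, $\mathscr W^* = \{0\}$ (since $E=I$ is invertible), so this block contributes zero to every $\mathscr K_i$ and to every $\hat{\mathscr K}_i$. For an $N_{\sigma}$ block, $\mathscr V^* = \{0\}$ while $\dim \mathscr W_i = \min(i,\sigma)$, so it contributes nothing to $\mathscr K_i$ (but it will contribute to $\pi_i$ through $\mathscr W_i + \mathscr V^*$). For an $L_{\eta}^T$ block the roles are dual: $\mathscr W^* = \{0\}$ and by Proposition~\ref{Subspaces of the dual system}(i) applied to the dual (which is an $L_{\eta}$ block), one extracts $\dim\hat{\mathscr K}_i = \min(i,\eta+1)$.

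Adding up block contributions yields $\dim\mathscr K_i = \sum_{j=1}^a \min(i,\varepsilon_j+1)$ and $\dim\hat{\mathscr K}_i = \sum_{j=1}^d \min(i,\eta_j+1)$, whence $a = \dim \mathscr K_1$, $d = \dim\hat{\mathscr K}_1$ follow immediately. The differences $\omega_i = \dim\mathscr K_{i+2}-\dim\mathscr K_{i+1}$ then count exactly the number of $L_{\varepsilon_j}$ blocks with $\varepsilon_j \geq i+1$, so arranging the $\varepsilon_j$'s in non-decreasing order produces the stated formula~\eqref{varepsilon}; the same argument dualized gives~\eqref{eta}. For the nilpotent indices, one observes that only the $N_{\sigma_j}$ blocks contribute to $(\mathscr W_i + \mathscr V^*)/\mathscr V^*$, and on each such block $\dim(\mathscr W_{i+1}+\mathscr V^*) - \dim(\mathscr W_i+\mathscr V^*)$ equals $1$ exactly when $\sigma_j \geq i+1$; thus $\pi_i$ counts $\sigma_j$'s of size at least $i+1$, yielding $c = \pi_0$ and~\eqref{sigma}. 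The integer $\nu$ of~\eqref{nu} is then simply $\max_j \sigma_j$, and the stabilization of $\mathscr W_i + \mathscr V^*$ happens exactly at this index.

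The main obstacle is not conceptual but bookkeeping: one must carefully verify the per-block computations of $\mathscr V_i$ and $\mathscr W_i$ (in particular that the $L_\eta^T$ blocks contribute nothing to $\mathscr K_i$ and the $L_\varepsilon$ blocks contribute nothing to $\hat{\mathscr K}_i$, which is where the duality of Proposition~\ref{Subspaces of the dual system}(i) plays a decisive role), and then check that the index-counting identity ``$\sum_j \min(i,\alpha_j+1)$ has successive differences equal to the number of $\alpha_j \geq i+1$'' translates correctly into the index assignment rule~\eqref{varepsilon}--\eqref{sigma} after the indices are ordered. An alternative route, avoiding the blockwise case analysis, would be to base the argument on the quasi-Kronecker triangular decomposition of \cite{Berger2012,berger2013addition} already built from $\mathscr V^*$ and $\mathscr W^*$, from which the $L_\varepsilon$, $L_\eta^T$, and $N_\sigma$ indices can be read off directly; the block-by-block strategy is however more transparent for pedagogical purposes.
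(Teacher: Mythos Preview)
The paper does not supply its own proof of this lemma: it is quoted verbatim from \cite{Berger2012,berger2013addition} and used as a black box in the proof of Proposition~\ref{invariants relation}. So there is no ``paper's proof'' to compare against.

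That said, your argument is correct and is essentially the route taken in the cited references. The reduction to \textbf{KCF} via ex-invariance of the Wong sequences is standard (and is exactly the observation the paper records at the start of Section~\ref{Pf:relation invariant subs}); the per-block computations you sketch are accurate, including the key identifications $\mathscr K_i(\Delta)=\mathscr W_i(\Delta)\cap\mathscr V^*(\Delta)$ with contributions only from the $L_{\varepsilon}$ blocks, and $\hat{\mathscr K}_i(\Delta)=\mathscr W_i(\Delta^d)\cap\mathscr V^*(\Delta^d)$ via Proposition~\ref{Subspaces of the dual system}(i) with contributions only from the $L^p_{\eta}$ blocks. The combinatorial step (successive differences of $\sum_j\min(i,\alpha_j+1)$ count the number of $\alpha_j\ge i$) is routine. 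One small notational point: the paper writes $L^p_\eta$ (pertranspose) rather than $L_\eta^T$.
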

Any control system $\Lambda=(A,B,C,D)$ can be transformed via \red{a} Morse transformation  into its Morse canonical form \textbf{MCF}, which is determined by the Morse indices $\varepsilon'_1,...,\varepsilon'_{a'}$, $\rho'_1,...,\rho'_{b'}$, $\sigma'_1,...,\sigma'_{c'}$, $\eta'_1,...,\eta'_{d'}$, the  $(\lambda_{\rho'_1},..,\lambda_{\rho'_{b'}})$-structure and the numbers $a', b', c', d'\in \mathbb{N}$ of blocks. The following results can be deduced from the results on the Morse indices \red{in} \cite{morse1973structural},\cite{molinari1978structural}. For $\Lambda=(A,B,C,D)$, consider the subspaces $\mathcal{V}_i$, $\mathcal{W}_i$, $\mathcal{U}_i$, $\mathcal Y_i$ as in Lemma~\ref{barVWrealized}, define $\mathcal R_i=\mathcal{W}_i\cap \mathcal{V}^*$ and $\hat{\mathcal R}_i=(\mathcal{V}_i)^{ \bot }\cap (\mathcal{W}^*)^{ \bot }$  for $i\in \mathbb{N}$.
\begin{lem}\label{invariants of contrsys}
	For the \textbf{MCF} of $\Lambda$, we have
	
	(i) $a'=\dim\, (\mathcal{U}^*)$, $d'=\dim\, (\mathcal Y^*)$ and
	\begin{align}\label{alpha}
	\left\{\begin{array}{*{20}{c}}
	{\begin{array}{*{20}{l}}
		{{\varepsilon'_j}}=0\\
		{{\varepsilon'_j}}=i
		\end{array}}&{\begin{array}{*{20}{l}}
		{for}\\
		{for}
		\end{array}}&{\begin{array}{*{20}{c}}
		{1\le j \le a'-\omega' _0},\\
		{a' - {\omega'_{i - 1}} + 1\le j \le a'-{\omega'_{i}}},
		\end{array}}
	\end{array} \right.
	\end{align}
	\begin{align}\label{tau}
	\left\{\begin{array}{*{20}{c}}
	{\begin{array}{*{20}{l}}
		{{\eta'_j}}=0\\
		{{\eta'_j}}=i
		\end{array}}&{\begin{array}{*{20}{l}}
		{for}\\
		{for}
		\end{array}}&{\begin{array}{*{20}{c}}
		{1\le j \le d'-\hat \omega'_0},\\
		{d'- {\hat \omega'_{i - 1}} + 1\le j \le d'-{\hat \omega'}},
		\end{array}}
	\end{array} \right.
	\end{align}
	where ${\omega' _i} = \dim\, \left( \mathcal R_{i + 1} \right) - \dim\, \left( \mathcal R_i \right)$ and ${{\hat \omega' }_i} = \dim\, ( \hat{\mathcal R}_{i+1} ) - \dim\, ( \hat{\mathcal R}_i )$, $i\in \mathbb{N}$.
	
	(ii)  \red{Define} an integer $\nu'$ \red{by}
	\[\nu' =min\{i\in \mathbb{N}\,|\,\mathcal V^*+\mathcal{W}_i=\mathcal V^*+\mathcal{W}_{i+1}\};\]
	Then $c'=\dim\, (\mathscr U)-\dim\, (\mathcal U^*)$, $\delta=c'-\pi'_0$ and
	\begin{align}\label{sigma'}
	\left\lbrace 	\begin{array}{*{20}{l}}
	{\sigma'_j=0}&for&1\le j \le\delta,\\
	{\sigma'_j=i}&for&{{c'- {\pi' _{i - 1}} + 1}}\le j\le{c'- {\pi' _i}},&{ i=1,2,...,\nu'},
	\end{array}	\right. 
	\end{align}
	where ${\pi' _i} =\dim\, ({\mathcal{W}_{i + 1}} + {\mathcal{V}^*}) - \dim\, ({\mathcal{W}_i} + {\mathcal{V}^*})$ for $i=0,1,2,...,\nu'$ (in case of $\pi'_{i - 1}=\pi'_i$ the respective index range is empty).
\end{lem}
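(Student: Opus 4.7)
The plan is to reduce Lemma \ref{invariants of contrsys} to the already-established Lemma \ref{invariants of DAEs} by means of the implicitation procedure of Section \ref{Chap1sec3}. Given $\Lambda=(A,B,C,D)$, form its implicitation $\Delta^{Impl}={\rm Impl}(\Lambda)$, a DAE on $\mathbb{R}^{q+m}$. By Theorem \ref{main theorem}(iii), Morse equivalence of control systems matches ex-equivalence of their implicitations, so the KCF of $\Delta^{Impl}$ is determined blockwise by the MCF of $\Lambda$ and vice-versa; hence it suffices to translate every statement of Lemma \ref{invariants of DAEs} for $\Delta^{Impl}$ into a statement about $\Lambda$.

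First, I would establish the block-by-block correspondence between MCF and KCF by direct inspection: an $MCF^1$ block indexed by $\varepsilon'_i$ produces, after output-zeroing, an $L_{\varepsilon'_i}$ Kronecker block; an $MCF^4$ block indexed by $\eta'_i$ produces an $L^T_{\eta'_i}$ block; an $MCF^3$ block indexed by $\sigma'_i$ produces an $N_{\sigma'_i+1}$ nilpotent block (the $+1$ shift comes from the extra algebraic equation contributed by the implicit ``$0=y$''); and the $MCF^2$ part produces the regular pencil $J$ carrying the $(\lambda_{\rho'_i})$-structure. This matches the count $a'$ with $a$ (of $\Delta^{Impl}$), $b'$ with $b$, $d'$ with $d$, and, modulo the $\delta$ adjustment addressed below, $c'$ with $c$.

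Second, I would translate the dimensional formulas. A straightforward induction on (\ref{Vrealized})--(\ref{Wrealized}) and (\ref{barV})--(\ref{barW}), whose base case is Proposition \ref{relation invariant subs}, shows that $\mathscr V_i(\Delta^{Impl})$ and $\mathscr W_i(\Delta^{Impl})$ are related to $\mathcal V_i(\Lambda)$ and $\mathcal W_i(\Lambda)$ via preimage/image formulas through the system matrices. Consequently the auxiliary subspaces $\mathscr K_i(\Delta^{Impl})$ and $\hat{\mathscr K}_i(\Delta^{Impl})$ of Lemma \ref{invariants of DAEs} correspond, up to a controlled shift of index, to $\mathcal R_i(\Lambda)$ and $\hat{\mathcal R}_i(\Lambda)$ of the present lemma, and the integers $\omega_i$, $\hat\omega_i$, $\pi_i$ transfer to $\omega'_i$, $\hat\omega'_i$, $\pi'_i$. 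Substituting into (\ref{varepsilon})--(\ref{sigma}) then yields (\ref{alpha})--(\ref{sigma'}).

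The main obstacle is the careful bookkeeping of the $\sigma'$ family, where Lemma \ref{invariants of contrsys} records $\delta = c'-\pi'_0$ zero-length indices absent from Lemma \ref{invariants of DAEs}. This reflects the fact that on the DAE side $\mathscr W_1=\ker E$ is the full ``input fiber'' of the implicitation in $\mathbb R^{q+m}$, forcing every $\sigma_j\geq 1$, whereas on the control-system side the chain $\mathcal W_i$ starts from the image of the combined input map $(B,D)$ and thus omits those input directions annihilated by $D$ but outside the effective reachable part; the omitted directions contribute the $\delta$ trivial $\sigma'$'s. Identifying $\delta$ with $\dim\, \mathscr U-\dim\, \mathcal U^*-\pi'_0$ amounts to reading off the second equality of Proposition \ref{relation invariant subs}(ii) at the $\ker E$ level, and constitutes the main book-keeping task of the argument.
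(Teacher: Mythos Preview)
The paper does not actually prove Lemma \ref{invariants of contrsys}: it introduces the lemma with the sentence ``The following results can be deduced from the results on the Morse indices in \cite{morse1973structural},\cite{molinari1978structural}'' and provides no argument of its own. So there is no ``paper's own proof'' to compare against in the usual sense.

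Your strategy---derive Lemma \ref{invariants of contrsys} from Lemma \ref{invariants of DAEs} via implicitation---is sound and is, in fact, essentially the computation the paper carries out later in the proof of Proposition \ref{invariants relation} (Section \ref{Pf:invariants relation}), only run in the opposite direction. There the authors take \emph{both} lemmas as given and show the invariants agree; you are proposing to take only Lemma \ref{invariants of DAEs} as given and deduce Lemma \ref{invariants of contrsys}. The ingredients you need (the index-by-index relations $\mathscr V_i(\Delta^{Impl})\leftrightarrow \mathcal V_i(\Lambda)$, $\mathscr W_i(\Delta^{Impl})\leftrightarrow \mathcal W_i(\Lambda)$, and their interaction with $\mathcal U_i$) are precisely equations (\ref{relation of VbarV}), (\ref{VbarVRelationpre}), (\ref{rela tildeWbarW}), (\ref{WbarW relationpre}) from the proof of Proposition \ref{relation invariant subs}, together with Lemma \ref{subs relationex}. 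Your phrase ``whose base case is Proposition \ref{relation invariant subs}'' is slightly off: Proposition \ref{relation invariant subs} as stated concerns only the limits $\mathscr V^*,\mathscr W^*$; what you actually need are the stepwise identities established inside its proof.

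One genuine inaccuracy: your explanation of $\delta$ is backwards. The paper records (just after Lemma \ref{invariants of contrsys}) that $\delta={\rm rank\,}D$, and in the \textbf{MCF} the $\delta$ indices $\sigma'_j=0$ correspond to the $I_\delta$ block in $D^3$, i.e.\ to input directions that \emph{pass straight through} $D$ to the output as $y^3_i=u^3_i$, not to ``input directions annihilated by $D$''. On the KCF side these are exactly the $1\times 1$ nilpotent blocks $N_1(s)$, so nothing is ``omitted'': all $c=c'$ blocks match, with the uniform shift $\sigma_j=\sigma'_j+1$ accounting for the extra algebraic row. The bookkeeping you flag is real, but it is this shift (handled in the paper via $\pi_i=\pi'_{i-1}$ and $\nu=\nu'+1$), not a discrepancy in the number of blocks.
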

Note that for $\Lambda=(A,B,C,D)$, the above index $\delta={\rm rank\,}D$. Formal similarities between the statements of Lemma \ref{invariants of DAEs} and \ref{invariants of contrsys} suggest possible relations between the Kronecker and the Morse invariants. In fact, we have the following result.
\begin{pro}[\mage{invariants} relations]\label{invariants relation}
	For a  DAE $\Delta_{l,n}=(E,H)$, consider its Kronecker invariants  $$(\varepsilon_1,...,\varepsilon_a), \ (\rho_1,...,\rho_b), \ (\sigma_1,...,\sigma_c), \  (\eta_1,...,\eta_d), \  (\lambda_{\rho_1},...,\lambda_{\rho_b}) \  {with} \  a,b,c,d\in \mathbb{N},$$ of the \textbf{KCF}, and for a control system $\Lambda_{q,m,p}=(A,B,C,D)\in {\rm Expl}(\Delta)$, consider its Morse invariants $$(\varepsilon'_1,...,\varepsilon'_{a'}), \  (\rho'_1,...,\rho'_{b'}), \  (\sigma'_1,...,\sigma'_{c'}), \  (\eta'_1,...,\eta'_{d'}), \  (\lambda_{\rho'_1},..,\lambda_{\rho'_{b'}}) \  {with} \ a', b', c', d'\in \mathbb{N}, $$ of the \textbf{MCF}. Then the following holds:
	
	(i)  $a=a'$, $\varepsilon_1=\varepsilon'_1, \cdots,\varepsilon_a=\varepsilon'_{a'},$    and    $d=d'$, $\eta_1=\eta'_1,\dots,\eta_d=\eta'_{d'}$;
	
	(ii) $N(s)$ of the \textbf{KCF} is present if and only if the subsystem $MCF^3$ of the \textbf{MCF} is present. Moreover, if they are present, then their invariants satisfy $$c=c', \ \ \sigma_1=\sigma'_1+1,\dots,\sigma_c=\sigma'_{c'}+1;$$	
	\indent	(iii) The invariant factors of $J(s)$ in the \textbf{KCF} of $\Delta$ coincide with \red{those} of $MCF^2$ in the \textbf{MCF} of $\Lambda$. Furthermore, the corresponding invariants \red{satisfy} $$b=b',\ \ \rho_1=\rho'_1,\dots,\rho_b=\rho'_{b'}, \ \ \lambda_{\rho_1}=\lambda_{\rho'_1},\dots,\lambda_{\rho_b}=\lambda_{\rho'_{b'}}.$$ 
\end{pro}
The proof is given in Section \ref{Pf:invariants relation}. \red{Notice that in item (ii) of Proposition \ref{invariants relation}, the invariants $\sigma_i$ and $\sigma'_i$ do not coincide but differ by one, the reason is that the nilpotent indices $\sigma_1,\dots,\sigma_c$  of $N(s)$ can not be zero (the minimum nilpotent index is $1$ and if $\sigma_i$ is $1$, then $N(s)$ contains the $1\times 1$ matrix pencil $0\cdot s-1$), but the controllability and observability indices $\sigma'_1,\dots,\sigma'_{c'}$ of $MCF^3$ can be zero (if $\sigma'_i=0$, then the output $y^3$ of $MCF^3$ contains the static relation $y^3_i=u^3_i$).} It is easy to see from Proposition~\ref{invariants relation} \red{that}, given a DAE, there exists a \red{perfect} correspondence between the \textbf{KCF} of the DAE and the \textbf{MCF} of its explicitation systems. More specifically, the four parts of the \textbf{KCF} correspond to the four subsystems of the \textbf{MCF}: the bidiagonal pencil $L(s)$ to the controllable but unobservable part $MCF^1$, the Jordan pencil $J(s)$ to the uncontrollable and unobservable part $MCF^2$, the nilpotent pencil $N(s)$ to the prime part $MCF^3$ and the ``pertranspose'' pencil $L^p(s)$ to the observable but uncontrollable part $MCF^4$.
\section{Internal equivalence and regularity of DAEs}\label{Chap1sec6}
An important difference between DAEs and ODEs is that DAEs are not always solvable and solutions of DAEs  exist on a subspace of the ``generalized'' state space \red{only} due to the presence of algebraic constrains. In the following, we show that the existence and uniqueness of  solutions of DAEs can be clearly explained using the \emph{explicitation} procedure and \red{the} notion of internal equivalence (see Definition \ref{in-equivalent} \red{below}).
\begin{defn}\label{Max invariant subspace}
	A linear subspace $\mathscr M$ of $\mathbb{R}^n$,  is called an invariant subspace of $\Delta_{l,n}=(E,H)$ if for any $x^0\in \mathscr M$, there \red{exists} a solution $x(t,x^0)$ of $\Delta$ \red{such that $x(0,x^0)=x^0$} and \red{$x(t,x^0)\in \mathscr M$ for all $t\in \mathbb{R}$}. An invariant subspace $\mathscr M^*$ of $\Delta_{l,n}=(E,H)$ is called the maximal invariant subspace if for any other invariant subspace $\mathscr M$ of $\mathbb{R}^n$, we have ${\mathscr M}\subseteq \mathscr M^*$.
\end{defn}
\begin{rem}
 Note that due to the existence of free variables among the ``generalized'' states, solutions of $\Delta$ are not unique. Thus it is possible that  one solution of $\Delta$ starting at $x^0\in \mathscr M$ stays in $\mathscr M$ but   other solutions starting at $x^0$ may escape from $\mathscr M$ (either immediately or \red{after a} finite time). 
\end{rem}
It is clear that the sum $\mathscr M_1+\mathscr M_2$ of two invariant subspaces of $\Delta$ is also invariant. Therefore, $\mathscr M^*$ exists and is, actually, the sum of all invariant subspaces.  If $\mathscr M$ is an invariant subspace of $\Delta_{l,n}$, then solutions pass through any $x^0\in \mathscr M$ and it is natural to restrict $\Delta$ to $\mathscr M$, in particular, to the largest invariant subspace $\mathscr M^*$. Moreover, we would like the restriction to be as simple as possible. We achieve the above goals by introducing, respectively, the notion of \emph{restriction} and that of \emph{reduction}. We will define the restriction of a DAE $\Delta$ to a linear subspace $\mathscr R$ (invariant or not) as follows.
\begin{defn}[restriction]\label{Def:restri_linearDAE}
	Consider a linear DAE $\Delta_{l,n}=(E,H)$.  Let $\mathscr R$ be a subspace of $\mathbb R^n$.
	The  restriction of $\Delta$ to $\mathscr R$, called $\mathscr R$-restriction  of $\Delta$ and denoted $\Delta|_{\mathscr R}$  is a linear DAE $\Delta|_{\mathscr R}=(E|_{\mathscr R},H|_{\mathscr R})$, where $ E|_{\mathscr R}$ and $ H|_{\mathscr R}$ are, respectively, the restrictions of the linear maps $E$ and $H$ to the linear subspace $\mathscr R$.  
\end{defn}
Throughout, we consider general DAEs $\Delta_{l,n}=(E,H)$ with no assumptions on the ranks of $E$ and $H$. In particular, if \red{the map $[E \ H]$} is not of full row rank, then $\Delta_{l,n}$ contains redundant equations. But even if we assume that $[E \ H]$ is of full row rank, then this property, in general, is not any longer true for the \red{restricted map} $[E|_{\mathscr R}\ H_{\mathscr R}]$, which may contain redundant equations. To get rid of redundant equations (in particular, of trivial algebraic equations $0=0$), we propose the notion of full row rank reduction.
\begin{defn}[reduction]\label{Def:redlinDAE}
	For a DAE $\Delta_{l,n}=(E,H)$ \mage{on $\mathscr X\cong\mathbb R^n$}, assume ${\rm rank\,}[E \ H]=l^*\le l$. Then there exists $Q\in Gl(l,\mathbb R^n)$ such that 
	$$
	Q\left[ \begin{matrix}
	E&H
	\end{matrix}\right] =\left[ \begin{matrix}
	E^{red}&H^{red}\\
	0&0
	\end{matrix}\right],
	$$
	where ${\rm rank\,}[E^{red}\ H^{red}]=l^*$ and the full row rank reduction, shortly reduction, of $\Delta_{l,n}$, denoted by $\Delta^{red}$, is   a DAE $\Delta^{red}_{l^*,n}=\Delta^{red}=(E^{red},H^{red})$ on $\mathscr X\cong\mathbb R^n$.
\end{defn}
\begin{rem}
	Clearly, the choice of $Q$ is not unique \red{and thus} the reduction of $\Delta$ is not unique. Nevertheless, since $Q$ preserves the solutions, each reduction $\Delta^{red}$ has the same solutions as the original DAE $\Delta$.
\end{rem} 
For an invariant subspace $\mathscr M$, we consider the $\mathscr M$-restriction $\Delta|_{\mathscr M}$ of $\Delta$, and then we construct a reduction of $\Delta|_{\mathscr M}$ and denote it by $\Delta|^{red}_{\mathscr M}=(E|^{red}_{\mathscr M},H|^{red}_{\mathscr M})$. Notice that the order matters: to construct $\Delta|^{red}_{\mathscr M}$, we first restrict and then reduce while  reducing first and then restricting will, in general, not give $\Delta|^{red}_{\mathscr M}$ but another DAE $\Delta^{red}|_{\mathscr M}$.
\begin{pro}\label{Pro:inv-subspace}
	Consider a linear DAE $\Delta_{l,n}=(E,H)$.  Let $\mathscr M$ be a subspace of $\mathbb R^n$. The following   are equivalent
	\begin{itemize}
		\item [(i)] $\mathscr M$ is an invariant subspace of $\Delta_{l,n}$;
		
		\item [(ii)] $H\mathscr M\subseteq E \mathscr M$;
		
		\item [(iii)] For a (and thus any) reduction   $\Delta|^{red}_{\mathscr M}=(E|^{red}_{\mathscr M},H|^{red}_{\mathscr M})$ of $\Delta|_{\mathscr M}$, the map $E|^{red}_{\mathscr M}$ is of full row rank, i.e., ${\rm rank\,}E|^{red}_{\mathscr M}={\rm rank\,}[E|^{red}_{\mathscr M}\ H|^{red}_{\mathscr M}]$.
	\end{itemize}	
\end{pro}
\begin{proof}
	(i)$\Leftrightarrow$(ii): Theorem 4 of \cite{berger2016controlled}, for $B=0$, implies that $\mathscr M$ is an invariant subspace if and only if $H\mathscr M\subseteq E\mathscr M$.
	
	(ii)$\Leftrightarrow$(iii): For $\Delta_{l,n}=(E,H)$, choose a full column rank matrix $P_1\in \mathbb R^{n\times n_1}$ such that ${\rm Im\,}P_1=E\mathscr M$, where $n_1=\dim \mathscr M$. Find any $P_2\in \mathbb R^{n\times n_2}$ such that the matrix $[P_1\ P_2]$ is invertible, where $n_2=n-n_1$. Choose new coordinates $z=Px$, where $P=[P_1\ P_2]^{-1}$, then we have
	$$
	\Delta:EP^{-1}P\dot x=HP^{-1}Px\Rightarrow [E_1\ E_2]\left[ \begin{matrix}
	\dot z_1\\ \dot z_2
	\end{matrix}\right]= [H_1\ H_2]\left[ \begin{matrix}
	z_1\\    z_2
	\end{matrix}\right],$$
	where $E_1=EP_1$, $E_2=EP_2$, $H_1=HP_1$, $H_2=HP_2$, and $z=(z_1,z_2)$. Now by Definition \ref{Def:restri_linearDAE}, the $\mathscr M$-restriction of $\Delta$ is:
	$$
	\Delta|_{\mathscr M}:E_1\dot z_1=H_1z_1.
	$$
	Find   $Q\in Gl(l,\mathbb R)$ such that $QE_1=\left[ \begin{matrix}
	\tilde E_1\\    0
	\end{matrix}\right]$, where $\tilde E_1$ is of full row rank, then denote $QH_1=\left[ \begin{matrix}
	\tilde H_1\\   \bar H_1
	\end{matrix}\right]$. By $H\mathscr M\subseteq E\mathscr M$, we can deduce that $\bar H_1=0$ (since $QH\mathscr M\subseteq QE\mathscr M\Rightarrow{\rm Im}\left[ \begin{matrix}
	\tilde H_1\\   \bar H_1
	\end{matrix}\right]\subseteq{\rm Im}\left[ \begin{matrix}
	\tilde E_1\\    0
	\end{matrix}\right]$). Thus a reduction of $\Delta|_{\mathscr M}$, according to Definition \ref{Def:redlinDAE}, is $\Delta|^{red}_{\mathscr M}=(E|^{red}_{\mathscr M},H|^{red}_{\mathscr M})=(\tilde E_1,\tilde H_1)$. Clearly $E|^{red}_{\mathscr M}$ is of full row rank.
\end{proof}
Define $\Lambda|_{(\mathcal V^*,\mathcal U^*)}$ as the control system $\Lambda=(A,B,C,D)$ restricted to $\mathcal V^*$ (which is well-defined because $\mathcal V^*$ can be made invariant by a suitable feedback) and with controls $u$ restricted to $\mathcal U^*=(B^{-1}\mathcal V^*)\cap \ker D$. The output $y=Cx+Du$ of $\Lambda$ becomes $y=0$ and  $\Lambda|^{red}_{(\mathcal V^*,\mathcal U^*)}$ \red{is, by its construction, the system $\Lambda|_{(\mathcal V^*,\mathcal U^*)}$ without the trivial output $y=0$.}
\begin{pro}\label{Pro:M}
	For a DAE $\Delta_{l,n}=(E,H)$, consider its maximal invariant subspace $\mathscr M^*$ and its largest $(E^{-1},H)$-invariant subspace ${\mathscr V^*}$. Then we have
	
	(i) ${\mathscr M^*}$=${\mathscr V^*}$;
	
	(ii) Let $\Lambda\in {\rm Expl}(\Delta)$ and $\Lambda^*\in {\rm Expl}(\Delta|^{red}_{\mathscr M^*})$. Then $\Lambda|^{red}_{(\mathcal V^*,\mathcal U^*)}$ and  $\Lambda^*$ are explicit control systems without outputs \red{i.e., the \textbf{MCF} of the two control systems has no $MCF^3$ and $MCF^4$ parts,} and $\Lambda|^{red}_{(\mathcal V^*,\mathcal U^*)} $ is feedback equivalent to $  \Lambda^*$.
\end{pro}
The proof is given in Section \ref{Pf:Pro:M}.  \red{Using the reduction of $\mathscr M^*$-restriction and the ex-equivalence} of DAEs, we define the internal equivalence of two DAEs as follows.
\begin{defn}\label{in-equivalent}
	For two DAEs  $ \Delta_{l,n}=(E,H)$ and $\tilde \Delta_{\tilde l,\tilde n}=(\tilde E,\tilde H)$, let $\mathscr M^*$ and $\tilde {\mathscr M}^*$ be the maximal invariant subspace of  $ \Delta$ and $\tilde \Delta$, respectively. Then $ \Delta$ and $\tilde \Delta$ are called internally equivalent, shortly in-equivalent,  if $\Delta|^{red}_{\mathscr M^*}$ and $\tilde \Delta|^{red}_{\tilde {\mathscr M}^*}$ are ex-equivalent \red{and we will} denote the in-equivalence of two DAEs as $ \Delta\mathop  \sim \limits^{in} \tilde \Delta$.
\end{defn}
\begin{rem}
	A similar definition to the above internal equivalence above is given in \cite{berger2015regularization},  called the behavioral equivalence, \red{proposed} via \red{the} behavioral approach of DAEs. \red{A} difference between the internal equivalence and  the behavioral equivalence is that, in the definition of internal equivalence, two DAEs are not necessarily of the same dimension, \red{we only require  their reductions of $\mathscr M^*$-restrictions to be of the same dimension (since they are ex-equivalent), but for the behavioral equivalence, the two DAEs are required to have the same dimension.}
\end{rem}
Any $\Lambda^*\in {\rm Expl}(\Delta|^{red}_{\mathscr M^*})$ is an explicit system without outputs (see Proposition \ref{Pro:M}(ii)) and denote the dimensions of its state space and input space by $n^*$ and $m^*$, respectively, and its corresponding matrices by $A^*$, $B^*$ and thus $\Lambda^*_{n^*,m^*}=(A^*,B^*)$.
\begin{thm}\label{in-equi}
	Let $\mathscr M^*$ and $\tilde {\mathscr M}^*$ be the maximal invariant subspaces of  $\Delta$ and $\tilde \Delta$, respectively. Consider two control systems:
	\begin{align*}
	\Lambda^*=(A^*,B^*)\in {\rm Expl}(\Delta|^{red}_{\mathscr M^*}), \ \ \ \tilde \Lambda^*=(\tilde A^*,\tilde B^*)\in {\rm Expl}(\tilde \Delta|^{red}_{\tilde {\mathscr M}^*}).
	\end{align*}
	Then the following is equivalent:
	\begin{itemize}
		\item [(i)] $\Delta\mathop  \sim \limits^{in} \tilde \Delta$;
		\item [(ii)] $\Lambda^*$ and $\tilde {\Lambda}^*$ are feedback equivalent;
		\item [(iii)] $\Delta$ and $\tilde \Delta$ have isomorphic trajectories, i.e, \red{there exists} a linear and invertible map $S:\mathscr M^*\rightarrow\tilde {\mathscr M}^*$ \red{transforming} any trajectory $x(t,x^0)$, where $x^0\in \mathscr M^*$ of $\Delta|^{red}_{\mathscr M^*}$ into \red{a} trajectory $\tilde x(t,\tilde x^0)$, $\tilde x^0\in \tilde {\mathscr{M}}^*$ of $\tilde \Delta|^{red}_{\tilde {\mathscr M^*}}$, where $\tilde x^0=Sx^0$, and vice versa.
	\end{itemize}
\end{thm}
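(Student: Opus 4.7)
The plan is to establish the cycle $(i)\Rightarrow(ii)\Rightarrow(iii)\Rightarrow(i)$, using the explicitation/reduction machinery already developed: Theorem~\ref{main theorem}, Proposition~\ref{Pro:M}, and Remark~\ref{Rem:M-equi}(i) together let us translate every statement about the DAEs $\Delta|^{red}_{\mathscr M^*}$ and $\tilde \Delta|^{red}_{\tilde{\mathscr M}^*}$ into a statement about their explicitations, which by Proposition~\ref{Pro:M}(ii) are control systems without outputs, and for such systems Morse equivalence collapses to feedback equivalence.

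For $(i)\Leftrightarrow(ii)$: Definition~\ref{in-equivalent} states that $\Delta\mathop\sim\limits^{in}\tilde\Delta$ is exactly $\Delta|^{red}_{\mathscr M^*}\mathop\sim\limits^{ex}\tilde \Delta|^{red}_{\tilde{\mathscr M}^*}$. Applying Theorem~\ref{main theorem}(ii) to these two reduced restrictions, their ex-equivalence is equivalent to Morse equivalence of any chosen $\Lambda^*\in{\rm Expl}(\Delta|^{red}_{\mathscr M^*})$ and $\tilde\Lambda^*\in{\rm Expl}(\tilde \Delta|^{red}_{\tilde{\mathscr M}^*})$. Since both $\Lambda^*$ and $\tilde\Lambda^*$ have no outputs by Proposition~\ref{Pro:M}(ii), Remark~\ref{Rem:M-equi}(i) identifies their Morse equivalence with feedback equivalence, which is precisely~(ii).

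For $(ii)\Rightarrow(iii)$: if $(T_s,T_i,F)$ is a feedback transformation from $\Lambda^*$ to $\tilde\Lambda^*$, then any trajectory $(z(t),u(t))$ of $\Lambda^*$ is mapped to the trajectory $(T_sz(t),Fz(t)+T_iu(t))$ of $\tilde\Lambda^*$, and conversely. Combining this with the linear isomorphisms supplied by the $(Q,P)$-explicitations of $\Delta|^{red}_{\mathscr M^*}$ and $\tilde\Delta|^{red}_{\tilde{\mathscr M}^*}$ (identifying $\mathscr M^*$ and $\tilde{\mathscr M}^*$ with the state-input spaces of $\Lambda^*$ and $\tilde\Lambda^*$, respectively), one obtains the required linear invertible map $S:\mathscr M^*\to\tilde{\mathscr M}^*$ carrying trajectories to trajectories, and invertibility of $S$ follows from invertibility of $T_s$, $T_i$, and the explicitation matrices.

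For $(iii)\Rightarrow(i)$: given the trajectory isomorphism $S$, any solution $x(t,x^0)$ of $\Delta|^{red}_{\mathscr M^*}$ satisfies simultaneously $E|^{red}_{\mathscr M^*}\dot x=H|^{red}_{\mathscr M^*}x$ and $\tilde E|^{red}_{\tilde{\mathscr M}^*}S\dot x=\tilde H|^{red}_{\tilde{\mathscr M}^*}Sx$. By Proposition~\ref{Pro:inv-subspace}(iii), $E|^{red}_{\mathscr M^*}$ has full row rank, and because each point $x^0\in\mathscr M^*$ is reached by at least one trajectory while the admissible velocities $\dot x$ at $x^0$ form a full affine subspace (parametrized by the controls of $\Lambda^*$), these identities determine a unique $Q\in Gl(\tilde l^*,\mathbb R)$ with $\tilde E|^{red}_{\tilde{\mathscr M}^*}S=QE|^{red}_{\mathscr M^*}$ and $\tilde H|^{red}_{\tilde{\mathscr M}^*}S=QH|^{red}_{\mathscr M^*}$, which is exactly ex-equivalence via $(Q,S)$, establishing~(i).

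The main obstacle is the final implication $(iii)\Rightarrow(i)$: converting a set-theoretic isomorphism of solution sets into coefficient-wise matrix identities requires exploiting both the full row rank of $E|^{red}_{\mathscr M^*}$ and the full affine freedom of $\dot x$ at each initial point, rather than a single velocity as in the ODE case. It is precisely this redundancy in the admissible velocities (coming from the control/input variables of $\Lambda^*$) that pins down $Q$ uniquely; without it one could only recover $S$ and not the transformation on the equation side.
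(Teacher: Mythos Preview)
Your argument for $(i)\Leftrightarrow(ii)$ matches the paper's exactly. Where you diverge is in closing the loop with (iii): the paper establishes $(ii)\Leftrightarrow(iii)$ directly at the level of the implicitations $\Delta^*={\rm Impl}(\Lambda^*)$ and $\tilde\Delta^*={\rm Impl}(\tilde\Lambda^*)$, whereas you run the cycle $(ii)\Rightarrow(iii)\Rightarrow(i)$ and handle the hard direction $(iii)\Rightarrow(i)$ at the DAE level by manufacturing $Q$ from the full row rank of $E|^{red}_{\mathscr M^*}$. The paper's converse step instead writes the trajectory isomorphism in $(z,u)$-coordinates as a block matrix $T=\left[\begin{smallmatrix}T_1&T_2\\T_3&T_4\end{smallmatrix}\right]$ and kills the off-diagonal block $T_2$ by evaluating at $t=0$ along the special trajectories $z(0)=0$, $u^i(t)=(0,\dots,t,\dots,0)^T$; this forces $T_2\dot u^i(0)=0$ for each $i$, hence $T_2=0$, after which the feedback-equivalence data $(T_s,T_i,F)=(T_1,T_4^{-1},T_3T_1^{-1})$ drop out. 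Your route is more intrinsic to the DAE (no coordinates), while the paper's makes the feedback structure visible and avoids having to argue separately that $l^*=\tilde l^*$ and that $Q$ is invertible. On that last point your sketch is a bit thin: you assert $Q\in Gl(\tilde l^*,\mathbb R)$ without saying why the row counts agree or why $Q$ is bijective; the cleanest fix is to run your construction symmetrically with $S^{-1}$ (using the ``vice versa'' in (iii)) to get $\tilde Q$ with $E S^{-1}=\tilde Q\tilde E$, and then $\tilde Q Q E=E$ plus surjectivity of $E$ yields $\tilde Q Q=I$.
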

The proof is given in Section \ref{Pf:in-equi}.
In most of \red{the} DAEs literature, regularity of DAEs is frequently studied and various definitions are proposed. From the point of view of the existence and uniqueness of solutions, we propose the following definition of internal regularity of DAEs.
\begin{defn}\label{regularity}
	$\Delta$ is internally regular if through any point $x^0\in \mathscr M^*$, there passes only one solution.
\end{defn}
Recall that	${\rm rank\,}_{\mathbb R[s]} (sE-H)$ denotes the rank of a polynomial matrix $sE-H$ over the ring $\mathbb R[s]$.	
\begin{pro}[internal regularity]\label{Pro:in-regular}
	\red{For a DAE $\Delta_{l,n}=(E,H)$, denote ${\rm rank\,} E=q$. The following statements are equivalent:}
	\begin{itemize}
		\item [(i)] $\Delta$ is internally regular;
		\item [(ii)] \red{Any} $\Lambda^*\in {\rm Expl}(\Delta|^{red}_{\mathscr M^*})$ has no inputs;
		\item [(iii)] The \textbf{MCF} of $\Lambda\in {\rm Expl}(\Delta)$ has no $MCF^1$ part.
		\item [(iv)] \red{${\rm rank\,} E=\dim\, E\mathscr M^*$};
		\item [(v)] \blue{${\rm rank\,}_{\mathbb R[s]}(sE-H)=q$};
		\item [(vi)] \red{The \textbf{MCF} of $\Lambda^*\in {\rm Expl}(\Delta|^{red}_{\mathscr M^*})$ has the $MCF^2$ part only.}
	\end{itemize}
\end{pro}
The proof is given in Section \ref{Pf:Pro:in-regular}.
\begin{rem}
	(i)	The \red{above} definition  of internal regularity is actually equivalent to the definition of \red{an autonomous DAE} in \cite{berger2013differential}. Both of them mean that the DAE is not under-determined (there is no $L(s)$ in the \textbf{KCF} of $sE-H$).  
	
	(ii) Our notion of internal regularity  does not imply that the matrices $E$ and $H$ are square, since the presence of the over-determined part $KCF^4$ (or $L^p(s)$) is allowed \red{for $\Delta=(E,H)$}. 
	
	(iii) If $E$ and $H$ are square ($l=n$), then $\Delta$ (equivalently, $sE-H$) is internally regular if and only if $|sE-H|\not\equiv 0$. It means \red{that for the case of square matrices}, the \red{classical} notion of regularity and internal regularity coincide.
\end{rem}

\section{Proofs of the results}\label{Chap1sec7}
\subsection{Proof of Theorem \ref{main theorem}}\label{Pf:main theorem}
\begin{proof}
	(i) This result can be easily \red{deduced} from Definition \ref{implicitation} and \ref{Def:QPexpl} \mage{and the explicitation procedure}.
	
	(ii) Consider two control systems $${\Lambda=(A,B,C,D)}\in {\rm Expl}(\Delta)\ \ \ {\rm and} \ \ \ \tilde \Lambda =(\tilde A,\tilde B,\tilde C,\tilde D)\in {\rm Expl}(\tilde \Delta).$$ Then by (i) of Theorem \ref{main theorem}, there exist invertible matrices $Q, \tilde Q, P, \tilde P$ of appropriate sizes such that 
	\begin{align}\label{Equation ex-equ}
	Q\left( {sE - H} \right){P^{ - 1}} = \left[ {\begin{matrix}
		{sI - {A}}&{ - {B}}\\
		{ - {C}}&{ - {D}}
		\end{matrix}} \right],\ \ \  \ \ \ \tilde Q\left( {s\tilde E - \tilde H} \right){{\tilde P}^{ - 1}} = \left[ {\begin{matrix}
		{sI - {{\tilde A}}}&{ - {{\tilde B}}}\\
		{ - {{\tilde C}}}&{ - {{\tilde D}}}
		\end{matrix}} \right].
	\end{align} 
	\emph{``If''}. Suppose $\Lambda\mathop  \sim \limits^{M} {{\tilde \Lambda }}$, then there exist Morse transformation matrices $T_s, T_i, T_o, F, K$ such that
	\begin{align}\label{lambatildelambdamorseequ}
	\left[ {\begin{matrix}
		{{T_s}}&{{T_s}K}\\
		0&{{T_o}}
		\end{matrix}} \right]\left[ {\begin{matrix}
		{sI - {A}}&{ - {B}}\\
		{ - {C}}&{ - {D}}
		\end{matrix}} \right]\left[ {\begin{matrix}
		{T_s^{ - 1}}&0\\
		{FT_s^{ - 1}}&{T_i^{ - 1}}
		\end{matrix}} \right] = \left[ {\begin{matrix}
		{sI - {{\tilde A}}}&{ - {{\tilde B}}}\\
		{ - {{\tilde C}}}&{ - {{\tilde D}}}
		\end{matrix}} \right]. 
	\end{align}
	By (\ref{lambatildelambdamorseequ}), we have
	$$
	\begin{array}{c}
	\left[ {\begin{matrix}
		{{T_s}}&{{T_s}K}\\
		0&{{T_o}}
		\end{matrix}} \right]Q\left( {{Q^{ - 1}}\left[ {\begin{matrix}
			{sI - {A}}&{ - {B}}\\
			{ - {C}}&{ - {D}}
			\end{matrix}} \right]P} \right){P^{ - 1}}\left[ {\begin{matrix}
		{T_s^{ - 1}}&0\\
		{FT_s^{ - 1}}&{T_i^{ - 1}}
		\end{matrix}} \right]\\ = \tilde Q\left( {{{\tilde Q}^{ - 1}}\left[ {\begin{matrix}
			{sI - {{\tilde A}}}&{ - {{\tilde B}}}\\
			{ - {{\tilde C}}}&{ - {{\tilde D}}}
			\end{matrix}} \right]\tilde P} \right){{\tilde P}^{ - 1}}.
	\end{array}
	$$
	Substitute (\ref{Equation ex-equ}) into the above equation, to have  
	$${{\tilde Q}^{ - 1}}\left[ {\begin{matrix}
		{{T_s}}&{{T_s}K}\\
		0&{{T_0}}
		\end{matrix}} \right]Q\left( {sE - H} \right){P^{ - 1}}\left[ {\begin{matrix}
		{T_s^{ - 1}}&0\\
		{FT_s^{ - 1}}&{T_i^{ - 1}}
		\end{matrix}} \right]\tilde P =  s\tilde E - \tilde H.
	$$
	\red{Thus} ${\Delta }\mathop  \sim \limits^{ex} {{\tilde \Delta }}$ via $(\bar Q,\bar P)$, where 
	\begin{align*}
	\bar Q = \tilde Q^{-1} \left[ {\begin{matrix}
		{{T_s}}&{{T_s}K}\\
		0&{{T_0}}
		\end{matrix}} \right]Q \ \ {\rm and} \ \ {{\bar P}^{ - 1}} = {P^{ - 1}}\left[ {\begin{matrix}
		{T_s^{ - 1}}&0\\
		{FT_s^{ - 1}}&{T_i^{ - 1}}
		\end{matrix}} \right]\tilde P.
	\end{align*}
	\emph{``Only if''}.  Suppose ${\Delta }\mathop  \sim \limits^{ex} {{\tilde \Delta }}$, then there exist invertible matrices $\bar Q$ and $\bar P$ of appropriate sizes such that $\bar Q\left( {sE - H} \right){\bar P^{ - 1}}=   {s\tilde E - \tilde H}  $, which implies that
	\begin{align*}
	&\bar QQ^{-1}\left(Q\left( {sE - H} \right)P^{-1}\right)P{\bar P^{ - 1}}={\tilde Q}^{-1}\left({\tilde Q}\left( {s\tilde E - \tilde H} \right){\tilde P}^{-1}\right){\tilde P }\\&\mathop  \Rightarrow \limits^{(\ref{Equation ex-equ})}\tilde Q\bar Q{Q^{ - 1}}\left[ {\begin{matrix}
		{sI - {A}}&{ - {B}}\\
		{ - {C}}&{ - {D}}
		\end{matrix}} \right]P{{\bar P}^{ - 1}}{{\tilde P}^{ - 1}} = \left[ {\begin{matrix}
		{sI - {{\tilde A}}}&{ - {{\tilde B}}}\\
		{ - {{\tilde C}}}&{ - {{\tilde D}}}
		\end{matrix}} \right].
	\end{align*}
	Denote $\tilde Q\bar Q{Q^{ - 1}}=\left[ {\begin{matrix}
		{{Q^1}}&{{Q^2}}\\
		{{Q^3}}&{{Q^4}}
		\end{matrix}} \right]$ and $ P{{\bar P}^{ - 1}}{{\tilde P}^{ - 1}}=\left[ {\begin{matrix}
		{P^1}&{P^2}\\
		{P^3}&{P^4}
		\end{matrix}} \right]$, where \red{$Q^i$ and $P^i$, for $i=1,2,3,4$,} are matrices of suitable sizes. Then we get
	\[\left[ {\begin{matrix}
		{Q^1}&{Q^2}\\
		{Q^3}&{Q^4}
		\end{matrix}} \right]\left[ {\begin{matrix}
		{sI - {A}}&{ - {B}}\\
		{ - {C}}&{ - {D}}
		\end{matrix}} \right]\left[ {\begin{matrix}
		{P^1}&{P^2}\\
		{P^3}&{P^4}
		\end{matrix}} \right] = \left[ {\begin{matrix}
		{sI - {{\tilde A}}}&{ - {{\tilde B}}}\\
		{ - {{\tilde C}}}&{ - {{\tilde D}}}
		\end{matrix}} \right].\] 
	Now  by the invertibility of $\tilde Q\bar Q{Q^{ - 1}}$ and $ P{{\bar P}^{ - 1}}{{\tilde P}^{ - 1}}$, we get $\left[ {\begin{matrix}
		{Q^1}&{Q^2}\\
		{Q^3}&{Q^4}
		\end{matrix}} \right]$ and  $\left[ {\begin{matrix}
		{P^1}&{P^2}\\
		{P^3}&{P^4}
		\end{matrix}} \right]$ are invertible. By \red{a direct} calculation, we get $Q^3=0$, $P^2=0$, $Q^1=(P^1)^{-1}$, thus $Q^4$ and $P^4$ are invertible as well.  Therefore, $\Lambda\mathop  \sim \limits^{M}  \tilde {\Lambda}$ via the Morse transformation $${M}_{tran}=\left(Q^1, (P^4)^{-1}, Q^4, P^3Q^1,(Q^1)^{-1}Q^2\right).$$
	
	(iii) Given two control systems $\Lambda=(A,B,C,D)$ and $\tilde \Lambda=(\tilde A,\tilde B,\tilde C,\tilde D)$, the corresponding matrix pencils of \red{$\Delta^{Impl}={\rm Impl}(\Lambda)$} and \red{$\tilde \Delta^{Impl}={\rm Impl}(\tilde \Lambda)$}, \red{by Definition~\ref{implicitation}}, are $\left[ {\begin{matrix}
		{sI - A}&{ - B}\\
		{ - C}&{ - D}
		\end{matrix}} \right]$ and  $\left[ {\begin{matrix}
		{sI - \tilde A}&{ - \tilde B}\\
		{ - \tilde C}&{ - \tilde D}
		\end{matrix}} \right]$, respectively.
	
	\emph{``If''}. Suppose $\Delta^{Impl}\mathop  \sim \limits^{ex}\tilde \Delta^{Impl}$, that is, there exist invertible matrices $Q$ and $P$ such that
	\begin{align}\label{exequofimpl}
	Q\left[ {\begin{matrix}
		{sI - A}&{ - B}\\
		{ - C}&{ - D}
		\end{matrix}} \right]P^{-1}=\left[ {\begin{matrix}
		{sI - \tilde A}&{ - \tilde B}\\
		{ - \tilde C}&{ - \tilde D}
		\end{matrix}} \right].
	\end{align}
	Denote $Q=\left[ {\begin{matrix}
		{{Q_1}}&{{Q_2}}\\
		{{Q_3}}&{{Q_4}}
		\end{matrix}} \right]$ and $P=\left[ {\begin{matrix}
		{{P_1}}&{{P_2}}\\
		{{P_3}}&{{P_4}}
		\end{matrix}} \right]$ with matrices \red{$Q_i$ and $P_i$, for $i=1,2,3,4$, of suitable dimensions}. Then by (\ref{exequofimpl}), we get $Q_3=0$, $P_2=0$, $Q_1=(P_1)^{-1}$. Since $Q$ and $P$ are invertible, we can \red{conclude that} $Q_4$ and $P_4$ are invertible as well. Therefore, $\Lambda\mathop  \sim \limits^{M}  \tilde {\Lambda}$ via the Morse transformation ${M}_{tran}=\left(Q_1, (P_4)^{-1}, Q_4, P_3Q_1,(Q_1)^{-1}Q_2\right)$. 
	
	\emph{``Only if''}. Suppose $\Lambda\mathop  \sim \limits^{M}\tilde \Lambda$ via a Morse transformation $M_{tran}=(T_s, T_i, T_o, F, K)$ (see equation (\ref{morse-trans})), then we have $\Delta^{Impl}\mathop  \sim \limits^{ex}\tilde \Delta^{Impl}$ \red{via $(Q,P)$, where $Q=\left[ {\begin{matrix}
			{T_s}&{T_sK}\\
			0&{{T_o}}
			\end{matrix}} \right]$ and $P^{-1}=\left[ {\begin{matrix}
			{T_s^{ - 1}}&0\\
			{FT_s^{ - 1}}&{T_i^{ - 1}}
			\end{matrix}} \right]$. }
\end{proof}
\subsection{Proof of Proposition \ref{Pro:VW}}\label{Pf:Pro:VW}
\begin{proof}	
	(i)	It can be observed from (\ref{Vrealized}) that $\mathscr V_i$ is non-increasing. By a dimensional argument, \red{the sequence $\mathscr V_i$ gets} stabilized at \red{$i=k^*\le n$} and
	it can be directly seen from $\mathscr V_{k^*}=H^{-1}E\mathscr V_{k^*}$ that $\mathscr V_{k^*}$ is a $(H^{-1},E)$-invariant subspace. We now prove by induction that it is the largest. Choose any other $(H^{-1},E)$-invariant subspace $\hat {\mathscr V}$ and consider (\ref{Vrealized}). For $i=0$, $\hat {\mathscr V}\subseteq \mathscr V_0$; Suppose $\hat {\mathscr V}\subseteq \mathscr V_i$, then $H^{-1}E\hat {\mathscr V}\subseteq H^{-1}E\mathscr V_i$ (since taking the image and preimage preserves inclusion), thus $\hat {\mathscr V}=H^{-1}E\hat {\mathscr V}\subseteq H^{-1}E\mathscr V_i =\mathscr V_{i+1}$. Therefore,  $\hat {\mathscr V}\subseteq \mathscr V_{i}$ for $i\in \mathbb{N}$, i.e., $\hat {\mathscr V}\subseteq {\mathscr V_{{k^*}}}$, it follows \red{that} $\mathscr V_{k^*}$ is the largest $(H^{-1},E)$-invariant subspace.
	
	Now consider (\ref{Wrealized}), observe that \red{the sequence $\mathscr W_i$} is non-decreasing and by a dimensional argument, $\mathscr W_i$ gets stabilized at \red{$i=l^*\le n$}.
	It can be directly seen from $\mathscr W_{l^*}=E^{-1}H\mathscr W_{l^*}$ that $\mathscr W_{l^*}$ is a $(E^{-1},H)$-invariant subspace. We then prove that any other $(E^{-1},H)$-invariant subspace $\hat {\mathscr W}$ contains $\mathscr W^*$, for $i=0$, ${\mathscr W}_0\subseteq \hat {\mathscr W}$; if ${\mathscr W}_i\subseteq \hat {\mathscr W}$, then $E^{-1}H{\mathscr W}_i\subseteq E^{-1}H\hat {\mathscr W}$, so ${\mathscr W}_{i+1}=E^{-1}H{\mathscr W}_i\subseteq E^{-1}H\hat {\mathscr W} =\hat {\mathscr W}$, that is, ${\mathscr W}_i\subseteq \hat {\mathscr W}$ for $i\in \mathbb{N}$, which gives ${\mathscr W}_{l^*}\subseteq \hat {\mathscr W}$ and ${\mathscr W}_{l^*}$ is the smallest $(E^{-1},H)$-invariant subspace.
	
	(ii) By Definition \ref{DAE invariances}, ${\mathscr V}^*$ satisfies $\mathscr V^*=H^{-1} E\mathscr V^*$, \red{thus} it is seen that $H\mathscr V^*\subseteq E\mathscr V^*$. \red{We then prove, by induction that, $\mathscr V^*$ is the largest satisfying that property.} Choose any other subspace $\hat{\mathscr V}$ which satisfies $H\hat{\mathscr V}\subseteq E\hat{\mathscr V}$, consider (\ref{Vrealized}), for $i=0$, \red{so} $\hat {\mathscr V}\subseteq \mathscr V_0$. Suppose $\hat {\mathscr V}\subseteq \mathscr V_i$, then $\hat {\mathscr V}\subseteq H^{-1}E\hat {\mathscr V}\subseteq H^{-1}E\mathscr V_i=\mathscr V_{i+1}$, thus $\hat {\mathscr V}\subseteq H^{-1}E\mathscr V_i =\mathscr V_{i+1}$, therefore  $\hat {\mathscr V}\subseteq \mathscr V_{i}$ for $i\in \mathbb{N}$, i.e., $\hat {\mathscr V}\subseteq {\mathscr V_{{k^*}}}$, which implies  ${\mathscr V^*}={\mathscr V_{{k^*}}}$  is the largest subspace such that $H\mathscr V^*\subseteq E\mathscr V^*$
	
	Obviously, $\{0\}$ is the smallest subspace satisfying $H\{0\}\subseteq E\{0\}$, but ${\mathscr W^*}$ is not always $\{0\}$, so we prove that $\mathscr W^*$ is not necessarily the smallest subspace such that $E\mathscr W^*\subseteq H\mathscr W^*$.	
\end{proof}
\subsection{Proof of Proposition \ref{relation invariant subs}}\label{Pf:relation invariant subs}
\begin{proof}
	\blue{Observe that, by Definition \ref{ex-equivalence} and  \ref{VWrealized}, if two DAEs $\Delta$ and  $\tilde \Delta$ are ex-equivalent via $(Q,P)$, then direct calculations of the Wong sequences of $\Delta$ and $\tilde \Delta$ give that $\mathscr V_i(\tilde \Delta)=P\mathscr V_i(\Delta)$ and $\mathscr W_i(\tilde \Delta)=P\mathscr W_i(\Delta)$. As $\Lambda$ is a $(Q,P)$-explicitation of $\Delta$, by Theorem \ref{main theorem}(i), we have $\Delta\mathop \sim \limits^{ex}\Delta^{Impl}$ via $(Q,P)$, where $\Delta^{Impl}={\rm Impl}(\Lambda)$. Thus we have}
	\begin{align}\label{VWDeltaIm_Delta}
	\mathscr V_i(\Delta^{Impl})=P\mathscr V_i(\Delta), \ \ \ \ \mathscr W_i(\Delta^{Impl})=P\mathscr W_i(\Delta).
	\end{align}
	Notice that 
	$$
	\Delta^{Impl}_{l,n}=\left(\left[ {\begin{matrix}
		I_q&0\\
		0&0
		\end{matrix}} \right], \left[ \begin{matrix}
	A&B\\
	C&D
	\end{matrix} \right]\right), \ \  \mathbf{\Lambda}_{n,m,p}=\left( \mathbf A,\mathbf{B},\mathbf{C}\right) =\left(\left[ \begin{matrix}
	A&B\\
	0&0
	\end{matrix} \right],\left[ {\begin{matrix}
		0\\
		{{I_m}}
		\end{matrix}} \right], \left[ {\begin{matrix}
		{C}&{D}
		\end{matrix}} \right]\right), 
	$$
	\red{where $m=n-q$ and $p=l-q$}. The proof of (i) \red{will be done in} 3 steps :
	
	Step 1: First we show that for $ i\in \mathbb{N}$, 
	\begin{align}\label{rela VtildeV}
	\mathscr V_i(\Delta^{Impl})=\mathbfcal{V}_i(\mathbf{\Lambda}).
	\end{align}
	Calculate ${\mathbfcal{V}_{i + 1}}\left( {{\mathbf{\Lambda}}} \right)$ \red{using} (\ref{tildeVreali}), \red{to get}
	\begin{align}\label{tildeVofpsys}
	{\mathbfcal{V}_{i + 1}}\left( {{\mathbf{\Lambda}}} \right) = \ker \left[ {\begin{matrix}
		{C}&{D}
		\end{matrix}} \right] \cap \left[ {\begin{matrix}
		{A}&{B}\\
		0&0
		\end{matrix}} \right]^{-1}\left(  {{\mathbfcal{V}_i}\left( {{\mathbf{\Lambda}}} \right) + {\rm Im\,}\left[ {\begin{matrix}
			0\\
			{{I_m}}
			\end{matrix}} \right]} \right).&
	\end{align}
	Equation (\ref{tildeVofpsys}) can be written as 
	\begin{align*}
	{\mathbfcal{V}_{i + 1}}\left( {{\mathbf{\Lambda}}} \right) = \left\{ \tilde v\,|\, \left[ {\begin{matrix}
		{A}&{B}
		\end{matrix}} \right]\tilde v \in \left[ {\begin{matrix}
		I_q&0
		\end{matrix}} \right]{\mathbfcal{V}_i}\left( {{\mathbf{\Lambda}}} \right), \ \ \left[ {\begin{matrix}
		{C}&{D}
		\end{matrix}} \right]\tilde v = 0  \right\}
	\end{align*}
	\red{or}, equivalently,  
	\begin{align} \label{tildeVwongform}
	{\mathbfcal{V}_{i + 1}}\left( {{\mathbf{\Lambda}}} \right) = {\left[ {\begin{matrix}
			{A}&{B}\\
			{C}&{D}
			\end{matrix}} \right]^{ - 1}}\left[ {\begin{matrix}
		I_q&0\\
		0&0
		\end{matrix}} \right]{\mathbfcal{V}_i}\left( {{\mathbf{\Lambda}}} \right) .
	\end{align}
	Now, \red{observe that the inductive formula (\ref{tildeVwongform}) for ${\mathbfcal{V}_{i + 1}}(\mathbf{\Lambda})$ coincides with the inductive formula (\ref{Vrealized}) for the Wong sequence $\mathscr V_{i+1}(\Delta^{Impl})$. Since  $\mathscr V_0(\Delta^{Impl})=\mathbfcal{V}_0(\mathbf{\Lambda})=\mathbb{R}^n$, we conclude that $\mathscr V_i(\Delta^{Impl})=\mathbfcal{V}_i(\mathbf{\Lambda})$ for all $ i\in \mathbb{N}$.}  
	
	Step 2:	We then prove \red{that} for $i\in \mathbb{N}$, 
	\begin{align}\label{relation of VbarV}
	\mathscr V_{i+1}(\Delta^{Impl})=\left[ {\begin{matrix}
		A&B\\
		C&D
		\end{matrix}} \right]^{-1}\left[ {\begin{matrix}
		{\mathcal{V}_{i}({\Lambda})}\\
		{0}
		\end{matrix}} \right].
	\end{align}
	By calculating $\mathcal{V}_{i+1}({\Lambda})$ \red{via} (\ref{barV}), we get
	$$
	{{\mathcal{V}_{i + 1}({\Lambda})}} = {\left[ {\begin{matrix}
			A\\
			C
			\end{matrix}} \right]^{ - 1}}\left(  {\left[ {\begin{matrix}
			I\\
			0
			\end{matrix}} \right]{{\mathcal{V}_i({\Lambda})}} + {\rm Im\,}\left[ {\begin{matrix}
			{B}\\
			{D}
			\end{matrix}} \right]} \right) .
	$$
	We can rewrite the above equation as
	\begin{align}\label{barV1}
	\mathcal{V}_{i+1}({\Lambda})=\left[ {\begin{matrix}
		{{I_q}}&0_{q \times m}&0
		\end{matrix}} \right]\ker \left[ {\begin{matrix}
		{A}&{B}&{{{\bar V}_i}}\\
		{C}&{D}&0
		\end{matrix}} \right],
	\end{align}
	where ${\bar V}_i$ is a matrix with independent columns such that ${\rm Im\,}\bar V_i=\mathcal{V}_i({\Lambda})$.
	
	From \red{basic} knowledge of linear algebra, for two matrices $M\in \mathbb R^{l\times n}$ and $N\in \mathbb R^{l\times m}$, the preimage $M^{-1}{\rm Im}N=[I_{n},0]\ker\,[M,N]$. With this formula, calculate ${\mathscr V}_{i+1}(\Delta^{Impl})$ \red{via} (\ref{Vrealized}), \red{to} get
	\begin{align}\label{Vrealized1}
	\mathscr V_{i+1}(\Delta^{Impl})=\left[ {\begin{matrix}
		A&B\\
		C&D
		\end{matrix}} \right]^{-1}\left[ {\begin{matrix}
		I_q&0\\
		0&0
		\end{matrix}} \right]= \left[ {\begin{matrix}
		{{I_q}}&0&0\\
		0&{{I_m}}&0
		\end{matrix}} \right]\ker \left[ {\begin{matrix}
		{\begin{matrix}
			A&B\\
			C&D
			\end{matrix}}&\left[ {\begin{matrix}
			{{I_q}}&0\\
			0&0
			\end{matrix}} \right]V_i
		\end{matrix}} \right],
	\end{align}
	where $V_i$  is a matrix with independent columns such that ${\rm Im\,} V_i=\mathscr V_i(\Delta)$.
	
	In order to show \red{that} (\ref{relation of VbarV}) holds, we will first prove \red{inductively that} for all $i\in \mathbb{N}$, 
	\begin{align}\label{VbarVRelationpre}
	\left[ {\begin{matrix}
		\mathcal{V}_i({\Lambda})\\
		0
		\end{matrix}} \right]=\left[ {\begin{matrix}
		{{I_q}}&0\\
		0&0
		\end{matrix}} \right]\mathscr V_i(\Delta^{Impl}).
	\end{align}
	\red{For} $i=0$, $\left[ {\begin{matrix}
		\mathcal{V}_0({\Lambda})\\
		0
		\end{matrix}} \right]=\left[ {\begin{matrix}
		\mathbb{R}^q\\
		0
		\end{matrix}} \right]=\left[ {\begin{matrix}
		I_q&0\\
		0&0
		\end{matrix}} \right]{\mathscr V}_0(\Delta^{Impl})$. Suppose \red{that} for $i=k\in \mathbb N$, equation (\ref{VbarVRelationpre}) holds or, equivalently, ${\rm Im\,}\left[ {\begin{matrix}
		\bar{{V}}_k\\
		0
		\end{matrix}} \right]={\rm Im\,}\left[ {\begin{matrix}
		{{I_q}}&0\\
		0&0
		\end{matrix}} \right]{{V}}_k$. Then we have
	\begin{align*}
	\left[ {\begin{matrix}
		\mathcal{V}_{k+1}({\Lambda})\\
		0
		\end{matrix}} \right] &\mathop = \limits^{(\ref{barV1})} \left[ {\begin{matrix}
		{{I_q}}&0_{q \times m}&0\\
		0&0_{p \times m}&0
		\end{matrix}} \right]\ker \left[ {\begin{matrix}
		{A}&{B}&{{{\bar V}_k}}\\
		{C}&{D}&0
		\end{matrix}} \right] \\& = \left[ {\begin{matrix}
		{{I_q}}&0\\
		0&0
		\end{matrix}} \right]\left[ {\begin{matrix}
		{{I_q}}&0&0\\
		0&{{I_m}}&0
		\end{matrix}} \right]\ker \left[ {\begin{matrix}
		{\begin{matrix}
			{A}&{B}\\
			{C}&{D}
			\end{matrix}}&\left[ {\begin{matrix}
			{{I_q}}&0\\
			0&0
			\end{matrix}} \right]{V_k}
		\end{matrix}} \right]\\& \mathop = \limits^{(\ref{Vrealized1})}\left[ {\begin{matrix}
		{{I_q}}&0\\
		0&0
		\end{matrix}} \right]\mathscr V_{k+1}(\Delta^{Impl}).
	\end{align*}
	Therefore, equation $(\ref{VbarVRelationpre})$ holds
	for all $i\in \mathbb{N}$.
	
	Consequently, we have for $i\in \mathbb{N}$,
	\begin{align*}
	\mathscr V_{i+1}(\Delta^{Impl})\mathop = \limits^{(\ref{Vrealized})}\left[ {\begin{matrix}
		A&B\\
		C&D
		\end{matrix}} \right]^{-1}\left[ {\begin{matrix}
		{{I_q}}&0\\
		0&0
		\end{matrix}} \right]\mathscr V_i(\Delta^{Impl})\mathop = \limits^{(\ref{VbarVRelationpre})}\left[ {\begin{matrix}
		A&B\\
		C&D
		\end{matrix}} \right]^{-1}\left[ {\begin{matrix}
		\mathcal{V}_i({\Lambda})\\
		0
		\end{matrix}} \right].
	\end{align*}
	
	Step 3:	Finally, since $\mathscr V^*$ and $\mathbfcal{V}^*$ are the limits of the sequences $\mathscr V_i$ and $\mathbfcal{V}_i$, respectively, it follows from (\ref{rela VtildeV}) that $\mathscr V^*(\Delta^{Impl})=\mathbfcal{V}^*(\mathbf{\Lambda})$. Since $\mathscr V^*$ and $\mathcal{V}^*$ are the limits of $\mathscr V_i$ and $\mathcal{V}_i$, respectively,  it follows from (\ref{relation of VbarV}) that $
	\mathscr V^*(\Delta^{Impl})=\left[ {\begin{matrix}
		A&B\\
		C&D
		\end{matrix}} \right]^{-1}\left[ {\begin{matrix}
		\mathcal{V}^*({\Lambda})\\
		0
		\end{matrix}} \right]$. Thus by (\ref{VWDeltaIm_Delta}), we have 
	$
	P\mathscr V^*(\Delta)=\mathscr V^*(\Delta^{Impl})=\mathbfcal{V}^*(\mathbf{\Lambda})= \left[ {\begin{matrix}
		A&B\\
		C&D
		\end{matrix}} \right] ^{-1}\left[ {\begin{matrix}
		{\mathcal{V}^*({\Lambda})}\\
		{0}
		\end{matrix}} \right].
	$
	
	The proof of (ii) \red{will be done in} 3 steps :
	
	Step 1:	Firstly, we show that for $i\in \mathbb{N}$,
	\begin{align}\label{rela WtildeW}
	\mathscr W_i(\Delta^{Impl})=\mathbfcal{W}_i(\mathbf{\Lambda}).
	\end{align}
	Calculate $\mathbfcal W_{i + 1}(\mathbf{\Lambda})$ by (\ref{tildeWreali}), \red{as}
	\begin{align*}
	{\mathbfcal{W}_{i + 1}}(\mathbf{\Lambda}) &= \left[ {\begin{matrix}
		{A}&{B}\\
		0&0
		\end{matrix}} \right]\left(  {{\mathbfcal{W}_i}(\mathbf{\Lambda}) \cap \ker \left[ {\begin{matrix}
			{C}&{D}
			\end{matrix}} \right]}  + {\rm Im\,}\left[ {\begin{matrix}
		0\\
		{{I_m}}
		\end{matrix}} \right]\right) \\&= {\left[ {\begin{matrix}
			I_q&0\\
			0&0
			\end{matrix}} \right]^{ - 1}}\left[ {\begin{matrix}
		{A}&{B}\\
		0&0
		\end{matrix}} \right]\left(  {{\mathbfcal{W}_i}(\mathbf{\Lambda}) \cap \ker \left[ {\begin{matrix}
			{C}&{D}
			\end{matrix}} \right]} \right)  \\
	&=\left( \left[ {\begin{matrix}
		I_q&0\\
		0&0
		\end{matrix}} \right]^{ - 1}\left[ {\begin{matrix}
		{A}&{B}\\
		C&D
		\end{matrix}} \right]{\mathbfcal{W}_i}(\mathbf{\Lambda}) \right) \cap \left( \left[ {\begin{matrix}
		I_q&0\\
		0&0
		\end{matrix}} \right]^{ - 1}\left[ {\begin{matrix}
		{A}&{B}\\
		C&D
		\end{matrix}} \right] \ker \left[ {\begin{matrix}
		{C}&{D}
		\end{matrix}} \right]\right) . 
	\end{align*}
	Observe  that
	\begin{align*}
	{\left[ {\begin{matrix}
			I_q&0\\
			0&0
			\end{matrix}} \right]^{ - 1}}\left[ {\begin{matrix}
		{A}&{B}\\
		{C}&{D}
		\end{matrix}} \right]\ker \left[ {\begin{matrix}
		{C}&{D}
		\end{matrix}} \right]& =\left[ {\begin{matrix}
		{\left[ {\begin{matrix}
				A&B
				\end{matrix}} \right]\ker \left[ {\begin{matrix}
				C&D
				\end{matrix}} \right]}\\
		*
		\end{matrix}} \right] + {\rm Im\,}\left[ {\begin{matrix}
		0\\
		{{I_m}}
		\end{matrix}} \right] \\&= {\left[ {\begin{matrix}
			I_q&0\\
			0&0
			\end{matrix}} \right]^{ - 1}}{\mathop{\rm Im\,}\nolimits} \left[ {\begin{matrix}
		A&B\\
		C&D
		\end{matrix}} \right]. 
	\end{align*}
	Then we have
	\begin{align}\label{tildeWwongform}
	{\mathbfcal{W}_{i + 1}}(\mathbf{\Lambda}) =\left[ {\begin{matrix}
		I_q&0\\
		0&0
		\end{matrix}} \right]^{ - 1}\left[ {\begin{matrix}
		{A}&{B}\\
		C&D
		\end{matrix}} \right] {\mathbfcal{W}_i}(\mathbf{\Lambda}).
	\end{align}
	\red{Observe that the inductive formula (\ref{tildeWwongform}) for ${\mathbfcal{W}_{i + 1}}(\mathbf{\Lambda})$ coincides with the inductive formula (\ref{Wrealized}) for the Wong sequence $\mathscr W_{i+1}(\Delta^{Impl})$. Since $\mathscr W_0(\Delta^{Impl})=\mathbfcal{W}_0(\mathbf{\Lambda})=\{0\}$, we deduce that $\mathscr W_i(\Delta^{Impl})=\mathbfcal{W}_i(\mathbf{\Lambda})$ for $i\in \mathbb{N}$.}
	
	Step 2:	Subsequently, we \red{will} prove \red{that} for $i\in \mathbb{N}$, 
	\begin{align}\label{rela tildeWbarW}
	\mathbfcal{W}_{i+1}(\mathbf{\Lambda})=\left[ \begin{matrix}
	I_q&0\\
	0&0
	\end{matrix} \right]^{ - 1}\left[ {\begin{matrix}
		{\mathcal{W}_{i}({\Lambda})}\\
		{0}
		\end{matrix}} \right].
	\end{align}
	Considering (\ref{barW}) for $\Lambda$, we have
	\begin{align*}
	\left[ {\begin{matrix}
		{{\mathcal{W}_{i + 1}}}(\Lambda)\\
		0
		\end{matrix}} \right]& = \left[ {\begin{matrix}
		{A}&{B}\\
		0&0
		\end{matrix}} \right]\left(  {\left[ {\begin{matrix}
			{{\mathcal{W}_i}}(\Lambda)\\
			{{\mathbb{R}^m}}
			\end{matrix}} \right] \cap \ker \left[ {\begin{matrix}
			C&D
			\end{matrix}} \right]} \right) \\ &= \left[ {\begin{matrix}
		{A}&{B}\\
		0&0
		\end{matrix}} \right]\left(  {\left(\left[ \begin{matrix}
		I_q&0\\
		0&0
		\end{matrix} \right]^{ - 1}\left[ \begin{matrix}
		\mathcal{W}_i(\Lambda)\\
		0
		\end{matrix} \right] \right) \cap \ker \left[ {\begin{matrix}
			C&D
			\end{matrix}} \right]} \right) ,
	\end{align*}
	which implies that
	\begin{align}\label{tildeWbarWrelation}
	\left[ \begin{matrix}
	I_q&0\\
	0&0
	\end{matrix} \right]^{ - 1}\left[ {\begin{matrix}
		{{\mathcal{W}_{i + 1}}}(\Lambda)\\
		0
		\end{matrix}} \right] = \left[ {\begin{matrix}
		{A}&{B}\\
		0&0
		\end{matrix}} \right]\left(  \left[ \begin{matrix}
	I_q&0\\
	0&0
	\end{matrix} \right]^{ - 1}\left[ {\begin{matrix}
		{{\mathcal{W}_i}}(\Lambda)\\
		0
		\end{matrix}} \right] \cap \ker \left[ {\begin{matrix}
		C&D
		\end{matrix}} \right] \right)  + {\rm Im\,}\left[ {\begin{matrix}
		0\\
		{{I_m}}
		\end{matrix}} \right].
	\end{align}
	\red{Observe that the inductive formula (\ref{tildeWbarWrelation}) for $\left[ \begin{matrix}
		I_q&0\\
		0&0
		\end{matrix} \right]^{ - 1}\left[ {\begin{matrix}
			{{\mathcal{W}_{i+1}}}(\Lambda)\\
			0
			\end{matrix}} \right]$ coincides with the inductive formula (\ref{tildeWreali}) for $\mathbfcal{W}_{i+1}(\mathbf{\Lambda})$.}   Since ${\mathbfcal{W}_1}(\Lambda) = \left[ \begin{matrix}
	I_q&0\\
	0&0
	\end{matrix} \right]^{ - 1}\left[ {\begin{matrix}
		{{\mathcal{W}_0}}(\Lambda)\\
		0
		\end{matrix}} \right] = {\rm Im\,}\left[ {\begin{matrix}
		0\\
		{{I_m}}
		\end{matrix}} \right]$, we have $\mathbfcal{W}_{i+1}(\mathbf{\Lambda})=\left[ \begin{matrix}
	I_q&0\\
	0&0
	\end{matrix} \right]^{-1}\left[ {\begin{matrix}
		{\mathcal{W}_{i}({\Lambda})}\\
		{0}
		\end{matrix}} \right]$ for all $i\in \mathbb N$.
	
	Step 3: Equation (\ref{rela WtildeW}) and \red{the fact that} $\mathscr W^*$ and $\mathbfcal{W}^*$ are the limits of $\mathscr W_i$ and $\mathbfcal{W}_i$, respectively, yield $\mathscr W^*(\Delta)=\mathbfcal{W}^*(\mathbf{\Lambda})$. Equation (\ref{rela tildeWbarW}) and \red{the fact that} $\mathcal{W}^*$ and $\mathbfcal{W}^*$ are the limits of $\mathcal{W}_i$ and $\mathbfcal{W}_i$, respectively, yield $\mathbfcal{W}^*(\mathbf{\Lambda})=\left[ \begin{matrix}
	I_q&0\\
	0&0
	\end{matrix} \right]^{ - 1}\left[ \begin{matrix}
	\mathcal{W}^*({\Lambda})\\
	0
	\end{matrix} \right]$. Thus \red{using} equation (\ref{VWDeltaIm_Delta}), we  prove (ii) of Proposition \ref{relation invariant subs}.
\end{proof}
\subsection{Proof of Proposition \ref{Subspaces of the dual system}}\label{sec:propolem}
\red{In this proof, we will need the following two lemmata.  Denote by $\mathbb{F}({{\mathcal{V}_i}(\Lambda)})$ the class of maps $F:\mathbb{R}^q\rightarrow\mathbb{R}^m$ satisfying $(A+BF){\mathcal{V}_{i+1}(\Lambda)}\subset{{\mathcal{V}_i}(\Lambda)}$ and $(C+DF)\mathcal{V}_{i+1}(\Lambda)=0$.}
\begin{lem}\label{subs relationex}
	\red{	Given $\Delta_{l,n}=(E,H)$, its $(Q,P)$-explicitation $\Lambda=(A,B,C,D)\in {\rm Expl}(\Delta)$, and $\Delta^{Impl}={\rm Impl}(\Lambda)$,  consider the Wong sequences $\mathscr V_i$, $\mathscr W_i$ of both $\Delta$ and $\Delta^{Impl}$, given by Definition \ref{VWrealized} and the subspaces $\mathcal{V}_i$, $\mathcal{W}_i$ of $\Lambda$, given by Lemma \ref{tildeVWreali}.  Then for $i\in \mathbb N$, we have}
	\begin{align}\label{RelationV2}
	{\mathscr V}_{i+1}({\Delta^{Impl}})=P{\mathscr V}_{i+1}({\Delta}) = \left[ {\begin{matrix}
		{{\mathcal{V}_{i+1}}(\Lambda)}\\
		{F_i  {\mathcal{V}_{i+1}}(\Lambda)}
		\end{matrix}} \right] +  \left[ {\begin{matrix}
		0\\
		{{\mathcal{U}_i}(\Lambda)}
		\end{matrix}} \right],
	\end{align}
	where $F_i\in \mathbb{F}({{\mathcal{V}_i}(\Lambda)})$ \red{and} 
	\begin{align}\label{RelationW2}
	{\mathscr W}_{i+1}({\Delta^{Impl}})=P{\mathscr W}_{i + 1}({\Delta}) = \left[ {\begin{matrix}
		{{\mathcal{W}_i}(\Lambda)}\\
		* 
		\end{matrix}} \right] + \left[ {\begin{matrix}
		0\\
		{\mathscr U}(\Lambda)
		\end{matrix}} \right].
	\end{align}
\end{lem}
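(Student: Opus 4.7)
The plan is to deduce both identities from the formulas already established inside the proof of Proposition \ref{relation invariant subs}, and then to extract the fibered structure over $\mathcal V_{i+1}(\Lambda)$ using a suitable linear selection $F_i$.

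First, since $\Lambda$ is a $(Q,P)$-explicitation of $\Delta$, Theorem \ref{main theorem}(i) gives $\Delta\mathop\sim\limits^{ex}\Delta^{Impl}$ via $(Q,P)$, so $\mathscr V_i(\Delta^{Impl})=P\mathscr V_i(\Delta)$ and $\mathscr W_i(\Delta^{Impl})=P\mathscr W_i(\Delta)$ follow from the invariance of the Wong sequences under ex-equivalence (see equation (\ref{VWDeltaIm_Delta}) in the proof of Proposition \ref{relation invariant subs}). This takes care of the first equality in both (\ref{RelationV2}) and (\ref{RelationW2}), and reduces the problem to describing $\mathscr V_{i+1}(\Delta^{Impl})$ and $\mathscr W_{i+1}(\Delta^{Impl})$ via the data of $\Lambda$.

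For the $\mathscr V$-identity, I would use the formula $\mathscr V_{i+1}(\Delta^{Impl})=\bigl[\begin{smallmatrix}A&B\\C&D\end{smallmatrix}\bigr]^{-1}\bigl[\begin{smallmatrix}\mathcal V_i(\Lambda)\\0\end{smallmatrix}\bigr]$ obtained in Step 2 of the proof of Proposition \ref{relation invariant subs}(i). Concretely, this describes $\mathscr V_{i+1}(\Delta^{Impl})$ as the set of pairs $(z,u)$ with $Az+Bu\in\mathcal V_i(\Lambda)$ and $Cz+Du=0$. Comparing with the inductive formula (\ref{barV}) for $\mathcal V_{i+1}(\Lambda)$, the projection onto the $z$-component is exactly $\mathcal V_{i+1}(\Lambda)$, while the fiber over $z=0$ is $\{u:Bu\in\mathcal V_i,\ Du=0\}=\mathcal U_i(\Lambda)$. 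To turn this fibered description into a direct-sum-type formula, I would pick a linear map $F_i\in\mathbb F(\mathcal V_i(\Lambda))$; then for $z\in\mathcal V_{i+1}(\Lambda)$ the vector $(z,F_iz)$ lies in $\mathscr V_{i+1}(\Delta^{Impl})$, and for any other $(z,u)$ in $\mathscr V_{i+1}(\Delta^{Impl})$ the difference $(0,u-F_iz)$ satisfies $B(u-F_iz)\in\mathcal V_i$ and $D(u-F_iz)=0$, hence $u-F_iz\in\mathcal U_i(\Lambda)$. The reverse inclusion follows by direct substitution using $(A+BF_i)\mathcal V_{i+1}\subseteq\mathcal V_i$ and $(C+DF_i)\mathcal V_{i+1}=0$.

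The only genuinely nontrivial point is the \emph{existence} of such an $F_i\in\mathbb F(\mathcal V_i(\Lambda))$: for every $z\in\mathcal V_{i+1}(\Lambda)$, by definition of $\mathcal V_{i+1}$ there exists some $u_z$ with $Az+Bu_z\in\mathcal V_i$ and $Cz+Du_z=0$; selecting such $u_z$ on a basis of $\mathcal V_{i+1}(\Lambda)$, extending linearly to $\mathcal V_{i+1}(\Lambda)$, and then extending by zero on a complement gives the required linear $F_i:\mathbb R^q\to\mathbb R^m$. This is the classical friend-map construction for (null-output) $(A,B)$-controlled invariant subspaces, and it is the main technical step of the proof.

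For the $\mathscr W$-identity the situation is simpler: by Step 2 of the proof of Proposition \ref{relation invariant subs}(ii) we have $\mathscr W_{i+1}(\Delta^{Impl})=\mathbfcal W_{i+1}(\mathbf\Lambda)=\bigl[\begin{smallmatrix}I_q&0\\0&0\end{smallmatrix}\bigr]^{-1}\bigl[\begin{smallmatrix}\mathcal W_i(\Lambda)\\0\end{smallmatrix}\bigr]$. Unpacking the preimage gives $\mathscr W_{i+1}(\Delta^{Impl})=\{(z,u):z\in\mathcal W_i(\Lambda),\ u\in\mathbb R^m\}=\mathcal W_i(\Lambda)\times\mathscr U$, which is precisely the right-hand side of (\ref{RelationW2}) (the ``$\ast$'' being absorbed by the full summand $\{0\}\times\mathscr U(\Lambda)$). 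I therefore expect the $\mathscr V$-case, and specifically the construction of a common friend $F_i\in\mathbb F(\mathcal V_i(\Lambda))$ valid on all of $\mathcal V_{i+1}(\Lambda)$, to be the only delicate step; everything else reduces to re-reading the formulas already derived inside Proposition \ref{relation invariant subs}.
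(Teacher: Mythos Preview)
Your proposal is correct and follows essentially the same route as the paper. Both arguments reduce (\ref{RelationV2}) to the preimage formula $\mathscr V_{i+1}(\Delta^{Impl})=\bigl[\begin{smallmatrix}A&B\\C&D\end{smallmatrix}\bigr]^{-1}\bigl[\begin{smallmatrix}\mathcal V_i\\0\end{smallmatrix}\bigr]$ from the proof of Proposition~\ref{relation invariant subs}, identify the $z$-projection with $\mathcal V_{i+1}$ and the fiber over $z=0$ with $\mathcal U_i$, and use a linear selection to produce the friend $F_i$; for (\ref{RelationW2}) both simply unpack the preimage $\bigl[\begin{smallmatrix}I_q&0\\0&0\end{smallmatrix}\bigr]^{-1}\bigl[\begin{smallmatrix}\mathcal W_i\\0\end{smallmatrix}\bigr]$. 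The only cosmetic difference is the order: the paper builds $F_i$ from a chosen basis of $\mathscr V_{i+1}(\Delta^{Impl})$ and then checks $F_i\in\mathbb F(\mathcal V_i)$, whereas you first pick an arbitrary $F_i\in\mathbb F(\mathcal V_i)$ (after justifying nonemptiness) and then verify the two inclusions---which incidentally shows the identity holds for \emph{every} such $F_i$, not just one.
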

\begin{lem}\label{tildeVWdualwongform}
	Consider the subspace sequences ${\mathbfcal{V}_{i}}$ and ${\mathbfcal{W}_{i}}$ of $\mathbf{\Lambda}^d$, given by Lemma \ref{tildeVWreali}. Then for $i\in \mathbb{N}$, the following holds
	\begin{align}
	P^{T}\mathbfcal{W}_{i+1}( \mathbf{\Lambda}^d)=H^T(E^T)^{-1}\left( P^{T}\mathbfcal{W}_{i}( \mathbf{\Lambda}^d )\right) 	\label{tildeWdualwongform},\\
	P^{T}\mathbfcal{V}_{i+1}\left( {\mathbf{\Lambda}^d} \right)=E^T(H^T)^{-1}\left( P^{T}\mathbfcal{V}_{i}( {\mathbf{\Lambda}^d} )\right) \label{tildeVdualwongform}.
	\end{align}
\end{lem}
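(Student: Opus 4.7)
The plan is to derive a Wong-like recursion for $\mathbfcal{V}_i(\mathbf{\Lambda}^d)$ and $\mathbfcal{W}_i(\mathbf{\Lambda}^d)$ in close analogy with Equations (\ref{tildeVwongform}) and (\ref{tildeWwongform}) of Step~1 in the proof of Proposition \ref{relation invariant subs}, and then to transport it to the pencil $(E,H)$ through the transposed ex-equivalence that already appears in the proof of Proposition \ref{explifdual}. The algebraic tools that do the work are the preimage composition rule $(AB)^{-1}\mathscr A = B^{-1}(A^{-1}\mathscr A)$ and, for any invertible $R$, the elementary identity $R^{-1}\mathscr A = \{y : Ry \in \mathscr A\}$, used to commute $R^{\pm T}$ across preimage symbols.

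For the first step, I would introduce the local shorthands
\begin{align*}
\tilde M = \left[ \begin{matrix} I_q & 0 \\ 0 & 0 \end{matrix}\right],\qquad M = \left[ \begin{matrix} A^T & C^T \\ B^T & D^T \end{matrix}\right],
\end{align*}
so that the block structure of $\mathbf{A}^T$, $\mathbf{B}^T$, $\mathbf{C}^T$ can be read off $\tilde M$ and $M$. Plugging these into the recursions of Lemma \ref{tildeVWreali} for $\mathbf{\Lambda}^d=(\mathbf{A}^T,\mathbf{C}^T,\mathbf{B}^T)$ and carrying out the same elementary manipulation (intersect with $\ker \mathbf{B}^T$, absorb ${\rm Im\,}\mathbf{C}^T$ via a free auxiliary variable) that produced (\ref{tildeVwongform}) and (\ref{tildeWwongform}) in the primal case yields the compact recursions
\begin{align*}
\mathbfcal{V}_{i+1}(\mathbf{\Lambda}^d) = \tilde M\, M^{-1}\mathbfcal{V}_i(\mathbf{\Lambda}^d),\qquad \mathbfcal{W}_{i+1}(\mathbf{\Lambda}^d) = M\, \tilde M^{-1}\mathbfcal{W}_i(\mathbf{\Lambda}^d).
\end{align*}

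For the second step, transposition of the defining relation of the $(Q,P)$-explicitation, as already performed in the proof of Proposition \ref{explifdual}, gives $\tilde M = P^{-T} E^T Q^T$ and $M = P^{-T} H^T Q^T$. Applying $P^T$ to the $\mathbfcal{V}$-recursion and using $P^T\tilde M = E^T Q^T$ together with the identity $Q^T M^{-1}\mathscr A = (H^T)^{-1}(P^T\mathscr A)$ (obtained by the substitution $z = Q^T y$ in $M^{-1}\mathscr A = \{y : P^{-T}H^T Q^T y \in \mathscr A\}$) produces (\ref{tildeVdualwongform}); the parallel computation for the $\mathbfcal{W}$-recursion, with $\tilde M$ and $M$ interchanging roles of image and preimage, produces (\ref{tildeWdualwongform}).

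The main obstacle will be the bookkeeping with the nonsquare maps $\tilde M, M : \mathbb{R}^l \to \mathbb{R}^n$: at each step one must track which side of a preimage symbol receives the invertible factor $P^{\pm T}$ or $Q^{\pm T}$, since intermediate subspaces live alternately in $\mathbb R^l$ and $\mathbb R^n$. A secondary subtlety is that, unlike ${\rm Im\,}\mathbf{B}$ for the primal prolongation, the ``input'' subspace ${\rm Im\,}\mathbf{C}^T$ of $\mathbf{\Lambda}^d$ is not coordinate-aligned, so the first step requires a slightly more careful kernel/image matching than in the primal case; the underlying scheme, however, is identical.
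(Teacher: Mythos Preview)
Your proposal is correct and follows essentially the same route as the paper: first establish the compact recursions $\mathbfcal{V}_{i+1}(\mathbf{\Lambda}^d)=\tilde M\,M^{-1}\mathbfcal{V}_i(\mathbf{\Lambda}^d)$ and $\mathbfcal{W}_{i+1}(\mathbf{\Lambda}^d)=M\,\tilde M^{-1}\mathbfcal{W}_i(\mathbf{\Lambda}^d)$ (the paper's equation (\ref{tildeWdualwongformpre})), then substitute the transposed explicitation relations $\tilde M=P^{-T}E^TQ^T$, $M=P^{-T}H^TQ^T$ and commute the invertible factors across the preimage symbols. The only difference is presentational: you use the shorthands $\tilde M,M$ and appeal to the analogous primal computation, whereas the paper writes out the block manipulations explicitly for each of the two recursions.
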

\begin{proof}[Proof of Lemma \ref{subs relationex}]
	We first show \red{that} equation (\ref{RelationV2}) holds. Let independent vectors $v_1=\left[ {\begin{matrix}
		{{v^1_1}}\\
		{{v^2_1}}
		\end{matrix}} \right],...,v_\alpha=\left[ {\begin{matrix}
		{{v^1_\alpha}}\\
		{{v^2_\alpha}}
		\end{matrix}} \right]\in \mathbb{R}^n$ form a basis of 
	$$
	P\mathscr V_{i+1}(\Delta)\mathop = \limits^{(\ref{VWDeltaIm_Delta})}\mathscr V_{i+1}(\Delta^{Impl})\mathop = \limits^{(\ref{relation of VbarV})}{\left[ {\begin{matrix}
			{A}&{B}\\
			{C}&{D}
			\end{matrix}} \right]^{ - 1}} \left[ {\begin{matrix}
		{{\mathcal{V}_{i}}(\Lambda)}\\
		0
		\end{matrix}} \right], 
	$$ 
	where $v^1_j\in \mathbb{R}^q, v^2_j\in \mathbb{R}^m, j=1,2,..., \alpha$ (\red{implying that} $\dim\,(\mathscr V_{i+1}(\Delta^{Impl}))=\alpha$). Now without loss of generality, assume  $v^1_j \ne 0$ for $j=1,...,\kappa$ and  $v^1_j=0$ for $j=\kappa+1,...,\alpha$, \red{where $\kappa< \alpha$ is the number of non-zero vectors $v^1_j$}.  Then from equation (\ref{VbarVRelationpre}), it can be deduced that $v^1_j$ for $j=1,...,\kappa$ form a basis of ${\mathcal{V}_{i+1}({\Lambda})}$. Moreover, from (\ref{relation of VbarV}), it is not hard to see that  $v^2_j$ for $j=\kappa+1,...,\alpha$ form a basis of $\mathcal{U}_i(\Lambda)$.  Let $F_i\in \mathbb R^{m\times \kappa}$ \red{be} such that $F_iv_j^1=v_j^2$ for $j=1,...,\kappa$ (such $F_i$ exists), then $v_1,\dots,v_\alpha$ form a basis of 
	$
	\left[ {\begin{matrix}
		{{\mathcal{V}_{i+1}}(\Lambda)}\\
		{F_i  {\mathcal{V}_{i+1}}(\Lambda)}
		\end{matrix}} \right]+\left[ {\begin{matrix}
		0\\
		{{\mathcal{U}_i}(\Lambda)}
		\end{matrix}} \right]
	$. \red{Therefore,}
	$$
	\left[ {\begin{matrix}
		{{\mathcal{V}_{i+1}}(\Lambda)}\\
		{F_i {\mathcal{V}_{i+1}}(\Lambda)}
		\end{matrix}} \right] +  \left[ {\begin{matrix}
		0\\
		{{\mathcal{U}_{i}}(\Lambda)}
		\end{matrix}} \right]= {\left[ {\begin{matrix}
			{A}&{B}\\
			{C}&{D}
			\end{matrix}} \right]^{ - 1}} \left[ {\begin{matrix}
		{{\mathcal{V}_{i}}(\Lambda)}\\
		0
		\end{matrix}} \right],
	$$
	\red{because both spaces have the same basis $v_1,\dots,v_{\alpha}$.}
	We now prove \red{that for any choice} of $F_i$, \red{we have} $F_i\in \mathbb{F}({{\mathcal{V}_i}(\Lambda)})$. Pre-multiply the above equation by $\left[ {\begin{matrix}
		{A}&{B}\\
		{C}&{D}
		\end{matrix}} \right]$ \red{on the left} \red{to obtain}
	\begin{align*}
	\left[ {\begin{matrix}
		(A+BF_i)	{{\mathcal{V}_{i+1}}(\Lambda)}\\
		{(C+DF_i) {\mathcal{V}_{i+1}}(\Lambda)}
		\end{matrix}} \right] +  \left[ {\begin{matrix}
		B{{\mathcal{U}_{i}}(\Lambda)}\\
		D{{\mathcal{U}_{i}}(\Lambda)}
		\end{matrix}} \right]\subseteq \left[ {\begin{matrix}
		{{\mathcal{V}_{i}}(\Lambda)}\\
		0
		\end{matrix}} \right]. 
	\end{align*} 
	Moreover, we get $
	\left[ {\begin{matrix}
		B{{\mathcal{U}_i}(\Lambda)}\\
		D{{\mathcal{U}_i}(\Lambda)}
		\end{matrix}} \right]\subseteq \left[ {\begin{matrix}
		{{\mathcal{V}_i}(\Lambda)}\\
		0
		\end{matrix}} \right]$ by (\ref{barU}). Thus it is easy to see \red{that} $(A+BF_i)	{{\mathcal{V}_{i+1}}(\Lambda)}\subseteq {\mathcal{V}_i}$ and $(C+DF_i)	{{\mathcal{V}_{i+1}}(\Lambda)}=0$. 
	
	Subsequently, we show \red{that} equation (\ref{RelationW2}) holds. By (\ref{rela WtildeW}) and (\ref{rela tildeWbarW}), it follows that for $i\in \mathbb{N}$,
	\begin{align}\label{WbarW relationpre}
	{{\mathscr W}_{i + 1}}({\Delta^{Impl}}) = \left[ {\begin{matrix}
		I_q&0\\
		0&0
		\end{matrix}} \right]^{-1}\left[ {\begin{matrix}
		{{\mathcal{W}_i}(\Lambda)}\\
		0
		\end{matrix}} \right].
	\end{align}
	Then by (\ref{VWDeltaIm_Delta}), we have $\mathscr W_{i+1}(\Delta^{Impl})=P\mathscr W_{i+1}(\Delta)$ \red{and we complete the proof of (\ref{RelationW2}) by calculating explicitly the right-hand side of (\ref{WbarW relationpre}).}
\end{proof}
\begin{proof}[Proof of Lemma \ref{tildeVWdualwongform}]
	Notice that
	$
	\mathbf{\Lambda}^d_{n,p,m} = \left( {\left[ {\begin{matrix}
			A^T&0\\
			B^T&0
			\end{matrix}} \right],\left[ {\begin{matrix}
			C^T\\
			D^T
			\end{matrix}} \right],\left[ {\begin{matrix}
			0&{{I_m}}
			\end{matrix}} \right]} \right).
	$
	We first prove \red{that} the following \red{relations} hold 
	\begin{align}\label{tildeWdualwongformpre}
	\begin{array}{c}
	\mathbfcal{W}_{i+1}( \mathbf{\Lambda}^d)=\left[ {\begin{matrix}
	A&B\\
	C&D
	\end{matrix}} \right]^T\left[ {\begin{matrix}
	I_q&0\\
	0&0
	\end{matrix}} \right]^{-T}\mathbfcal{W}_{i}( \mathbf{\Lambda}^d ), \\
\mathbfcal{V}_{i+1}( \mathbf{\Lambda}^d)=\left[ {\begin{matrix}
	I_q&0\\
	0&0
	\end{matrix}} \right]^T\left[ {\begin{matrix}
	A&B\\
	C&D
	\end{matrix}} \right]^{-T}\mathbfcal{V}_{i}( \mathbf{\Lambda}^d ).
	\end{array}
	\end{align}
	For ${\mathbf{\Lambda}^d}$, calculate $\mathbfcal{W}_{i+1}$ \red{via} (\ref{tildeWreali}), to get for $i\in \mathbb N$:
	\begin{align*}
	\mathbfcal{W}_{i+1}\left( {\mathbf{\Lambda}^d} \right) = \left[ {\begin{matrix}
		A^T&0\\
		B^T&0
		\end{matrix}} \right]\left(  {{\mathbfcal{W}_i}\left( {\mathbf{\Lambda}^d} \right) \cap \ker \left[ {\begin{matrix}
			0&{{I_m}}
			\end{matrix}} \right]} \right)  + {\mathop{\rm Im\,}\nolimits} \left[ {\begin{matrix}
		C^T\\
		D^T
		\end{matrix}} \right].
	\end{align*}
	Moreover, it is not hard to see that
	\begin{align*}
	\left[ {\begin{matrix}
		I_q&0\\
		0&0
		\end{matrix}} \right]^{-T}{\mathbfcal{W}_{i}}\left( {\mathbf{\Lambda}^d} \right)&=\left[ {\begin{matrix}
		I_q&0\\
		0&0
		\end{matrix}} \right]\left( \mathbfcal{W}_i\left( {\mathbf{\Lambda}^d} \right)\cap \ker \left[ {\begin{matrix}
		0&I_m
		\end{matrix}} \right]\right) +{\rm Im\,}\left[ \begin{matrix}
	0\\
	I_p
	\end{matrix} \right]. 
	\end{align*}
	Pre-multiply both sides of the above equation by $\left[ {\begin{matrix}
		A&B\\
		C&D
		\end{matrix}} \right]^{T}$, it follows that
	\begin{align*}   
	\left[ {\begin{matrix}
		A&B\\
		C&D
		\end{matrix}} \right]^T\left[ {\begin{matrix}
		I_q&0\\
		0&0
		\end{matrix}} \right]^{-T}\mathbfcal{W}_{i}( \mathbf{\Lambda}^d )
	 &= \left[ {\begin{matrix}
		A^T&0\\
		B^T&0
		\end{matrix}} \right]\left(  {{\mathbfcal{W}_i}\left( {\mathbf{\Lambda}^d} \right) \cap \ker \left[ {\begin{matrix}
			0&{{I_m}}
			\end{matrix}} \right]} \right)  + {\mathop{\rm Im\,}\nolimits} \left[ {\begin{matrix}
		C^T\\
		D^T
		\end{matrix}} \right]\\& ={\mathbfcal{W}_{i+1}}\left( {\mathbf{\Lambda}^d} \right).
	\end{align*}
	Then calculate $\mathbfcal{V}_{i+1}$ for $\mathbf{\Lambda}^d$, \red{via} (\ref{tildeVreali}),  to get for $i\in \mathbb{N}$,
	\begin{align}\label{tildeVdualpro}
	{\mathbfcal{V}_{i+1}}\left( {\mathbf{\Lambda}^d} \right) = \ker \left[ {\begin{matrix}
		0&{{I_m}}
		\end{matrix}} \right] \cap \left[ {\begin{matrix}
		A^T&0\\
		B^T&0
		\end{matrix}} \right]^{-1}\left(  \mathbfcal{V}_i\left( {\mathbf{\Lambda}^d} \right)   + {\mathop{\rm Im\,}\nolimits} \left[ {\begin{matrix}
		C^T\\
		D^T
		\end{matrix}} \right] \right) .
	\end{align}
	Rewrite  (\ref{tildeVdualpro}) as
	\begin{align*}
	\mathbfcal{V}_{i+1}\left( {\mathbf{\Lambda}^d} \right)&= {\rm Im\,} \left[ {\begin{matrix}
		{{I_q}}\\
		0
		\end{matrix}} \right] \cap \left( \left[ {\begin{matrix}
		I_n&0
		\end{matrix}} \right]\ker \left[ {\begin{matrix}
		{ {\begin{matrix}
				{{{\left( {A} \right)}^T}}&0\\
				{{{\left( {B} \right)}^T}}&0
				\end{matrix}} }&{{{\tilde V}_i}\left( {\mathbf{\Lambda}^d} \right)}&{ {\begin{matrix}
				{{{\left( {C} \right)}^T}}\\
				{{{\left( {D} \right)}^T}}
				\end{matrix}} }
		\end{matrix}} \right]\right)\\& =\left[ \begin{matrix}
	\left[ {\begin{matrix}
		I_q&0
		\end{matrix}} \right]\ker \left[ \begin{matrix}
	\begin{matrix}
	A^T&C^T\\
	B^T&D^T
	\end{matrix}&{{{\tilde V}_i}\left( {\mathbf{\Lambda}^d} \right)}
	\end{matrix} \right]\\
	0
	\end{matrix} \right]=\left[ \begin{matrix}
	I_q&0\\
	0&0
	\end{matrix} \right]\left[ \begin{matrix}
	A^T&C^T\\
	B^T&D^T
	\end{matrix}\right]^{-1}{\mathbfcal{V}_{i}}\left( {\mathbf{\Lambda}^d} \right).
	\end{align*}
	Therefore, the proof of (\ref{tildeWdualwongformpre}) is complete. Consequently, substitute
	$$
	\left[ {\begin{matrix}
		A&B\\
		C&D
		\end{matrix}} \right]=QHP^{-1}, \ \ \ \   \left[ {\begin{matrix}
		I_q&0\\
		0&0
		\end{matrix}} \right]=~QEP^{-1}
	$$
	into (\ref{tildeWdualwongformpre}), \red{then} it is straightforward to see \red{that} (\ref{tildeWdualwongform}) and (\ref{tildeVdualwongform}) hold for \red{any} $i\in \mathbb N$. 
\end{proof}
\begin{proof}[Proof of Proposition \ref{Subspaces of the dual system}]
	Notice that since $\Lambda\in {\rm Expl}(\Delta)$, by Proposition \ref{explifdual}, we have $\Lambda^d\in {\rm Expl}(\Delta^d)$. Moreover, it is easy to see \blue{if $\Lambda$ is the $(Q,P)$-explicitation of $\Delta$, then $\Lambda^d$ is the $(P^{-T},Q^{-T})$-explicitation of $\Delta^d$.} The proof \red{will be done in} 3 steps.
	
	Step 1; Step 1a: We show that for $i\in \mathbb{N}$,
	\begin{align}\label{relWdual}
	{\mathscr W}_{i+1}(\Delta^d)=(E{\mathscr V}_i(\Delta))^\bot\Leftrightarrow \mathcal{W}_i(\Lambda^d)=(\mathcal{V}_i(\Lambda))^\bot. 
	\end{align}
	By $\Lambda^d\in {\rm Expl}(\Delta^d)$ and (\ref{RelationW2}) of Lemma \ref{subs relationex}, we \red{get}
	\begin{align*}
	Q^{-T}{\mathscr W}_{i+1}(\Delta^d)=\left[ {\begin{matrix}
		\mathcal{W}_i(\Lambda^d) \\
		* 
		\end{matrix}} \right] +{\rm Im\,}\left[ {\begin{matrix}
		0\\
		{I_p}
		\end{matrix}} \right].
	\end{align*}
	Moreover, we have 
	\begin{align*}
	(E{\mathscr V}_i(\Delta))^\bot&=(Q^{-1}QEP^{-1}P{\mathscr V}_i(\Delta))^\bot\mathop = \limits^{(\ref{VWDeltaIm_Delta})}(Q^{-1}\left[ {\begin{matrix}
		I_q&0\\
		0&0
		\end{matrix}} \right]{\mathscr V}_i(\Delta^{Impl}))^\bot\\&\mathop = \limits^{(\ref{VbarVRelationpre})}Q^{T}\left[ {\begin{matrix}
		{\mathcal{V}_{i}({\Lambda})}\\
		{0}
		\end{matrix}} \right]^\bot=Q^T\left( \left[ {\begin{matrix}
		(\mathcal{V}_i(\Lambda))^\bot \\
		* 
		\end{matrix}} \right] + {\rm Im\,}\left[ {\begin{matrix}
		0\\
		{I_p}
		\end{matrix}} \right]\right) .
	\end{align*}
	It is seen that ${\mathscr W}_{i+1}(\Delta^d)=(E{\mathscr V}_i(\Delta))^\bot$ if and only if $\mathcal{W}_i(\Lambda^d)=(\mathcal{V}_i(\Lambda))^\bot$.
	
	Step 1b: \red{In this step, we will prove} that for $i\in \mathbb{N}$,
	\begin{align}\label{relVdual}
	{\mathscr V}_{i}(\Delta^d)=(H{\mathscr W}_i(\Delta))^\bot\Leftrightarrow \mathcal{V}_i(\Lambda^d)=(\mathcal{W}_i(\Lambda))^\bot. 
	\end{align}
	
	\red{We first prove ``$\Rightarrow$'' of (\ref{relVdual}):} \red{Considering} equation (\ref{VWDeltaIm_Delta}) and  (\ref{VbarVRelationpre}) for $\Delta^d$, we can deduce that 
	\begin{align*}
	E^T\mathscr V_i(\Delta^d)=P^T\left[ \begin{matrix}
	I_q&0\\
	0&0
	\end{matrix}\right]^TQ^{-T}\mathscr V_i(\Delta^d)=P^T\left[ {\begin{matrix}
		\mathcal{V}_i(\Lambda^d)\\
		0
		\end{matrix}} \right].
	\end{align*}
	On the other hand, we have
	\begin{align*}
	E^T(H{\mathscr W}_{i}(\Delta))^\bot&=(E^{-1}H{\mathscr W}_{i}(\Delta))^\bot\mathop = \limits^{(\ref{Wrealized})}({\mathscr W}_{i+1}(\Delta))^\bot=(P^{-1}P{\mathscr W}_{i+1}(\Delta))^\bot\\&=(P^{-1})^{-T}(P{\mathscr W}_{i+1}(\Delta))^{\bot}\mathop = \limits^{(\ref{RelationW2})}P^T\left(\left[ {\begin{matrix}
		\mathcal{W}_i(\Lambda)\\
		* 
		\end{matrix}}\right] + \left[ \begin{matrix}
	0\\
	{\mathscr U}(\Lambda)
	\end{matrix} \right] \right)^\bot\\&=P^T\left[ {\begin{matrix}
		{{\mathcal{W}_i}(\Lambda)}^\bot\\
		0
		\end{matrix}} \right].
	\end{align*} 
	Now we can see \red{that} for $i\in \mathbb{N}$, if ${\mathscr V}_{i}(\Delta^d)=(H{\mathscr W}_i(\Delta))^\bot$, then $   \mathcal{V}_i(\Lambda^d)=(\mathcal{W}_i(\Lambda))^\bot$.	
	
	We then prove ``$\Leftarrow$'' of (\ref{relVdual}): By equation (\ref{VWDeltaIm_Delta}) and (\ref{relation of VbarV}), we can deduce that 
	\begin{align}\label{VtildeVreldual}
	Q^{-T}{\mathscr V}_{i+1}(\Delta^d)=\left[ {\begin{matrix}
		A^T&C^T\\
		B^T&D^T
		\end{matrix}} \right]^{-1}\left[ {\begin{matrix}
		{{\mathcal{V}_i}(\Lambda ^d)}\\
		0
		\end{matrix}} \right].
	\end{align} 
	We have 
	$$
	\begin{array}{ll}
	(H{\mathscr W}_{i+1}(\Delta))^\bot&=(Q^{-1}QHP^{-1}P{\mathscr W}_{i+1}(\Delta))^\bot=(Q^{-1}\left[ {\begin{matrix}
		A&B\\
		C&D
		\end{matrix}} \right]P{\mathscr W}_{i+1}(\Delta))^\bot\\&=\left(Q^{-1}\left[ {\begin{matrix}
		A&B\\
		C&D
		\end{matrix}} \right] \right)^{-T} \left( P{\mathscr W}_{i+1}(\Delta)\right)^\bot\\&\mathop = \limits^{(\ref{RelationW2})}Q^{T}\left[ \begin{matrix}
	A^T&C^T\\
	B^T&D^T
	\end{matrix} \right]^{-1}\left( \left[ {\begin{matrix}
		\mathcal{W}_i(\Lambda)\\
		* 
		\end{matrix}} \right] + \left[ \begin{matrix}
	0\\
	{\mathscr U}(\Lambda)
	\end{matrix} \right]\right)^\bot.
	\end{array}
	$$
	The above equation gives
	\begin{align}\label{WtildeWreldual}
	Q^{-T}(H{\mathscr W}_{i+1}(\Delta))^\bot=\left[ {\begin{matrix}
		A^T&C^T\\
		B^T&D^T
		\end{matrix}} \right]^{-1}\left[ {\begin{matrix}
		(\mathcal{W}_i({\Lambda}))^\bot\\
		0
		\end{matrix}} \right].
	\end{align}	
	Now  equations (\ref{VtildeVreldual}) and (\ref{WtildeWreldual}) yield that for $i\in \mathbb{N}$,  if $ \mathcal{V}_i(\Lambda^d)=(\mathcal{W}_i(\Lambda))^\bot$, then ${\mathscr V}_{i}(\Delta^d)=(H{\mathscr W}_{i}(\Delta))^\bot$. 
	Thus the proof of (\ref{relVdual}) is complete.
	
	Step 2\red{;} Step 2a: We prove that for $i\in \mathbb{N}$,
	\begin{align}\label{relWdual1}
	{\mathscr W}_{i+1}(\Delta^d)=(E{\mathscr V}_i(\Delta))^\bot\Leftrightarrow \mathbfcal{W}_i(\mathbf{\Lambda}^d)=(\mathbfcal{V}_i(\mathbf{\Lambda}))^\bot.
	\end{align}
	\red{Using} equation (\ref{tildeWdualwongform}) of Lemma \ref{tildeVWdualwongform}, we will prove by induction that for $i\in \mathbb{N}$,  
	\begin{align}\label{WtildeWdualsys}
	H^T{\mathscr W}_{i}(\Delta^d)=P^{T}\mathbfcal{W}_i(\mathbf{\Lambda}^d).
	\end{align}
	For $i=0$, $H^T{\mathscr W}_0(\Delta^d)=P^T\mathbfcal{W}_0(\mathbf{\Lambda}^d)=0$; If $H^T{\mathscr W}_{i}(\Delta^d)=P^T\mathbfcal{W}_i(\mathbf{\Lambda}^d)$, then $$
	H^T{\mathscr W}_{i+1}(\Delta^d)\mathop = \limits^{(\ref{Wrealized})}H^T(E^T)^{-1}H^T{\mathscr W}_{i}(\Delta^d)\!=\!H^T(E^T)^{-1}P^{T}\mathbfcal{W}_i(\mathbf{\Lambda}^d)\mathop = \limits^{(\ref{tildeWdualwongform})}P^{T}\mathbfcal{W}_{i+1}(\mathbf{\Lambda}^d).
	$$ 
	\red{By an induction argument,} (\ref{WtildeWdualsys}) holds for $i\in \mathbb N$.
	
	We now prove $``\Rightarrow"$ of (\ref{relWdual1}): Assume for $i\in \mathbb{N}$, ${\mathscr W}_{i+1}(\Delta^d)=(E{\mathscr V}_i(\Delta))^\bot$, it follows that
	\begin{align*}
	\mathbfcal{W}_{i+1}(\mathbf{\Lambda}^d)\mathop = \limits^{(\ref{WtildeWdualsys})}P^{-T}H^T{\mathscr W}_{i+1}(\Delta^d)=P^{-T}H^T(E{\mathscr V}_i(\Delta))^\bot=(PH^{-1}E{\mathscr V}_i(\Delta))^\bot\\\mathop = \limits^{(\ref{Vrealized})}(P{\mathscr V}_{i+1}(\Delta))=({\mathscr V}_{i+1}(\Delta^{Impl}))^\bot\mathop = \limits^{(\ref{rela VtildeV})}(\mathbfcal{V}_{i+1}(\mathbf{\Lambda}))^\bot.
	\end{align*}	
	We then prove $``\Leftarrow"$ of (\ref{relWdual1}):	Assume for $i\in \mathbb{N}$, $\mathbfcal{W}_i(\mathbf{\Lambda}^d)=(\mathbfcal{V}_i(\mathbf{\Lambda}))^\bot$, it follows that 
	\begin{align*}
	(E{\mathscr V}_i(\Delta))^\bot=E^{-T}({\mathscr V}_i(\Delta))^\bot=E^{-T}(P^{-1}{\mathscr V}_i(\Delta^{Impl}))^\bot\mathop = \limits^{(\ref{rela VtildeV})}E^{-T}(P^{-1}\mathbfcal{V}_i(\mathbf{\Lambda}))^\bot\\=E^{-T}P^{T}\mathbfcal{W}_i(\mathbf{\Lambda}^d)\mathop = \limits^{(\ref{WtildeWdualsys})}E^{-T}H^T{\mathscr W}_{i}(\Delta^d)\mathop = \limits^{(\ref{Wrealized})}{\mathscr W}_{i+1}(\Delta^d),
	\end{align*}
	\red{and the proof of (\ref{relWdual1}) is complete.}
	
	Step 2b: In this step, we show that for $i\in \mathbb{N}$, 
	\begin{align}\label{relVdual1}
	{\mathscr V}_{i}(\Delta^d)=(H{\mathscr W}_i(\Delta))^\bot\Leftrightarrow \mathbfcal{V}_i(\mathbf{\Lambda}^d)=(\mathbfcal{W}_i(\mathbf{\Lambda}))^\bot. 
	\end{align}
	\red{Using} equation (\ref{tildeVdualwongform}) of Lemma \ref{tildeVWdualwongform}, we will prove by induction that for $i\in \mathbb{N}$,     
	\begin{align}\label{VtildeVdualsys}
	{\mathscr V}_{i}(\Delta^d)=(H^T)^{-1}\left( P^T\mathbfcal{V}_i(\mathbf{\Lambda}^d)\right).
	\end{align}
	For $i=0$, ${\mathscr V}_0(\Delta^d)=\mathbb{R}^n=(H^T)^{-1}P^T\mathbfcal{V}_0(\mathbf{\Lambda}^d)$; If ${\mathscr V}_{i}(\Delta^d)=(H^T)^{-1}P^T\mathbfcal{V}_i(\mathbf{\Lambda}^d)$, then we get
	\begin{align*}
	{\mathscr V}_{i+1}(\Delta^d)&\mathop = \limits^{(\ref{Vrealized})}(H^T)^{-1}E^T{\mathscr V}_{i}(\Delta^d)=(H^T)^{-1}E^T(H^T)^{-1}P^T\mathbfcal{V}_i(\mathbf{\Lambda}^d)\\&\mathop = \limits^{(\ref{tildeVdualwongform})}(H^T)^{-1}P^T\mathbfcal{V}_{i+1}(\mathbf{\Lambda}^d).
	\end{align*}  
	\red{By an induction argument,} \red{(\ref{VtildeVdualsys}) holds for $i\in \mathbb N$}.
	
	We now prove $``\Rightarrow"$ of	(\ref{relVdual1}). Assume ${\mathscr V}_{i}(\Delta^d)=(H{\mathscr W}_i(\Delta))^\bot$, then
	\begin{align*}
	P^T\mathbfcal{V}_{i+1}(\mathbf{\Lambda}^d)\mathop = \limits^{(\ref{tildeVdualwongform})}E^TH^{-T}P^T{\mathbfcal{V} }_{i}(\mathbf{\Lambda}^d)\mathop = \limits^{(\ref{VtildeVdualsys})}E^T{\mathscr V}_i(\Delta^d)=E^T(H{\mathscr W}_i(\Delta))^\bot\\=(E^{-1}H{\mathscr W}_i(\Delta))^\bot\mathop = \limits^{(\ref{Wrealized})}({\mathscr W}_{i+1}(\Delta))^\bot=(P^{-1}{\mathscr W}_{i+1}(\Delta^{Impl}))^\bot\mathop=\limits^{(\ref{rela WtildeW})}P^T\mathbfcal W_{i+1}(\mathbf \Lambda),
	\end{align*} 
	We then prove $``\Leftarrow"$ of (\ref{relVdual1}): Assume $\mathbfcal{V}_i(\mathbf{\Lambda}^d)=(\mathbfcal{W}_i(\mathbf{\Lambda}))^\bot$, then for $i\in \mathbb{N}$,
	\begin{align*}
	(H{\mathscr W}_i(\Delta))^\bot&=(H^T)^{-1}({\mathscr W}_i(\Delta))^\bot=(H^T)^{-1}(P^{-1}{\mathscr W}_i(\Delta^{Impl}))^\bot\\&\mathop = \limits^{(\ref{rela WtildeW})}(H^T)^{-1}(P^{-1}\mathbfcal{W}_i(\mathbf{\Lambda}))^\bot=(H^T)^{-1}P^{T}\mathbfcal{V}_i(\mathbf{\Lambda}^d)\mathop= \limits^{(\ref{VtildeVdualsys})}{\mathscr V}_{i}(\Delta^d),
	\end{align*}
	\red{which completes the proof of (\ref{relVdual1})}.
	
	Step 3: Since $\mathscr V^*$, $\mathscr V^*$, $\mathcal{V}^*$, $\mathcal{W}^*$, $\mathbfcal{V}^*$, $\mathbfcal{W}^*$ are the limites of $\mathscr V_i$, $\mathscr V_i$, $\mathcal{V}_i$, $\mathcal{W}_i$, $\mathbfcal{V}_i$, $\mathbfcal{W}_i$, respectively, equations (\ref{relWdual}) and (\ref{relVdual}) prove \red{that} $(i)\Leftrightarrow (ii)$ \red{holds,} and equations (\ref{relWdual1}) and (\ref{relVdual1}) prove \red{that} $(i)\Leftrightarrow (iii)$ \red{holds}.
\end{proof}
\subsection{Proof of Proposition \ref{invariants relation}}\label{Pf:invariants relation}
\begin{proof}
	Note that the Kronecker invariants are \red{invariant} under ex-equivalence. By $\Delta\mathop \sim\limits^{ex}\Delta^{Impl}$, \red{in our proof we can work with} the Kronecker invariants of $\Delta^{Impl}$ instead of \red{those} of $\Delta$. In \red{what follows}, we will use the results of Lemma \ref{subs relationex} given in Section \ref{sec:propolem}.	
	
	(i) Recall Lemma \ref{invariants of DAEs}(i) for $\Delta^{Impl}$ and Lemma \ref{invariants of contrsys}(i) for $\Lambda$.  \red{For $i\in \mathbb{N}^+$, it holds that,}
	\begin{align}\label{Kirep}
	\mathscr{K}_i(\Delta^{Impl})&={\mathscr W}_i(\Delta^{Impl})\cap \mathscr V^*(\Delta^{Impl}) \nonumber\\
	&\mathop=\limits^{\rm Lemma \ \ref{subs relationex}}   \left( {\left[ {\begin{matrix}
			{\mathcal{W}_{i-1}(\Lambda)}\\
			*
			\end{matrix}} \right] + \left[ {\begin{matrix}
			0\\
			{\mathscr U}(\Lambda)
			\end{matrix}} \right]} \right) \cap \left( {\left[ {\begin{matrix}
			{\mathcal{V}^*(\Lambda)}\\
			F^*\mathcal{V}^*(\Lambda)
			\end{matrix}} \right] + \left[ {\begin{matrix}
			0\\
			\mathcal{U}^*(\Lambda)
			\end{matrix}} \right]} \right)  \nonumber \\ 
	&=  {\left[ {\begin{matrix}
			\mathcal{W}_{i-1}(\Lambda) \cap \mathcal{V}^*(\Lambda ) \\
			F^*\left( \mathcal{W}_{i-1}(\Lambda) \cap \mathcal{V}^*(\Lambda )\right) 
			\end{matrix}} \right] + \left[ {\begin{matrix}
			0\\
			\mathcal{U}^*(\Lambda)
			\end{matrix}} \right]}, 
	\end{align}
	\red{for a suitable} $F^*\in \mathbb{F}({{\mathcal{V}^*}(\Lambda)}) $. Then we have 
	\begin{align*}
	a\mathop=\limits^{\rm Lemma \ \ref{invariants of DAEs}(i)}\dim\,\left({\mathscr{K}}_1(\Delta^{Impl})\right)\mathop = \limits^{(\ref{Kirep})}\dim\,\left(  \left[ {\begin{matrix}
		0\\
		\mathcal{U}^*(\Lambda)
		\end{matrix}} \right]\right) =\dim\, (\mathcal{U}^*(\Lambda))\mathop=\limits^{\rm Lemma\ \ref{invariants of contrsys}(i)}a'.
	\end{align*}
	Moreover, it is seen that for $i\in \mathbb{N}$,
	\begin{align*}
	{\omega_i} &\mathop=\limits^{\rm Lemma \ \ref{invariants of DAEs}(i)} \dim\, \left( {{\mathscr{K}_{i + 2}}} (\Delta^{Impl})\right) - \dim\, \left( {{\mathscr{K}_{i+1}}}(\Delta^{Impl}) \right)\\
	&\mathop = \limits^{(\ref{Kirep})}\dim\, \left({\mathcal{W}_{i+1}}(\Lambda)\cap \mathcal{V}^*(\Lambda)\right)-\dim\, \left({\mathcal{W}_{i}}(\Lambda)\cap \mathcal{V}^*(\Lambda)\right)\\
	&=\dim\,(\mathcal R_{i+1}(\Lambda))-\dim\,(\mathcal R_{i}(\Lambda))\mathop=\limits^{\rm Lemma \ \ref{invariants of contrsys}(i)}\omega'_i.
	\end{align*}
	Now consider equations (\ref{varepsilon}) and (\ref{alpha}) \red{and} it is sufficient to show 
	\begin{align*}
	\left\{\begin{array}{lll}
	\begin{matrix}
	\varepsilon_j=\varepsilon'_j=0\\
	\varepsilon_j=\varepsilon'_j=i
	\end{matrix}&\begin{matrix}
	{\rm for}\\
	{\rm for}
	\end{matrix}&\begin{matrix}
	{1\le j \le a- \omega_0=a'- \omega'_0},\\
	a' - \omega'_{i - 1} + 1	=a - { \omega_{i - 1}} + 1\le j \le a- \omega_{i}=a'-{ \omega'}.
	\end{matrix}
	\end{array} \right.
	\end{align*}
	The statement that $d=d'$, $\eta_i=\eta'_i$ can be \red{proved in a similar way using dual objects}. It is not hard to see that for $i\in \mathbb{N}^+$,
	\begin{align*}
	\hat {\mathscr K}_i(\Delta^{Impl})&=\left(E\mathscr V_{i-1}(\Delta^{Impl})\right)^{\bot }\cap (H\mathscr W^*(\Delta^{Impl}))^{\bot }\\&\mathop=\limits^{\rm Prop. \ \ref{Subspaces of the dual system}(i)}\mathscr W_{i}((\Delta^{Impl})^d)\cap \mathscr V^*((\Delta^{Impl})^d)\\&\mathop=\limits^{\rm Lemma \ \ref{subs relationex}} {\left[ {\begin{matrix}
			\mathcal{W}_{i-1}(\Lambda ^d) \cap \mathcal{V}^*(\Lambda ^d) \\
			*
			\end{matrix}} \right] + \left[ {\begin{matrix}
			0\\
			\mathcal{U}^*(\Lambda ^d)
			\end{matrix}} \right]},
	\end{align*}
	where $(\Delta^{Impl})^d$ is the dual system of $\Delta^{Impl}$, which \red{coincides with} ${\rm Impl}(\Lambda^d)$.	It follows that
	\begin{align*}
	d\mathop=\limits^{\rm Lemma \ \ref{invariants of DAEs} (i)}\dim\,\left(\hat{\mathscr{K}}_1(\Delta^{Impl})\right)=\dim\,\left(  \left[ {\begin{matrix}
		0\\
		\mathcal{U}^*(\Lambda^d)
		\end{matrix}} \right]\right) =\dim\, ({\mathcal Y^*(\Lambda)})\mathop=\limits^{\rm Lemma \ \ref{invariants of contrsys}(i)}d'.
	\end{align*}
	We can also see that for $i\in \mathbb{N}$,
	\begin{align*}
	\hat {\omega}_i &= \dim\, \left( {\hat{\mathscr{K}_{i + 2}}}(\Delta^{Impl}) \right) - \dim\, \left( {\hat{\mathscr{K}_{i+1}}}(\Delta^{Impl}) \right)\\&=\dim\, \left({\mathcal{W}_{i+1}}(\Lambda ^d)\cap \mathcal{V}^*(\Lambda ^d)\right)-\dim\, \left({\mathcal{W}_{i}}(\Lambda ^d)\cap \mathcal{V}^*(\Lambda ^d)\right)\\
	&\mathop=\limits^{\rm Prop. \ref{Subspaces of the dual system}}\dim\,\left( (\mathcal{V}_{i+1})^{ \bot }\cap (\mathcal{W}^*)^{ \bot }\right) -\dim\,\left( (\mathcal{V}_{i})^{ \bot }\cap (\mathcal{W}^*)^{ \bot }\right) \\&=\dim\,(\hat {\mathcal R}_{i+1}({\Lambda}))-\dim\,(\hat {\mathcal R}_{i}({\Lambda}))=\hat \omega'_i.
	\end{align*}
	Now it is sufficient to show that 
	\begin{align*}
	\left\lbrace \begin{array}{lll}
	{\begin{matrix}
		{{\eta_j}}=\eta'_j=0\\
		{{\eta_j}}=\eta'_j=i
		\end{matrix}}&{\begin{matrix}
		{\rm for}\\
		{\rm for}
		\end{matrix}}&{\begin{matrix}
		{1\le j \le d- \hat \omega_0=h- \hat \omega'_0},\\
		{h - { \omega'_{i - 1}} + 1	={d - { \hat \omega_{i - 1}} + 1\le j \le d-{ \hat \omega_{i}}}=h-{ \hat \omega'}}.
		\end{matrix}}
	\end{array} \right.
	\end{align*}
	
	(ii) Recall Lemma \ref{invariants of DAEs}(ii) for $\Delta^{Impl}$ and Lemma \ref{invariants of contrsys}(ii) for $\Lambda$. We have for all $i\in\mathbb{N}^+$,
	\begin{align*}
	{\mathscr V^*}(\Delta^{Impl})+{\mathscr W}_i(\Delta^{Impl})&\mathop=\limits^{\rm Lemma \ref{subs relationex}}\left[ {\begin{matrix}
		{\mathcal{V}^*(\Lambda)}\\
		{F^* * \mathcal{V}^*(\Lambda)}
		\end{matrix}} \right] +  \left[ {\begin{matrix}
		0\\
		{{\mathcal{U}_i}(\Lambda)}
		\end{matrix}} \right]+ \left[ {\begin{matrix}
		{\mathcal{W}_{i-1}(\Lambda)}\\
		* 
		\end{matrix}} \right] + \left[ {\begin{matrix}
		0\\
		{\mathscr U}(\Lambda)
		\end{matrix}} \right]\\
	&= \left[ {\begin{matrix}
		{\mathcal{V}^*(\Lambda)}+{\mathcal{W}_{i-1}(\Lambda)}\\
		* 
		\end{matrix}} \right] + \left[ {\begin{matrix}
		0\\
		{\mathscr U}(\Lambda)
		\end{matrix}} \right].
	\end{align*}
	If $\nu=0$, then we have the following result \red{by} (\ref{nu}):
	\begin{align*}
	&{\mathscr V^*(\Delta^{Impl})}+{\mathscr W}_0(\Delta^{Impl})={\mathscr V^*}(\Delta^{Impl})+{\mathscr W}_{1}(\Delta^{Impl})\Rightarrow\\ &\left(\left[ {\begin{matrix}
		{\mathcal{V}^*(\Lambda)}\\
		F^*{\mathcal{V}^*(\Lambda)} 
		\end{matrix}} \right] + \left[ {\begin{matrix}
		0\\
		\mathcal{U}^*(\Lambda)
		\end{matrix}} \right]\right)=\left(\left[ {\begin{matrix}
		{\mathcal{V}^*(\Lambda)}\\
		* 
		\end{matrix}} \right] + \left[ {\begin{matrix}
		0\\
		{\mathscr U}(\Lambda)
		\end{matrix}} \right]\right)\Rightarrow {\mathscr U}(\Lambda)=\mathcal{U}^*(\Lambda).
	\end{align*}
	It follows that  $c'=\dim\, \left({\mathscr U}(\Lambda)\right) - \dim\, \left( \mathcal{U}^*(\Lambda)\right)=0$. Therefore, in this case, the  $MCF^3$-part of \textbf{MCF} is absent. \red{As a consequence}, if $N(s)$ of \textbf{KCF} is absent, then $MCF^3$ of \textbf{MCF} is absent \red{as well}. If $\nu>0$, from (\ref{nu}) we get
	\begin{align*}
	\nu&={\rm min}\left \{i\in \mathbb{N}^+\left|\left[ {\begin{matrix}
		{\mathcal{V}^*(\Lambda)}+{\mathcal{W}_{i-1}(\Lambda)}\\
		* 
		\end{matrix}} \right] + \left[ {\begin{matrix}
		0\\
		{\mathscr U}(\Lambda)
		\end{matrix}} \right]=\left[ {\begin{matrix}
		{\mathcal{V}^*(\Lambda)}+{{\mathcal{W}_{i}}(\Lambda)}\\
		* 
		\end{matrix}} \right] + \left[ {\begin{matrix}
		0\\
		{\mathscr U}(\Lambda)
		\end{matrix}} \right]\right.\right\}\\
	&={\rm min}\left \{i\in \mathbb{N}^+\left|{\mathcal{V}^*(\Lambda)}+{\mathcal{W}_{i-1}(\Lambda)}={\mathcal{V}^*(\Lambda)}+{{\mathcal{W}_{i}}(\Lambda)}\right.\right\}=\nu'+1.
	\end{align*}
	We have		 
	\begin{align*}
	c=\pi_0&=\dim\,\left ( {{\mathscr V}^*}(\Delta^{Impl})+{{ {\mathscr W}}_{ 1}} (\Delta^{Impl})\right) - \dim\, \left({ {\mathscr V}^*(\Delta^{Impl})}+{{ {\mathscr W}}_0(\Delta^{Impl})}  \right)\\
	&\mathop=\limits^{\rm Lemma \ \ref{subs relationex}}\dim\,\left(\left[ {\begin{matrix}
		{\mathcal{V}^*(\Lambda)}\\
		* 
		\end{matrix}} \right] + \left[ {\begin{matrix}
		0\\
		{\mathscr U}(\Lambda)
		\end{matrix}} \right]\right)-\dim\,\left(\left[ {\begin{matrix}
		{\mathcal{V}^*(\Lambda)}\\
		* 
		\end{matrix}} \right] + \left[ {\begin{matrix}
		0\\
		{\mathcal{U}(\Lambda)}
		\end{matrix}} \right]\right)\\
	&= \dim\, \left( {\mathscr U}(\Lambda)\right) - \dim\, \left(\mathcal{U}(\Lambda)\right)=c'.
	\end{align*}
	We also have for $i\in \mathbb{N}^+$,
	\begin{align*}
	{\pi _i} &=  \dim\,\left ({{\mathscr V}^*}(\Delta^{Impl})+{{ {\mathscr W}}_{i + 1}} (\Delta^{Impl})\right) - \dim\, \left({ {\mathscr V}^*(\Delta^{Impl})}+{{ {\mathscr W}}_i(\Delta^{Impl})} \right)\\
	&=\dim\,\left(\left[ {\begin{matrix}
		{\mathcal{V}^*(\Lambda)}+{{\mathcal{W}_{i}}(\Lambda)}\\
		* 
		\end{matrix}} \right] + \left[ {\begin{matrix}
		0\\
		{\mathscr U}(\Lambda)
		\end{matrix}} \right]\right)-\dim\,\left(\left[ {\begin{matrix}
		{\mathcal{V}^*(\Lambda)}+{\mathcal{W}_{i-1}(\Lambda)}\\
		* 
		\end{matrix}} \right] + \left[ {\begin{matrix}
		0\\
		{\mathscr U}(\Lambda)
		\end{matrix}} \right]\right)\\
	&= \dim\, \left({\mathcal{W}_{i }}(\Lambda) + {\mathcal{V}^*(\Lambda)}\right) - \dim\, \left(\mathcal{W}_{i-1}(\Lambda) + {\mathcal{V}^*}(\Lambda)\right)=\pi'_{i-1}.
	\end{align*}
	Now \red{substituting} $c=c'$, $\pi_i=\pi'_{i-1}$ and $\nu=\nu'+1$ into (\ref{sigma}),  we can rewrite equation (\ref{sigma}) as
	$$
	\left\lbrace 	\begin{array}{lll}
	{\sigma_j\!=\!0} \ \ {\rm for} \ \ 1\le j \le c-\pi_1= c'-\pi'_0=\delta,\\
	{\sigma_j\!=\!i} \ \ {\rm for} \ \ c'- \pi'_{i - 2} + 1\!=\!c- \pi _{i - 1} + 1\le j\le c- {\pi_i}\!=\!c'- {\pi'_{i-1}}, \  { i\!=\!2,...,\nu'+1}.
	\end{array}	\right. 
	$$
	\red{Replacing $i$ by $i-1$, we get}
	$$
	\begin{array}{*{20}{c}}
	{\sigma_j=i-1}&for&c'- \pi'_{i - 1} + 1\le j\le c'- {\pi'_{i}},&{ i=1,2,...,\nu'}.
	\end{array}	
	$$
	Finally, \red{compare} the above expression of $\sigma_j$ with \red{that for} $\sigma'_j$ of (\ref{sigma'}), it is not hard to see \red{that} $\sigma_j+1=\sigma'_j$ for $j= 1,\dots,c$.
	
	(iii) We only show that the invariant factors of $MCF^2$ of $\Lambda$ coincide with the invariant factors of the real Jordan pencil $J(s)$ of $\Delta^{Impl}$, then the \red{equalities}  $d=d'$, $\eta_1=\eta'_1,\cdots,\eta_d=\eta'_{d'}$ and $\lambda_{\rho_1}=\lambda_{\rho'_1},\dots,\lambda_{\rho_b}=\lambda_{\rho'_{b'}}$ are \red{immediately} satisfied. 
	\red{First}, let two subspaces $\mathscr X_2\subseteq\mathscr V^*(\Delta^{Impl})$ and $\mathscr Z_2\subseteq\mathcal V^*(\Lambda)$ be \red{such} that 
	$$
	\mathscr X_2\oplus\left( \mathscr{V^*}(\Delta^{Impl}) \cap \mathscr{W^*}(\Delta^{Impl})\right) =\mathscr{V^*}(\Delta^{Impl}) , \ \ \ \	\mathscr Z_2\oplus\left( \mathcal V^*(\Lambda) \cap \mathcal W^*(\Lambda)\right) =\mathcal V^*(\Lambda).
	$$
	The above construction \red{gives} $\Delta^{Impl}|\mathscr X_2\cong KCF^2$ and $\Lambda|\mathscr Z_2\cong MCF^2$, where $KCF^2$ corresponds to \red{the} Jordan pencil $J(s)$. Use Lemma \ref{subs relationex} \red{to conclude that} $$\mathscr X_2\oplus\left( \mathscr{V^*}(\Delta^{Impl}) \cap  \mathscr{W^*}(\Delta^{Impl})\right) =\mathscr{V^*}(\Delta^{Impl})$$ \red{implies}
	\begin{align*}
	&\mathscr X_2\oplus \left( \left({\left[ {\begin{matrix}
			{\mathcal{W}^*(\Lambda)}\\
			*
			\end{matrix}} \right] + \left[ {\begin{matrix}
			0\\
			{\mathscr U}(\Lambda)
			\end{matrix}} \right]}\right)  \cap \left(  {\left[ {\begin{matrix}
			{\mathcal{V}^*(\Lambda)}\\
			{{F^*}\mathcal{V}^*(\Lambda)}
			\end{matrix}} \right] + \left[ {\begin{matrix}
			0\\
			\mathcal{U}^*(\Lambda)
			\end{matrix}} \right]} \right) \right)\\&=\left({\left[ {\begin{matrix}
			{\mathcal{V}^*(\Lambda)}\\
			{{F^*}\mathcal{V}^*(\Lambda)}
			\end{matrix}} \right] + \left[ {\begin{matrix}
			0\\
			\mathcal{U}^*(\Lambda)
			\end{matrix}} \right]} \right)  \\
	&\Rightarrow\mathscr X_2\oplus\left({\left[ {\begin{matrix}
			{\mathcal{W}^*(\Lambda)}\cap{\mathcal{V}^*(\Lambda)}\\
			F'\left( {\mathcal{W}^*(\Lambda)}\cap{\mathcal{V}^*(\Lambda)}\right) 
			\end{matrix}} \right] +\left[ {\begin{matrix}
			0\\
			\mathcal{U}^*(\Lambda)
			\end{matrix}} \right]} \right)=\left({\left[ {\begin{matrix}
			{\mathcal{V}^*(\Lambda)}\\
			{{F^*}\mathcal{V}^*(\Lambda)}
			\end{matrix}} \right] + \left[ {\begin{matrix}
			0\\
			\mathcal{U}^*(\Lambda)
			\end{matrix}} \right]} \right),
	\end{align*}
	\red{where $F\in\mathbb{F}({\mathcal{V}^*(\Lambda)})$, $F'\in \mathbb{F}({\mathcal{W}^*(\Lambda)}\cap{\mathcal{V}^*(\Lambda)})$.}	Since $\mathscr Z_2\oplus\left( \mathcal V^*(\Lambda) \cap \mathcal W^*(\Lambda)\right) =\mathcal V^*(\Lambda)$, we have $\mathscr X_2=\left[ {\begin{matrix}
		\mathscr Z_2\\
		F''\mathscr Z_2
		\end{matrix}} \right] $, where $F''\in \mathbb F(\mathscr Z_2)$. Then, it follows that
	\begin{align*}
	\left. {\left[ {\begin{matrix}
			s I- A&-B\\
			-C&-D
			\end{matrix}} \right]} \right|_{\mathscr X_2}&=\left[ {\begin{matrix}
		{s I- {A}}&{ - {B}}\\
		{ - {C}}&{ - {D}}
		\end{matrix}} \right]\left[ {\begin{matrix}
		\mathscr Z_2\\
		F''\mathscr Z_2
		\end{matrix}} \right]=\left[ {\begin{matrix}
		\left( sI-(A+BF'')\right)\mathscr Z_2  \\
		(C+DF'')\mathscr Z_2  
		\end{matrix}} \right]
	\\&=\left[ {\begin{matrix}
		\left( sI-(A+BF'')\right)\mathscr Z_2  \\
		0  
		\end{matrix}} \right].
	\end{align*}
	Now	it is known from  Lemma 4.1 of \cite{morse1973structural} that $(A+BF'')|\mathscr Z_2 $ \red{does not dependent on the choice of} $F''$. Thus the invariant factors of $\left( sI-(A+BF'')\right)\mathscr Z_2$  coincide with the invariant factors of $MCF^2$ for $\Lambda$.  Finally, from \red{the} above equation, it is easy to see \red{that} the invariant factors of $J(s)$ in \textbf{KCF} of $\Delta$ coincide with \red{those} of $MCF^2$ of $\Lambda$.
\end{proof}
\subsection{Proof of Proposition \ref{Pro:M}}\label{Pf:Pro:M}
\begin{proof}
	(i) By Proposition \ref{Pro:inv-subspace}, $\mathscr M$ is an invariant subspace if and only if $H\mathscr M\subseteq E\mathscr M$. Therefore, $\mathscr M^*$ is the largest subspace such that $H\mathscr M^*\subseteq E\mathscr M^*$, then by Proposition \ref{Pro:VW}(ii), we have $\mathscr M^*={\mathscr V^*}$.
	
	(ii)  By Proposition \ref{Pro:inv-subspace}, for $\Delta|^{red}_{\mathscr M^*}=(E|^{red}_{\mathscr M^*},H|^{red}_{\mathscr M^*})$, the matrix $E|^{red}_{\mathscr M^*}$ is of full row rank. Thus from the explicitation procedure, it is straightforward to see that $\Lambda^*\in {\rm Expl}(\Delta|^{red}_{\mathscr M^*})$ is a control system without outputs.
	\red{ Note that, by the definitions of reduction and restriction, if two DAEs $\Delta\mathop\sim\limits^{ex}\tilde \Delta$, then $\Delta|^{red}_{\mathscr M^*}\mathop\sim\limits^{ex} \tilde \Delta|^{red}_{\tilde{\mathscr M}^*}$.}    Denote the four parts of the \textbf{KCF} of $\Delta$ as \red{$KCF^k$, $k=1,\dots,4$} and \red{the corresponding} matrix pencil of each part \red{is}:
	$$
	L(s) \ \  {\rm for}	\ \ KCF^1 , \ \ \  J(s) \ \  {\rm for}	\ \ KCF^2, \ \ \  N(s) \ \  {\rm for}	\ \ KCF^3, \ \ \ L^p(s) \ \  {\rm for}	\ \ KCF^4. \ \ \
	$$
\red{	By $\Delta \mathop\sim\limits^{ex} \mathbf{KCF}$, we have 
	\begin{align}\label{Eq:DeltaKCF}
	\Delta|^{red}_{\mathscr M^*}\mathop\sim\limits^{ex} \mathbf{KCF}|^{red}_{\tilde{\mathscr M}^*}=\left(KCF^1,KCF^2 \right).
	\end{align}	 
Moreover, it is clear that if two control systems $\Lambda \mathop\sim\limits^{M} \tilde \Lambda$, then $\Lambda|^{red}_{(\mathcal V^*,\mathcal U^*)}\mathop\sim\limits^{M}\tilde \Lambda|^{red}_{(\tilde {\mathcal V}^*,\tilde {\mathcal U}^*)}$. Since $\Lambda$ is always M-equivalent to its \textbf{MCF}, we have 
 \begin{align}\label{Eq:LambdaMCF}
 \Lambda|^{red}_{(\mathcal V^*,\mathcal U^*)}\mathop\sim\limits^{M}\mathbf{MCF}|^{red}_{(\tilde {\mathcal V}^*,\tilde {\mathcal U}^*)}=\left(MCF^1,MCF^2 \right).
 	\end{align}	
It is seen that $\Lambda|^{red}_{(\mathcal V^*,\mathcal U^*)}$ is a control system without outputs. From the one-to-one correspondence of the \textbf{KCF} and \textbf{MCF} discussed in Section \ref{Chap1sec5}, it is {straightforward} to see that $\left(MCF^1,MCF^2 \right)\in {\rm Expl}(KCF^1,KCF^2)$. Now combining the later result with the relations of (\ref{Eq:DeltaKCF}) and (\ref{Eq:LambdaMCF}), and using the results of Theorem \ref{main theorem}, we can deduce that $
\Lambda|^{red}_{(\mathcal V^*,\mathcal U^*)} \in {\rm Expl}(\Delta|^{red}_{\mathscr M^*})$.} Since $\Lambda^*\in {\rm Expl}(\Delta|^{red}_{\mathscr M^*})$, by Theorem \ref{main theorem}(ii) we have $\Lambda|^{red}_{(\mathcal V^*,\mathcal U^*)} \mathop  \sim \limits^{M} \Lambda^*$. Finally, since $\Lambda^*$ and $\Lambda|^{red}_{(\mathcal V^*,\mathcal U^*)}$ are two control systems without outputs,  their Morse equivalence reduces to their feedback equivalence (see Remark \ref{Rem:M-equi}).
\end{proof}
\subsection{Proof of Theorem \ref{in-equi}}\label{Pf:in-equi}
\begin{proof}
	$(i)\Leftrightarrow(ii)$: By Definition \ref{in-equivalent}, we have $\Delta\mathop  \sim \limits^{in} \tilde \Delta$ if and only if $\Delta|^{red}_{\mathscr M^*}\mathop  \sim \limits^{ex}\tilde \Delta|^{red}_{\mathscr M^*}$. Consider $\Lambda^*\in {\rm Expl}(\Delta|^{red}_{\mathscr M^*})$ and $\tilde \Lambda^*\in {\rm Expl}(\tilde \Delta|^{red}_{\tilde {\mathscr M}^*})$, \red{then by} Theorem \ref{main theorem}(ii), it follows that $\Delta|^{red}_{\mathscr M^*}\mathop  \sim \limits^{ex}\tilde \Delta|^{red}_{\mathscr M^*}$ if and only if $\Lambda^*\mathop  \sim \limits^{M} \tilde \Lambda^*$. By Proposition \ref{Pro:M}(ii), $\Lambda^*$ and $\tilde \Lambda^*$ are two control systems without outputs, 
	which implies that their Morse equivalence reduces to their feedback equivalence (see Remark \ref{Rem:M-equi}).
	
	$(ii)\Leftrightarrow(iii)$: We first prove that two DAEs $\Delta^*={\rm Impl}(\Lambda^*)$ and $\tilde \Delta^*={\rm Impl}(\tilde \Lambda^*)$ have isomorphic trajectories if and only if  $\Lambda^*$ and $\tilde {\Lambda}^*$ are feedback equivalent. \red{Let} $(z(t), u(t))$ and $(\tilde z(t), \tilde u(t))$ denote trajectories of $\Delta^*$ and $\tilde \Delta^*$, respectively.
	Suppose $\Lambda^*$ and $\tilde \Lambda^*$ are feedback equivalent, then there exist matrices $T_s\in Gl(n^*,\mathbb{R})$, $T_i\in Gl(m^*,\mathbb{R})$, $F\in \mathbb{R}^{{m^*}\times {n^*}}$ such that $\tilde A^*=T_s(A^*+B^*F){T_s^{-1}}$, $\tilde B^*=T_sB{T_i^{-1}}$. Since $\Lambda^*$ has no output, its implicitation (see Definition \ref{implicitation}) is 
	\begin{align*}
	\Delta^*:\left[ {\begin{matrix}
		I&0
		\end{matrix}} \right]\left[ {\begin{matrix}
		{\dot z}\\
		{\dot u}
		\end{matrix}} \right] = \left[ {\begin{matrix}
		{{A^ * }}&{{B^ * }}
		\end{matrix}} \right]\left[{\begin{matrix}
		{z}\\
		{u}
		\end{matrix}} \right].
	\end{align*}	
	For $\tilde \Lambda^*$,  its implicitation is
	\begin{align*}
	\tilde \Delta^*: \left[ {\begin{matrix}
		I&0
		\end{matrix}} \right]\left[ {\begin{matrix}
		{\dot {\tilde z}}\\
		{\dot {\tilde u}}
		\end{matrix}} \right] = \left[ {\begin{matrix}
		\tilde A^*&\tilde B^*
		\end{matrix}} \right]\left[ {\begin{matrix}
		{\tilde z}\\
		{\tilde u}
		\end{matrix}} \right]\Rightarrow	
	\left[ {\begin{matrix}
		I&0
		\end{matrix}} \right]\left[ {\begin{matrix}
		{\dot {\tilde z}}\\
		{\dot {\tilde u}}
		\end{matrix}} \right] = {T_s}\left[ {\begin{matrix}
		A^*&B^*
		\end{matrix}} \right]\left[ {\begin{matrix}
		{T_s^{ - 1}}&0\\
		{FT_s^{ - 1}}&{T_i^{ - 1}}
		\end{matrix}} \right]\left[{\begin{matrix}
		{\tilde z}\\
		{\tilde u}
		\end{matrix}} \right].
	\end{align*}
	It can be seen that any trajectory $(z(t), u(t))$ of $\Delta^*$ satisfying $z(0)=z^0$ and $u(0)=u^0$, is mapped via $T=\left[ {\begin{matrix}
		{T_s^{ - 1}}&0\\
		{FT_s^{ - 1}}&{T_i^{ - 1}}
		\end{matrix}} \right]^{-1}$ into a trajectory $(\tilde z(t), \tilde u(t))$ of 	$\tilde \Delta^*$ passing through $\left[ \begin{matrix}
	\tilde z^0\\
	\tilde u^0
	\end{matrix}\right]=T \left[ \begin{matrix}
	z^0\\
	u^0
	\end{matrix}\right]$. 
	
	Conversely, suppose that there exists an invertible matrix $T = \left[ {\begin{matrix}
		{{T_1}}&{{T_2}}\\
		{{T_3}}&{{T_4}}
		\end{matrix}} \right]$ such that $\left[ {\begin{matrix}
		{\tilde z\left( t \right)}\\
		{\tilde u\left( t \right)}
		\end{matrix}} \right] = \left[ {\begin{matrix}
		{{T_1}}&{{T_2}}\\
		{{T_3}}&{{T_4}}
		\end{matrix}} \right]\left[ {\begin{matrix}
		{z\left( t \right)}\\
		{u\left( t \right)}
		\end{matrix}} \right]$.  It follows that $(\tilde z(t),\tilde u(t))$, being a solution of $\tilde \Delta^*$, satisfies
	\begin{align*}
	\left[ {\begin{matrix}
		I&0
		\end{matrix}} \right]\left( {\begin{matrix}
		{\dot {\tilde z}\left( t \right)}\\
		{\dot {\tilde u}\left( t \right)}
		\end{matrix}} \right) = \left[ {\begin{matrix}
		\tilde A^*&\tilde B^*
		\end{matrix}} \right]\left( {\begin{matrix}
		{\tilde z\left( t \right)}\\
		{\tilde u\left( t \right)}
		\end{matrix}} \right),
	\end{align*}
	which implies 
	\begin{align*}
	\left[ {\begin{matrix}
		I&0
		\end{matrix}} \right]\left[ {\begin{matrix}
		{{T_1}}&{{T_2}}\\
		{{T_3}}&{{T_4}}
		\end{matrix}} \right]\left( {\begin{matrix}
		{\dot { z}\left( t \right)}\\
		{\dot { u}\left( t \right)}
		\end{matrix}} \right) = \left[ {\begin{matrix}
		\tilde A^*&\tilde B^*
		\end{matrix}} \right]\left[ {\begin{matrix}
		{{T_1}}&{{T_2}}\\
		{{T_3}}&{{T_4}}
		\end{matrix}} \right]\left( {\begin{matrix}
		{ z\left( t \right)}\\
		{ u\left( t \right)}
		\end{matrix}} \right).
	\end{align*}
	Since  $(z(t), u(t))$ satisfies $\dot z(t)=A^*z(t)+B^*u(t)$, it follows that 
	\begin{align}\label{Eq:prf_Thm2}
	{T_1}\dot z(t)+T_2\dot u(t)=(\tilde A^*T_1+\tilde B^*T_3)z(t)+(\tilde A^*T_2+\tilde B^*T_4)u(t)\Rightarrow \nonumber\\ 
	T_1(A^*z(t)+B^*u(t))+T_2\dot u(t)=(\tilde A^*T_1+\tilde B^*T_3)z(t)+(\tilde A^*T_2+\tilde B^*T_4)u(t).
	\end{align}
	Notice that equation (\ref{Eq:prf_Thm2}) is satisfied for any solution $(z(t),u(t))$ of $\Delta^*$. (a). Let $u(t)\equiv0$ and $(z(t,z^0),0)$ (where $z^0\ne0$) be a solution of $\Delta^*$ (obviously, such a solution always exists). By substituting this solution into (\ref{Eq:prf_Thm2}) and considering \red{it} for $t=0$, we have $T_1A^*z^0=(\tilde A^*T_1+\tilde B^*T_3)z^0$, where $z^0=z(0)$ \red{can be taken arbitrary}, which implies $A^*=T_1^{ - 1}(\tilde A^*+\tilde B^*(T_3{T_1^{-1}}))T_1$. (b).  \red{Fix $z(0)=z^0=0$ and set $u(t)=u^i(t)=\left[0,\dots,t,\dots,0 \right]^T $, where $t$ is in the $i$-th row. Evaluating at $t=0$, we have  $z(0)=0$, $u(0)=0$ and $\dot u^i(0)=\left[0,\dots,1,\dots,0 \right]^T$, and thus by (\ref{Eq:prf_Thm2}) we have $T_2\dot u^i(0)=0$. So taking controls, $u^1(t),\dots,u^{m^*}(t)$ of that form, we conclude that $T_2=0$.} Now it is easy to see from (\ref{Eq:prf_Thm2}) that $B^*=T_1^{ - 1}\tilde B^*T_4$. Thus $\Lambda^*$ and $\tilde {\Lambda}^*$ are feedback equivalent \red{(see Remark \ref{Rem:M-equi})} via $T_s=T_1$, $T_i=T^{-1}_4$ and $F=T_3{T_1^{-1}}$. Therefore, any trajectory of $\Delta^*$ is transformed via $T$ into a trajectory of $\tilde \Delta^*$ if and only if $ \Lambda^*$ and $\tilde \Lambda^*$ are feedback equivalent.
	
	Then by Theorem \ref{main theorem}(i), we have 
	$$
	\Delta|^{red}_{\mathscr M^*}\mathop  \sim \limits^{ex}\Delta^*={\rm Impl}(\Lambda^*) \ \  {\rm and} \ \ \tilde \Delta|^{red}_{\tilde {\mathscr M^*}}\mathop  \sim \limits^{ex}\tilde\Delta^*={\rm Impl}(\tilde \Lambda^*)
	$$ 
	(since $\Lambda^*\in {\rm Expl}(\Delta|^{red}_{\mathscr M^*})$ and $\tilde \Lambda^*\in {\rm Expl}(\tilde \Delta|^{red}_{\tilde {\mathscr M}^*})$). Moreover, by Remark \ref{ex-quitra}, there exist matrices $P\in Gl(n^*,\mathbb {R})$ and $\tilde P\in Gl(n^*,\mathbb {R})$ such that 
	any trajectory  of $\Delta|^{red}_{\mathscr M^*}$ is mapped via $P$ into the corresponding trajectory of $\Delta^*$ and any trajectory  of $\tilde \Delta|^{red}_{\tilde {\mathscr M^*}}$ is mapped via $\tilde P$ into the corresponding trajectory of $\tilde\Delta^*$. Now we can conclude that the linear and invertible map $S=PT\tilde P^{-1}$ \red{sends} any trajectory of $\Delta|^{red}_{\mathscr M^*}$ into the corresponding trajectory of $\tilde \Delta|^{red}_{\tilde {\mathscr M^*}}$ if and only if $\Lambda^*$ and $\tilde {\Lambda}^*$ are feedback equivalent. 
\end{proof}
\subsection{Proof of Proposition \ref{Pro:in-regular}}\label{Pf:Pro:in-regular}
\begin{proof}
	$(i)\Leftrightarrow(ii)$:  Consider a DAE $\Delta^*={\rm Impl}(\Lambda^*)$. We have $\Delta|^{red}_{\mathscr M^*}\mathop  \sim \limits^{ex} \Delta^* $ (implied by  $\Lambda^*\in {\rm Expl}(\Delta|^{red}_{\mathscr M^*})$ and Theorem \ref{main theorem}(i)), we get  $\Delta|^{red}_{\mathscr M^*}\mathop  \sim \limits^{ex} \Delta^*$. Actually, since $\Lambda^*$ is defined on $\mathscr M^*$, it follows from Definition \ref{in-equivalent} that $\Delta|^{red}_{\mathscr M^*}\mathop  \sim \limits^{in} \Delta^*={\rm Impl}(\Lambda^*) $. Thus by the equivalence of item (i) and (iii) of Theorem \ref{in-equi},  the solutions of $\Delta$ passing through $x^0\in \mathscr M^*$ \red{are mapped, via a certain linear isomorphism S,} into the solutions of $\Delta^*$, 
	which means \red{that} $\Delta$ is internally regular if and only if \red{$\Delta^*$} has only one solution \red{passing through} any initial point in $\mathscr M^*$. This is true if and only if the input of $\Lambda^*$ is absent, i.e., $\Delta^*$ is an ODE without free variables. Therefore, $\Delta$ is internally regular if and only if $\Lambda^*$ has no inputs.
	
	$(ii)\Leftrightarrow(iii)\Leftrightarrow(vi)$:
	From the proof of Proposition \ref{Pro:M}(ii), we can see that the input is absent in $\Lambda^*$ if and only if  $\Lambda^* =MCF^2$ of $\Lambda$, that is, $MCF^1$  is absent in the \textbf{MCF} of $\Lambda$. 
	
	$(i)\Leftrightarrow(iv)\Leftrightarrow(v)$: \red{Using} $\mathscr V^*=\mathscr M^*$ and the \textbf{KCF} of $\Delta$, it is straightforward to see this equivalence.
\end{proof}
\section{Conclusion}\label{sec:conclusions}
In this paper, we propose a procedure named explicitation for DAEs. The explicitation of a DAE is, simply speaking, attaching to the DAE a class of linear control systems defined up to a coordinates change, a feedback and an output injection. We prove that the invariant subspaces of the attached control systems have direct relations with the limits of the Wong sequences of the DAE. We  show that the Kronecker invariants of the DAE have direct relations with the Morse invariants of the attached control systems, and as a consequence, the Kronecker canonical form \textbf{KCF} of the DAE and the Morse canonical from \textbf{MCF} of control systems have a perfect correspondence. We also propose a notion named internal equivalence for DAEs and show that the internal equivalence is useful when analyzing the existence and uniqueness of solutions (internal regularity). 

\section*{Appendix} 
\textbf{Kronecker Canonical Form (KCF)} \cite{kronecker1890algebraische},\cite{gantmacher1959matrix}: For any matrix pencil $sE-H\in \mathbb{R}^{l\times n}[s]$, there exist matrices $Q\in Gl(l,\mathbb{R})$, $P\in Gl(n,\mathbb{R})$ and integers $\varepsilon_1,...,\varepsilon_a\in \mathbb{N},\rho_1,...,\rho_b\in \mathbb{N},\sigma_1,...,\sigma_c\in \mathbb{N}^{+},\eta_1,...,\eta_d \in \mathbb{N}$ with $a,b,c,d\in \mathbb{N}$ such that
\begin{align*}
&Q(sE-H)P^{-1}\\&={\rm diag}\left( 
L_{\varepsilon_1}(s),...,L_{\varepsilon_a}(s),J_{\rho_1}(s),...,J_{\rho_b}(s),N_{\sigma_1}(s),...,N_{\sigma_c}(s),L^p_{\eta_1}(s),...,L^p_{\eta_d}(s)
\right) ,
\end{align*}
where \red{(omitting, for simplicity, the index $i$ of $\varepsilon_i,\rho_i,\sigma_i,\eta_i$)} the bidiagonal pencils $L_{\varepsilon}(s)\in {\mathbb{R}^{\varepsilon  \times \left( {\varepsilon  + 1} \right)}}[s]$, the real Jordan pencils $J_{\rho}(s)\in {\mathbb{R}^{\rho  \times \rho }}[s]$, the nilpotent pencils $N_{\sigma}(s)\in {\mathbb{R}^{\sigma  \times \sigma }}[s]$ and the ``per-transpose'' pencils $L^p_{\eta}(s)\in {\mathbb{R}^{\eta  \times (\eta+1 )}}[s]$ have the following form: 
$$
\begin{small}
\begin{array}{lll}
{L_\varepsilon }\left( s \right) = \left[ {\begin{smallmatrix}
	s&{ - 1}&{}&{}\\
	{}& \ddots & \ddots &{}\\
	{}&{}&s&{ - 1}
	\end{smallmatrix}} \right],   \ \ \ {N_\sigma}\left( s \right) = \left[ \begin{smallmatrix}
{ - 1}&s&{}&{}\\
{}& \ddots & \ddots &{}\\
{}&{}& \ddots &s\\
{}&{}&{}&{ - 1}
\end{smallmatrix} \right], \ \ \ \  {L^p_\eta }\left( s \right) = \left[ \begin{smallmatrix}
{ - 1}&{}&{}\\
s& \ddots &{}\\ 
{}& \ddots &{ - 1}\\
{}&{}&s
\end{smallmatrix} \right],
\\ 	{J_\rho }\left( s \right)\! =\! \left[ \begin{smallmatrix}
{s - \lambda_\rho }&{ - 1}&{}&{}\\
{}& \ddots & \ddots &{}\\
{}&{}& \ddots &{ - 1}\\
{}&{}&{}&{s - \lambda_\rho }
\end{smallmatrix} \right] \ {\rm or} \ 
{J_\rho }(s)\! =\! \left[ {\begin{smallmatrix}
	{S - {\Lambda _\rho }}&{ - I}&{}&{}\\
	{}& \ddots & \ddots &{}\\
	{}&{}& \ddots &{ - I}\\
	{}&{}&{}&{S - {\Lambda _\rho }}
	\end{smallmatrix}} \right],  \ S - \Lambda _\rho\!  =\! \left[ \begin{smallmatrix}
{s-{\phi _\rho } }&{{-\varphi _\rho }}\\
{  {\varphi _\rho }}&{s-{\phi _\rho } }
\end{smallmatrix} \right],
\end{array}
\end{small}
$$
where $\lambda_{\rho}$, $\varphi _\rho$, $\phi _\rho\in\mathbb R$. The integers $\varepsilon_i$, $\rho_i$, $\sigma_i$, $\eta_i$ are \red{called, respectively,}  Kronecker column (minimal) indices, the degrees of the finite elementary divisors,  the degrees of the infinite elementary divisors, and Kronecker row (minimal) indices. In addition, $\lambda_{\rho}$ and $\varphi_{\rho}+i\phi_{\rho}$ are the corresponding  eigenvalues of $J(s)$. These indices  and eigenvalues are invariant under external equivalence of Definition \ref{ex-equivalence}.   

\textbf{Morse Canonical Form MCF} \cite{morse1973structural},\cite{molinari1978structural}: Any control system $\Lambda=(A,B,C,D)$ is Morse equivalent to the Morse canonical form  \textbf{MCF} shown below:
\begin{equation*}
\textbf{MCF}: \left\lbrace  
\begin{array}{lll}
MCF^1: \ \dot z^1 = A^1z^1+B^1u^1&\\
MCF^2: \ \dot z^2 = A^2z^2&\\
MCF^3: \ \dot z^3 = A^3z^3+B^3u^3,&y^3 = C^3z^3+D^3u^3\\
MCF^4: \ \dot z^4 = A^4z^4,&y^4 = C^4z^4.
\end{array}\right. 
\end{equation*}
If a control system $\Lambda =(A,B,C,D)$ is in the \textbf{MCF}, then the matrices $A,B,C,D$, together with all invariants are thus given by
$$
\left[ {\begin{matrix}
	{ A}&{ B}\\
	{ C}&{ D}
	\end{matrix}} \right] = \left[ {\begin{matrix}
	{{ A^1}}&0&0&0&\vline& {{ B^1}}&0\\
	0&{{ A^2}}&0&0&\vline& 0&0\\
	0&0&{{ A^3}}&0&\vline& 0&{{ B^3}}\\
	0&0&0&{{ A^4}}&\vline& 0&0\\
	\hline
	0&0&{{ C^3}}&0&\vline& 0&{{ D^3}}\\
	0&0&0&{{ C^4}}&\vline& 0&0
	\end{matrix}} \right],
$$
(i) with $A^1={\rm diag}\{ A^1_{\varepsilon'_1 },..., A^1_{\varepsilon'_{a'} }\}$,  $B^1={\rm diag}\{ B^1_{\varepsilon'_1 },..., B^1_{\varepsilon'_{a'} }\}$,  where (\mage{throughout we omit, for simplicity, the index $i$ of $\varepsilon'_i, \rho'_i, \sigma'_i, \eta'_i$})
\begin{align*}
A_{{\varepsilon'}}^1 = \left[ {\begin{matrix}
	0&I_{{\varepsilon'}-1}\\
	0&0
	\end{matrix}} \right] \in {\mathbb{R}^{{\varepsilon'}  \times {\varepsilon'} }},\ \ \  B_{\varepsilon'} ^1 = \left[ {\begin{matrix}
	0\\
	1
	\end{matrix}} \right]\in \mathbb{R}^{{\varepsilon'} \times 1 }, 
\end{align*}
The integers $\varepsilon'_1,...,\varepsilon'_{a'}\in \mathbb{N}$ are the controllability indices of $( A^1,B^1)$.

(ii) $A^{2}={\rm diag}\{ A^2_{\rho'_1 },..., A^1_{\rho'_{b'}}\}$, where $A^2_{\rho'}$ is given by 
$$
\begin{array}{lll}
A_{\rho'}^2  = \left[ \begin{smallmatrix}
{{\lambda_{\rho'}}}&{ 1}&{}&{}\\
{}& \ddots & \ddots &{}\\
{}&{}& \ddots &{ 1}\\
{}&{}&{}&\lambda_{\rho'}
\end{smallmatrix} \right] & {\rm or} \ \
A_{\rho'}^2  = \left[ \begin{smallmatrix}
{{\Lambda_{\rho'}}}&{ I}&{}&{}\\
{}& \ddots & \ddots &{}\\
{}&{}& \ddots &{ I}\\
{}&{}&{}&\Lambda_{\rho'}
\end{smallmatrix} \right],&
\Lambda _{\rho'} = \left[ \begin{smallmatrix}
{s-{\phi _{\rho'}} }&{{-\varphi _{\rho'} }}\\
{  {\varphi _{\rho'} }}&{s-{\phi _{\rho'}} }
\end{smallmatrix} \right],
\end{array}
$$
where $\lambda_{\rho'}, \varphi _{\rho'},  \phi _{\rho'}\in\mathbb R$.

(iii) The  4-tuple $(A^3, B^3, C^3, D^3)$ is controllable and observable (prime). That is,
\begin{align}\label{Eq:prime}
\left[ {\begin{matrix}
	A^3&B^3\\
	C^3&D^3
	\end{matrix}} \right] = \left[ {\begin{matrix}
	{{\hat  A^3}}&\vline& {\hat  B^3}&0\\
	\hline
	{\hat  C^3}&\vline& 0&0\\
	0&\vline& 0&{{I_\delta }}
	\end{matrix}} \right],
\end{align}
where $\left[ {\begin{matrix}
	{{\hat  A^3}}&{{\hat  B^3}}\\
	{{\hat  C^3}}&0
	\end{matrix}} \right]$ is square and invertible and $\delta={\rm rank\,}   D^3\in \mathbb N $, and the matrices 
$$
\hat  A^3={\rm diag}\{\hat  A^3_{\sigma'_{\delta+1} },...,\hat  A^3_{\sigma'_{c'} }\},  \ \hat  B^3={\rm diag}\{\hat  B^3_{{\sigma'_{\delta+1} }},...,\hat  B^3_{{\sigma'_{c'} }}\},  \  \hat  C^3={\rm diag}\{\hat  C^3_{{{\sigma'_{\delta+1} } }},...,\hat  C^3_{{{\sigma'_{c'} } }}\},  
$$ 
where
\begin{align*}
\begin{array}{lll}
\hat  A^3_{{{\sigma' } }} = \left[ {\begin{matrix}
	0&I_{\sigma'-1}\\
	0&0
	\end{matrix}} \right] \in {\mathbb{R}^{\sigma'  \times \sigma' }},& \hat  B^3_{{{\sigma' } }}= \left[ {\begin{matrix}
	0\\
	1
	\end{matrix}} \right]\in \mathbb{R}^{\sigma' \times 1 },& \hat  C^3_{{{\sigma' } }} = \left[ {\begin{matrix}
	1&0
	\end{matrix}} \right] \in {\mathbb{R}^{1 \times\sigma'}} .
\end{array}
\end{align*}
The integers $\sigma'_1=\dots=\sigma'_{\delta}=0$,  and $\sigma'_{\delta+1},...,\sigma'_{c'}\in \mathbb{N^+} $ are the controllability indices of the pair $(\hat  A^3,\hat  B^3)$ and they are equal to the observability indices of the pair $(\hat  C^3, \hat  A^3) $.

(iv)	$ A^4={\rm diag}\{ A^4_{\eta'_1 },...,A^4_{\eta'_{d'} }\}$, $ C^4={\rm diag}\{ C^4_{\eta'_1 },..., C^4_{\eta'_{d'} }\}$, 
where 
$$
\begin{matrix}
A^4_{\eta' } = \left[ {\begin{matrix}
	0&I_{\eta'-1}\\
	0&0
	\end{matrix}} \right] \in {\mathbb{R}^{\eta'  \times \eta'}}, \ \ \ \   C^4_{\eta' }= \left[
{\begin{matrix}
	1&0
	\end{matrix}} \right] \in \mathbb{R}^{1\times \eta' }.
\end{matrix}
$$
The integers $\eta'_1,...,\eta'_{d'}\in \mathbb{N}$ are the observability indices of the pair $( C^4, A^4)$.

 Clearly, the subsystem $MCF^2$ is in the real Jordan canonical form. For the remaining subsystems $MCF^k$, denote $\mu_i=\epsilon'_i$ if $k=1$, $\mu_i=\sigma'_i$ if $k=3$, and $\mu_i=\eta'_i$ if $k=4$. Then for $k=1,3,4$, the subsystem $MCF^k$ consists of $a', c', d'$, subsystems (indexed by $i$) for which either $\mu_i\ge 1$ and then they are given by
	$$
	\dot z^{k,j}_i=\left\lbrace \begin{array}{lll}
	z^{k,j+1}_i, &1\le j\le \mu_i-1,& {\rm for} \ \ k=1,3,4,\ \ \ \  y^k_i=z^{k,1}_i, \ {\rm for} \ \ k=3,4,\\
	u^k_i, &j= \mu_i,& {\rm for} \ \ k=1,3,\\
	0, &j=\mu_i,& {\rm for} \ \ k=4,
	\end{array}\right. 
	$$
	or $\mu_i=0$ (notice that we allow for the Morse indices to be equal to zero) in which case the input $u^1$ contains components $u^1_i$ that do not affect the system at all (if $\epsilon'_i=0$), the output $y^4$ contains trivial components $y^4_i=0$ (if $\eta'_i=0$) and the output $y^3$ contains $\delta={\rm rank\,} D^3$ static relations $y^3_i=u^3_i$ (if $\sigma'_i=0$).

We call the integers $\varepsilon'_i$, $\rho'_i$, $\sigma'_i$, $\eta'_i$  the Morse indices of control systems, together with $a',b',c',d'$, $\delta$ and $\lambda_{\rho'}\in \mathbb R$ or \mage{$\lambda_{\rho'}=\varphi_{\rho'}+j\phi_{\rho'}\in \mathbb C$, with $\rho'$ taking  all values  $\rho'_i$,  where $j=\sqrt{-1}$}, they are all invariant under Morse equivalence. 

\bibliographystyle{siamplain}
\bibliography{siamref}
\end{document}